\documentclass[%draft, %for not including graphics
BCOR2pt, %space hidden by book binding
captions=nooneline, %do not distinguish between one or more lines in captions
bibliography=totoc, %add bibliography to table of contents
numbers=noenddot, %removes points for special parts (e.g. appendix)
parskip=half, %leaves at least 1em of a row free before a new paragraph, also sets parindent to 0
headings=normal, %for smaller headings
enabledeprecatedfontcommands,
abstracton %for abstract
]{scrartcl} %automatically a4paper, 11pt, oneside, titlepage
\usepackage{etex}
%document related info

%packages
\usepackage{wallpaper}
\usepackage{array}
\usepackage{color} % colors
\usepackage[T1]{fontenc} %for correct hyphenation and T1 encoding
\usepackage[USenglish]{babel} %for US english line breaks
\usepackage{graphicx} %for including figures
\usepackage{remreset} %for resetting numberings, e.g. the footnote number
\usepackage[nouppercase]{scrpage2} %package for page style with not only upper letters in the head
\usepackage{amsmath} %sophisticated mathematical formulas with amstex (includes \text{})
\usepackage{amssymb} %sophisticated mathematical symbols with amstex (includes \mathbb{})
\usepackage[misc]{ifsym}
\usepackage[
thmmarks, %for automatical placement of endmarks
amsmath %for compatibility with hyperref
]{ntheorem} %for sophisticated theorem environments and proofs
\usepackage{bm} %for bold math symbols
\usepackage{mathabx}
\usepackage{todonotes}
\usepackage{bbm} %only for indicator functions
\usepackage{enumitem} %for automatic numbering of new enumerate environments
\usepackage[hang]{subfigure} %for subfigures
\usepackage{wrapfig} %for wrapping text around figures/tables
\usepackage{tabularx} %for special table environment (tabularx-table)
\usepackage{dcolumn} %for aligning tabular entries with respect to a certain char, e.g. decimal point
\usepackage{booktabs} %for table layout
\usepackage{listings} %for including programming code
\usepackage{psfrag} %for editing text in (e)ps files
%\usepackage[
%dvips, %necessary for breakurl
%setpagesize=false, %necessary in order to not change text-/paperformat for the document
\usepackage{rotating}
\usepackage{multirow}
%pdfborder={0 0 0}, %removes border around links
%pdfpagemode=UseOutlines,hypertex %top level bookmarks are visible
%]{hyperref} %all links stay black and are thus invisible
%\usepackage{breakurl} %for correct line breaking of URLs
%\urlstyle{same} %for keeping font for urls
%\usepackage{apalike}
\usepackage[olditem,oldenum]{paralist}%listen mit weniger zeilenabstand
\usepackage{graphicx}  
\usepackage{epstopdf}
\usepackage{natbib}
\usepackage{placeins}
\usepackage{tikz}
\usetikzlibrary{matrix}
\usetikzlibrary{trees}
%settings
\usepackage {subfigure}
\setlength{\subfigbottomskip}{3cm}
\setlength{\subfigcapskip}{0.5cm}
\setlength{\subfigcaptopadj}{3cm}
\setlength{\subfigtopskip}{0.5cm}
\usepackage{algorithm}
\usepackage[noend]{algpseudocode}
\definecolor{bluecolor}{RGB}{79,129,189}

\makeatletter
\newcommand\mytoday{\number\year-\ifcase\month\or 01\or 02\or 03\or 04\or 05\or 06\or 07\or 08\or 09\or 10\or 11\or 12\fi-\ifcase\day\or 01\or 02\or 03\or 04\or 05\or 06\or 07\or 08\or 09\or 10\or 11\or 12\or 13\or 14\or 15\or 16\or 17\or 18\or 19\or 20\or 21\or 22\or 23\or 24\or 25\or 26\or 27\or 28\or 29\or 30\or 31\fi} %see LaTeX book pg. 348
\makeatother
\setkomafont{sectioning}{\normalcolor\bfseries} %set headings as in standard class
\pagestyle{scrheadings} %activates pagestyle from scrpage2
\clearscrheadfoot %removes old settings for head and foot
\chead[]{\headmark} %text should be centered in the head and for plain style, it should be omitted
\cfoot[\pagemark]{\pagemark} %page number centered (and with space inserted, exactly the same as the heading has from the headsepline) in the foot, even in plain style
\automark[section]{subsection} %automatically creates head text based on the subsections and, if no subsection there, sections
%\setkomafont{pagehead}{\normalfont} %remove slshape setting from the text in the page head
\setcounter{secnumdepth}{3} %add numbering down to subsubsection
 %removes ":" for caption and leaves one blank
\setkomafont{captionlabel}{\bfseries} %uses \bfseries for the caption label
\renewcommand*{\subfigbottomskip}{-6pt} %space between subcaptions and caption
 %vertically centers entries of a tabularx-table
\newcolumntype{d}[2]{D{.}{.}{#1.#2}} %aligns the entry at "." of a column if "d" is used as column type where the arguments specify the number of digits to the left and right for which space is kept in a column (always set to the maximal numbers that appear in the whole table)
\setlength{\heavyrulewidth}{0.4pt} %sets width of toprule and bottomrule in booktabs-tables
\setlength{\lightrulewidth}{0.4pt} %sets width of midrule in booktabs-tables
\setlength{\cmidrulewidth}{0.4pt} %sets width of cmidrule in booktabs-tables
\belowbottomsep=-4pt %lifts caption for tables
\newcommand*{\abstractnoindent}{} %define abstract such that it has no indent
\let\abstractnoindent\abstract
\renewcommand*{\abstract}{\let\quotation\quote\let\endquotation\endquote
\abstractnoindent}
%\setlength{\bibhang}{0pt} %removes indent for bibliography
%\setlength{\bibsep}{11pt} %sets vspace between bibitems
 %new name for list of listings
\lstloadlanguages{[GNU]C++} %loading the programming languages
\lstset{
language=[GNU]C++, %setting the programming language
numbers=left, %display line numbers on the left side
numberstyle=\tiny, %use tiny line numbers
numbersep=8pt, %space between line numbers and code
tabsize=2, %sizes of tabs
showstringspaces=false, % do not replace spaces in strings by a certain character
basicstyle=\footnotesize\ttfamily, %basic code style
keywordstyle=, %do not display keywords in \bfseries
commentstyle=, %do not display comments in \itshape
breaklines=true, %allow to break lines
captionpos=b %positioning of the caption below
}
\makeatletter
\renewcommand{\p@enumii}[1]{\theenumi(#1)}
\makeatother
\makeatletter
%\hyper@normalise\burl@growmif\% %add % to the character symbols at which breaks for breakurl are allowed

\makeatother

%environments (all settings affect the following theorem definitions)
\theoremstyle{break} %adds a newline after the heading of a theorem environment
\theoremheaderfont{\bfseries}
\theorembodyfont{\itshape}
\theoremseparator{}
\newtheorem{definition}{Definition}[section] %number all environments in a sequence (for every section) (order does not play a role)
\newtheorem{lemma}[definition]{Lemma}

\newtheorem{proposition}[definition]{Proposition}
 %has its one numbering
\newtheorem{remark}[definition]{Remark}

\theoremstyle{nonumberbreak} %adds a newline after the heading of a theorem environment which is not numbered
\theoremsymbol{$\Box$}
\newtheorem{proof}{Proof}

%commands

\newcommand*{\IR}{\mathbb{R}}

\newcommand*{\IN}{\mathbb{N}}

%\addtolength{\subfigbottomskip}{1cm}
\makeatletter
\def\BState{\State\hskip-\ALG@thistlm}
\makeatother

%commands Matthias

\newcommand*{\vol}{v}

\newcommand{\E}{{\mathbb{E}}}
\newcommand*{\parmu}{\boldsymbol{\mu}}

\newcommand*{\new}[1]{{\color{black}{#1}}}% command for highliting the chanages

%%% Max Gass packages and commands
\renewcommand*\d{\mathop{}\!\mathrm{d}}

\usepackage{epstopdf}
\usepackage{epsfig}

\bibliographystyle{chicago}

%\includeonly{calibration,appendix}

%document
\begin{document}
\title{\textbf{\Huge{Calibration to American Options: Numerical Investigation of the
de--Ameri\-cani\-za\-tion Method}}}

\bigskip
%\\[1cm]
\author{Olena Burkovska, Kathrin Glau, Maximilian Ga{\ss},\\Mirco Mahlstedt,  Wim Schoutens and Barbara Wohlmuth }

\maketitle

\begin{abstract}
American options are the reference instruments for the model calibration of a large and important class of single stocks. For this task, a fast and accurate pricing algorithm is indispensable. The literature mainly discusses pricing methods for American options that are based on Monte Carlo, tree and partial differential equation methods. We present an alternative approach that has become popular under the name \textit{de--Ameri\-cani\-za\-tion} in the financial industry. The method is easy to implement and enjoys fast run-times. Since it is based on ad hoc simplifications, however, theoretical results guaranteeing reliability are not available. To quantify the resulting \textit{methodological risk}, we empirically test the performance of the de--Ameri\-cani\-za\-tion method for calibration. We classify the scenarios in which de--Ameri\-cani\-za\-tion performs very well. However, we also identify the cases where de--Ameri\-cani\-za\-tion oversimplifies and can result in large errors.
	\end{abstract}
\minisec{Keywords}
	American options, calibration, binomial tree model, CEV model, Heston
model, L\'evy models, model reduction, variational inequalities\\
\section{Introduction}

The most frequently traded single stock options are of American type. In general, there exists a variety of (semi-)closed pricing formulas for European options. However, for American options, there hardly exist any closed pricing formulas, and the pricing under advanced models rely on computationally expensive numerical techniques such as the Monte Carlo simulation or partial (integro) differential methods.

The motivation behind the de--Ameri\-cani\-za\-tion methodology is to reduce the complexity as well as to lower the computational cost. In general, it is much faster to calibrate European options than American options. 

In the financial industry, the so-called de--Ameri\-cani\-za\-tion approach has become market standard: American option prices are transferred into European prices before the calibration process itself is started. This is usually done by applying a binomial tree. The method is also briefly mentioned by \cite{carr2010stock}, who describe how their implied volatility data, stemming from the provider OptionMetrics, is obtained by applying exactly this de-Americanization scheme.
Figure \ref{DeAm} illustrates the scheme of the de--Ameri\-cani\-za\-tion methodology.\\ 
%\new{Since the calibration routine is computationally costly and
%require many evaluations of the objective for different instances of the
%parameters, we also present a model reduction technique, as reduced basis
%method, to tackle this difficulty. It allows to reduce significantly the
%computational cost of calibrating on American or European options. Different variants of the reduced basis method has been already applied to option pricing~\cite{HSW12b,burkovska,red_bs15} as well as to calibration on option prices, e.g,~\cite{pironneau,pironneau09}. Here we present an application of the reduced basis method~\cite{HSW12b,burkovska} for an efficient calibration process on option prices. Numerical experiments on American option data illustrates the attractiveness of the method in application to option pricing. }
%\begin{figure}[h!]
%\begin{center}
%\includegraphics[width=0.9\textwidth]{./fig/DeAm5.png}
%\caption{De--Ameri\-cani\-za\-tion scheme}
%\label{DeAm}
%\end{center}
%\end{figure}
\begin{figure}[h!]
\begin{center}
\begin{tikzpicture}
\node[text=white, fill=bluecolor, align=center, text width=4.5cm,font=\bf] (A) at (0,0) {Market Data:\\ American \\ Option Prices};
\node[text=white, fill=bluecolor, align=center, text width=4.5cm,font=\bf] (B) at (9.8,0) {de-Americanized \\ European \\ Option Prices};
\node[text=white, fill=bluecolor, align=center, text width=4.5cm,font=\bf] (C) at (0,-4) {Calibrated \\ Model Parameters};
\node[text=white, fill=bluecolor, align=center, text width=4.5cm,font=\bf] (D) at (9.8,-4) {Calibrated \\ Model Parameters};
\node[text=white, fill=gray, align=center, text width=1.8cm,font=\bf] (E) at (4.7,-0.65) {Binomial\\ Tree};
\node[text=black, align=left,font=\bf] (F) at (4.7,0.3) {Simplification};
\node[text=black, align=left,font=\bf] (G) at (1.25,-2) {Calibration};
\node[text=black, align=left,font=\bf] (G) at (11.05,-2) {Calibration};
\draw[->, bluecolor, line width=2mm] (B) to (D) ;
\draw[->, bluecolor, line width=2mm] (A) to (B);
\draw[->, bluecolor, line width=2mm] (A) to (C);

\end{tikzpicture}
\end{center}
\caption{De--Ameri\-cani\-za\-tion scheme: American option
prices are transferred into European prices before the calibration process itself is started. We investigate the effects of de--Ameri\-cani\-za\-tion by comparing the results to directly calibrating American options.}
\label{DeAm}
\end{figure}
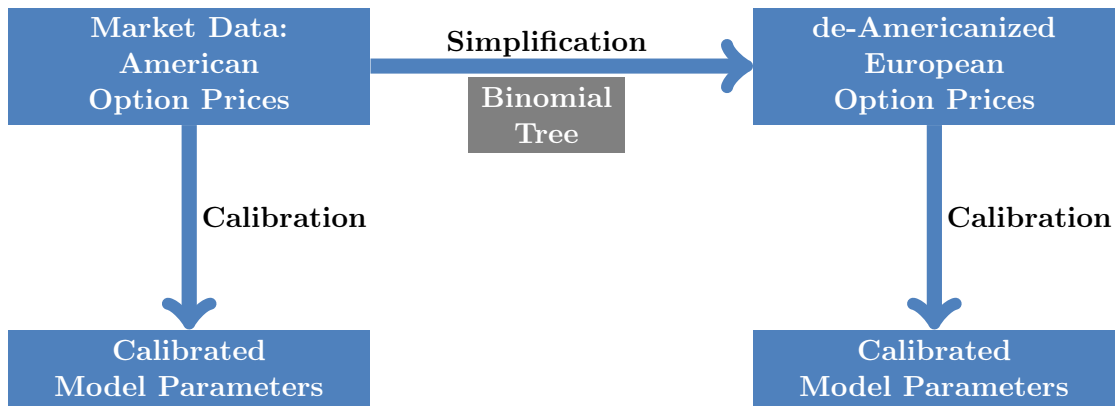\\
%In addition to the usage by practitioners, there is also from theoretical side a connection between American and European put options. Teichmann (2016) \todo{Reference Teichmann 2016, Talk Oberseminar}  states that the value function of an European option coincides with the value function of an American option on the continuation region.  Christensen, Lenga, Kallsen 2015 \todo{reference Christensen, Lenga Kallsen 2015. Mentioned in Teichmann Talk} try to find the cheapest European option dominating a given American option.\\
The de--Ameri\-cani\-za\-tion methodology enjoys three attractive features. It delivers fast run-times, is easy to implement and can flexibly be integrated into the pricing and calibration toolbox at hand. One downside is that no theoretical error control is available. Therefore, it is important to empirically investigate the accuracy, the performance and the resulting methodological risk of the method. In order to conduct a thourough investigation of these factors, we consider prominent models and identify relevant scenarios in which to perform extensive numerical tests. We explore the CEV model as an example of a local volatility model, the Heston model as a stochastic volatility model and the Merton model as a jump diffusion model. For all of these models, we implemented finite element solvers as benchmark method for pricing American options. The following questions serve as guidelines to specify decisive parameter settings within our studies.
%In this paper we quantify numerically the De--Ameri\-cani\-za\-tion error and explore settings in which the application of the binomial tree to translate American into pseudo-European
%options may lead to pitfalls.
%Our research goal is a numerical investigation of the De--Ameri\-cani\-za\-tion methodology. We want to exploit what kind of errors are made by the De--Ameri\-cani\-za\-tion method. We go beyond the Black-Scholes model and explore three different kind of models: The CEV model as local volatility model, the Heston model as stochastic volatility model and the Merton jump diffusion model as jump model. For all these models numerical techniques are available to price American options. In particular, we are interested in the following aspects.

\begin{enumerate}\label{Key_Questions}
\item Since American and European puts on non-dividend-paying underlyings coincide for zero interest rates, we analyze in particular the methodology for different interest rates.
\item Intuitively, with higher maturities, the early exercise feature of American options becomes more valuable and American and European option prices differ more significantly. Therefore, we investigate the following question: Does the accuracy of the de--Ameri\-cani\-za\-tion methodology depend on the maturity and do  de--Ameri\-cani\-za\-tion errors increase with increasing maturities?
\item In-the-money and out-of-the-money options play different roles. First, out-of-the-money options are preferred by practitioners for calibration since they are more liquidly traded, see for instance \cite{carr2010stock}. Second, in-the-money options are more likely to be exercised. How does the de--Ameri\-cani\-za\-tion methodology perform for out-of-the-money options and for in-the-money options?
\item The difference between American and European options is model-dependent. Intuitively, (higher) jump intensities lead to higher values of early exercise features. How does the de--Ameri\-cani\-za\-tion methodology perform for continuous models (CEV model and Heston model)? How does it perform for different jump intensities (Merton model)?
\end{enumerate}
%First of all we are interested in the question whether the model independently applied binomial tree works for continuous models (CEV and Heston) as well as for models allowing jumps (Merton). Hence American and European puts on non-dividend paying underlyings coincide for zero interest rates, we particularly analyze the method in different interest environments and additionally, if the method works more accurately for options with higher or shorter maturities. \\

%Our expectations\footnote{Do we need our expectations here?} are summarized in the following list:
%\begin{itemize}
%\item De--Ameri\-cani\-za\-tion has a stronger effect with increasing maturity
%\item Whereas we expect small differences for the CEV model, we expect more significant differences for the Heston model and assume, that the possibility of jumps may strengthen the gap between European and American style options.
%\end{itemize}
Our investigation is organized as follows. First, we introduce the de--Ameri\-cani\-za\-tion methodology in Section 2. Then we briefly describe in Section 3 the models and the benchmark pricing methodology. Section 4 presents the numerical results:
The accuracy of the calibration procedure obviously hinges on the accuracy of the underlying pricing routine. We therefore first specify the de--Ameri\-cani\-za\-tion pricing routine and investigate its accuracy. Afterwards, we present the results of calibration to both synthetic data and market data. To conclude the numerical study, we present the effects of different calibration results on the pricing of exotic options. We summarize our findings in Section 5.

\section{De--Ameri\-cani\-za\-tion methodology}\label{Sec_2}
%The De--Ameri\-cani\-za\-tion idea is e.g mentioned in \cite{carr2010stock}.
In this section, we give a precise and detailed description of the methodology. 
The de--Ameri\-cani\-za\-tion methodology is used to fit models to market data. The core idea of de--Ameri\-cani\-za\-tion is to transfer the available American option data into pseudo-European option prices prior to calibration. This significantly reduces the computational time as well as the complexity of the required pricing technique. Basically, de--Ameri\-cani\-za\-tion can be split into three parts. The first part consists in collecting the available market data. The currently observable price of the underlying $S_0$, interest rate $r$ and the available American option prices are collected. In the following, we will denote the American option price of the $i$-th observed option by $V_A^i$. We interpret the market data as the true option prices, thus we assume that the observed market prices $V_A^i$ can be interpreted as $V_A^i=\sup_{t\in[0,T_i]}E[e^{-rt} \widetilde{\mathcal{H}}_i(S_t)|\mathcal{F}_0],\quad i=1,\ldots,N$, where $\widetilde{\mathcal{H}}_i$ is the $i$-th payoff function, $T_i$ the maturity of the $i$-th option, and the expectations are taken under a risk-neutral measure, $\mathcal{F}$ is the natural filtration, and $N$ denotes the total number of options. Up to this point, no approximation has been used. 

The second step is the application of the binomial tree to create pseudo-European -- \textit{so called} de-Americanized -- prices based on the observed American market data. In this step, we look at each American option individually and find the price of the corresponding European option with the same strike and maturity. This European option is found by fitting a binomial tree to the American option. The binomial tree was introduced by \cite{crr_tree} as follows. Starting at $S_0$, at each time step and at each node, the underlying can either go up by a factor of $u$ or down by a factor of $\frac{1}{u}$ and
 the risk-neutral probability of an upward movement is given by 
\begin{equation}\label{risk_neutral_probability}
p=\frac{e^{r\Delta t}-\frac{1}{u}}{u-\frac{1}{u}}.
\end{equation} 
Once the tree is set up, options can be valuated by going backwards from each final node. Thus, path-dependent options can be evaluated easily. Since for each option $i$ the American option price $V_A^i$ is known, as well as $S_0$ and $r$, the only unknown parameter of the tree is the upward factor $u$. At this step, the upward factor $u_i^{*}$ is determined such that the price of the American option in the binomial tree matches the observed market price. Thus, denoting $\{0:\Delta t:T_i\}=\{0,\Delta t,2\Delta t,\ldots,T_i\}$, we have $\sup_{t\in\{0:\Delta t:T_i\}} E[e^{-rt}\widetilde{\mathcal{H}}_i(S_t^{u_i^{*}})|\mathcal{F}_0]=V_A^i$, where $S_t^{u_i^{*}}$ denotes the underlying process described by a binomial tree with upward factor $u_i^{*}$. The early exercise feature of American options is reflected in the fact that the the supremum is taken over all discrete time steps. A detailed description of pricing American options in a binomial tree model is given in \cite{van2006binomial}. Once $S_t^{u_i^{*}}$ is determined, the corresponding European option with the same strike and maturity as the American option is specified, $V_E^i=E[e^{-rT_i}\widetilde{\mathcal{H}}_i(S_{T_i}^{u_i^{*}})|\mathcal{F}_0]$. Note that fixing $u_i^{*}$ also implicitly determines the implied volatility.

Then, for each American option $V_A^i$, a corresponding European option $V_E^i$ has been found, and the actual model calibration can start. 
%At this point, financial institutions abstract the real market and assume, that a specific model describes the underlying best. 
The goal is to fit a model $M$, depending on parameters $\mu \in\mathbb{R}^d$, where $d$ denotes the number of parameters in the model, to the European option prices $V_E^i,\ i=1,\ldots,N$. Denote by $S_{T_i}^{M(\mu)}$ the underlying process in model $M$ with parameters $\mu\in\mathbb{R}^d$. In the calibration, the parameter vector $\mu$ is determined by minimizing the objective function of the calibration. Algorithm \ref{Algorithm_DeAm} summarizes the de--Ameri\-cani\-za\-tion methodology in detail.
\begin{algorithm}
\caption{De--Ameri\-cani\-za\-tion methodology}\label{Algorithm_DeAm}
\begin{algorithmic}[1]
\Procedure{Collection of Observable Data}{}
\State $S_0$, $r$, 
\State $V_A^i=\sup_{t\in[0,T_i]}E[e^{-rt} \widetilde{\mathcal{H}}_i(S_t)|\mathcal{F}_0],\quad i=1,\ldots,N$
\EndProcedure
\Procedure{Application of the binomial tree to each option individually}{}
\State $\textbf{for}\ i=1:N$
\State $\quad\quad\text{Find } u_i^{*} \text{ such that} $
\State $\quad\quad\quad\quad\sup_{t\in\{0:\Delta t:T_i\}} E[e^{-rt}\widetilde{\mathcal{H}}_i(S_t^{u_i^{*}})|\mathcal{F}_0]=V_A^i$%=\sup_{t\in[0,T_i]}E[e^{-rt} g_i(S_t)|\mathcal{F}_0]
\State $\quad\quad\text{Derive the corresponding European option price with } u_i^{*}$
\State $\quad\quad\quad\quad V_E^i=E[e^{-rT_i}\widetilde{\mathcal{H}}_i(S_{T_i}^{u_i^{*}})|\mathcal{F}_0]$
\State \textbf{end}
\EndProcedure
\Procedure{Calibration to European options}{}
\State Find $\mu$ such that the differences
\State $\quad\quad E[e^{-rT_i}\widetilde{\mathcal{H}}_i(S_{T_i}^{M(\mu)})]-V_E^i,\quad i=1,\ldots,N$
\State are minimized according to the objective function
\EndProcedure
\end{algorithmic}
\end{algorithm}

%At this step the question occurs whether the factor $u_i^{*}$ in the De--Ameri\-cani\-za\-tion methodology described in Algorithm \ref{Algorithm_DeAm} can be determined uniquely.
Regarding the uniqueness of the factor $u_i^{*}$ in the De--Ameri\-cani\-za\-tion methodology described in Algorithm \ref{Algorithm_DeAm}, we will first investigate the case of a European put option. Therefore, we interpret the risk-neutral probability in \eqref{risk_neutral_probability} as function of $u$, $p(u)=\frac{ue^{r\Delta t}-1}{u^2-1}$. At each node in the binomial tree we have a two-point distribution, that we call Bernoulli distribution \(X \sim QB(u)\), where the value $u$ is taken with probability $p(u)$ and the value $\frac{1}{u}$ is taken with probability $(1-p(u))$.

\begin{proposition}\label{Proposition_Convex_Order}
For \(i = 1, ...., n\), let \(X_i \sim QB(u)\) and \(Y_i \sim QB(u')\). If \(u \leq u'\), and $u$, $u'$ satisfy the conditions
\begin{enumerate}
\item $u,u'\ge e^{r\Delta t}+ \sqrt{e^{2r\Delta t}-1}$, and 
\item $u,u'\le\frac{(-e^{r\Delta t}k-1)-\sqrt{(e^{r\Delta t}k+1)^2-4e^{r\Delta t}k}}{2k}=\frac{1}{k}$ or $u,u'\ge\frac{(-e^{r\Delta t}k-1)+\sqrt{(e^{r\Delta t}k+1)^2-4e^{r\Delta t}k}}{2k}=\frac{e^{r\Delta t}}{k}$,
\end{enumerate}
then for any \(K \in \mathbb{R}\)
\begin{align*}
\E\left[\left(K - \prod_{i = 1}^n X_i\right)^+\right] \leq \E\left[\left(K - \prod_{i = 1}^n Y_i\right)^+\right].
\end{align*}
\end{proposition}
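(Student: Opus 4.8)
The plan is to read the asserted inequality as a statement of \emph{convex order} between the two products and to strip off one factor at a time. Write $P_n:=\prod_{i=1}^n X_i$ and $P_n':=\prod_{i=1}^n Y_i$, with $X_1,\dots,X_n$ independent copies of $QB(u)$ and $Y_1,\dots,Y_n$ independent copies of $QB(u')$ (the independence is the one built into the binomial tree). Since $x\mapsto(K-x)^+$ is convex for every $K\in\IR$, it suffices to show $P_n\le_{\mathrm{cx}}P_n'$, i.e.\ $\E[\phi(P_n)]\le\E[\phi(P_n')]$ for all convex $\phi$. The first thing I would record is the normalisation: a single factor $X\sim QB(u)$ has $\E[X]=u\,p(u)+u^{-1}\bigl(1-p(u)\bigr)=e^{r\Delta t}$ for every admissible $u$ — exactly the martingale property behind \eqref{risk_neutral_probability} — so $\E[P_n]=\E[P_n']=e^{rn\Delta t}$ and only the dispersion of the two products can differ.

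The reduction to a single step uses the standard fact that convex order is preserved under multiplication by an independent nonnegative random variable: if $Z\ge0$ is independent of $U,V$ and $U\le_{\mathrm{cx}}V$, then for convex $\phi$ and each fixed $z\ge0$ the map $x\mapsto\phi(xz)$ is convex, so $\E[\phi(UZ)\mid Z]\le\E[\phi(VZ)\mid Z]$ and, integrating, $UZ\le_{\mathrm{cx}}VZ$. Applied in $P_n=X_1\cdot(X_2\cdots X_n)$ this gives the induction step
\[
P_n=X_1\prod_{i=2}^{n}X_i\ \le_{\mathrm{cx}}\ X_1\prod_{i=2}^{n}Y_i\ \le_{\mathrm{cx}}\ Y_1\prod_{i=2}^{n}Y_i=P_n',
\]
the first inequality being the induction hypothesis for $n-1$ factors multiplied by the independent nonnegative $X_1$, the second being the one-step claim $QB(u)\le_{\mathrm{cx}}QB(u')$ multiplied by the independent nonnegative $\prod_{i=2}^n Y_i$. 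For the base case, $X\sim QB(u)$ and $Y\sim QB(u')$ are two-point laws with the common mean $e^{r\Delta t}$, supported on $\{u^{-1},u\}$ and $\{u'^{-1},u'\}$; since $1\le u\le u'$ the supports are nested, $\{u^{-1},u\}\subseteq[u'^{-1},u']$, with $e^{r\Delta t}$ interior to both. I would exhibit the explicit martingale coupling: conditionally on $X=u$ put $Y=u'$ with probability $(u-u'^{-1})/(u'-u'^{-1})$ and $Y=u'^{-1}$ otherwise, and conditionally on $X=u^{-1}$ put $Y=u'$ with probability $(u^{-1}-u'^{-1})/(u'-u'^{-1})$; a one-line computation using $\E[X]=e^{r\Delta t}$ shows the $Y$-marginal carries weight $p(u')$ on $u'$, while $u'^{-1}\le u^{-1}\le u\le u'$ keeps every mixing probability in $[0,1]$. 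Then $\E[Y\mid X]=X$, hence $X\le_{\mathrm{cx}}Y$ by Jensen, closing the induction; applied to $\phi(x)=(K-x)^+$ it yields the proposition.

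On this route the hypotheses (1)--(2) enter only to certify that the objects are genuine probability laws on the range where the coupling lives. Condition (1) is equivalent to $p(u),p(u')\le\tfrac12$: the bound $e^{r\Delta t}+\sqrt{e^{2r\Delta t}-1}$ is exactly the (larger) root of $u^2-2ue^{r\Delta t}+1=0$, which is the equation $p(u)=\tfrac12$, and $p$ is decreasing in $u$ on $u\ge e^{r\Delta t}$; in particular (1) already forces $p(u),p(u')\in(0,1]$ and $u,u'\ge e^{r\Delta t}\ge1$, which is all the coupling needs. Condition (2) confines $u,u'$ to the complement of the gap between the two roots displayed in the statement (with $k$ as there), and degenerates to no restriction when $r=0$. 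Accordingly, the only genuinely fiddly part is this bookkeeping — checking under (1)--(2) that $p(u),p(u')\in[0,1]$ and that all coupling weights above are admissible — while the convex-order mechanics, the induction, and the final appeal to convexity of the put payoff are routine. If one wishes to avoid invoking the product-stability of convex order, the same induction can be carried out by hand by conditioning on $X_1$ and comparing $\E[(K-X_1P_{n-1}')^+]$ across $u$; the cost is re-deriving that stability fact inline, which is then the real work and the natural place for hypotheses (1)--(2) to be consumed in the authors' more elementary treatment.
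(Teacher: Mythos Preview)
Your argument is correct and follows a genuinely different route from the paper's.

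For the single-step comparison $QB(u)\le_{\mathrm{cx}}QB(u')$, the paper (Lemma~\ref{Lemma_Order_1Step}) verifies the cut criterion $\E[(X-k)^+]\le\E[(Y-k)^+]$ through a five-case analysis according to the position of $k$ relative to $u'^{-1},u^{-1},u,u'$; conditions (1) and (2) of the proposition are consumed precisely in Cases~3 and~4 of that analysis. Your explicit martingale coupling, which rests only on the nesting $\{u^{-1},u\}\subset[u'^{-1},u']$ together with the common mean $e^{r\Delta t}$, bypasses the case split entirely and shows that the one-step convex order already holds under the minimal requirement $u,u'\ge e^{r\Delta t}$ (equivalently $p(u),p(u')\in[0,1]$); the extra hypotheses (1)--(2) are then indeed superfluous on your route, as you observe.

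For the passage to $n$ factors, the paper builds a product probability space carrying coordinatewise Strassen couplings so that $\E[(y^1,\dots,y^n)\mid(x^1,\dots,x^n)]=(x^1,\dots,x^n)$, deduces the \emph{multivariate} convex order $(X_1,\dots,X_n)\le_{\mathrm{cx}}(Y_1,\dots,Y_n)$, and finally argues that $z\mapsto(K-\prod_i z_i)^+$ is convex on $(0,\infty)^n$ as the composition of the convex decreasing map $t\mapsto(K-t)^+$ with $f(z)=\prod_i z_i$, claimed in Lemma~\ref{Lemma_Concave} to be concave for even $n$. Your scalar induction, peeling off one factor at a time via the stability of convex order under multiplication by an independent nonnegative variable, stays one-dimensional throughout and works for every $n$. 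This is a real gain: already for $n=2$ the Hessian of $(x_1,x_2)\mapsto x_1x_2$ is $\bigl(\begin{smallmatrix}0&1\\1&0\end{smallmatrix}\bigr)$ with eigenvalues $\pm1$, so the product map is neither convex nor concave on $(0,\infty)^n$, and the sign of $\det H$ alone (which is all Lemma~\ref{Lemma_Concave} checks) cannot deliver concavity. Your route is therefore both more elementary and more robust than the paper's.
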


\begin{remark}
In the implementation of the tree, we set the time step size $\Delta t\approx0.0002$ and we use a simple bi-section approach as suggested by \cite{van2006binomial} to find $u^{*}$. Thus, given a market price $V_A$, starting with an upper bound $u_{\text{ub}}$ and a lower bound $u_{\text{lb}}$ satisfying the conditions in Proposition \ref{Proposition_Convex_Order} such that,
\begin{align*}
\sup_{t\in\{0:\Delta t:T_i\}} E[e^{-rt}\widetilde{\mathcal{H}}_i(S_t^{u_{\text{ub}}})|\mathcal{F}_0]&>V_A,\\
\sup_{t\in\{0:\Delta t:T_i\}} E[e^{-rt}\widetilde{\mathcal{H}}_i(S_t^{u_{\text{lb}}})|\mathcal{F}_0]&<V_A,
\end{align*}
the bi-section approach is started and the new candidate for $u^{*}$ is $\hat{u}=\frac{u_{\text{ub}}+u_{\text{lb}}}{2}$. When $\sup_{t\in\{0:\Delta t:T_i\}} E[e^{-rt}\widetilde{\mathcal{H}}_i(S_t^{\hat{u}})|\mathcal{F}_0]>V_A$,  we set $u_{\text{ub}}=\hat{u}$ for the next iteration,  otherwise $u_{\text{lb}}=\hat{u}$. As stopping criterion, we choose
\begin{align*}
\Big\vert \sup_{t\in\{0:\Delta t:T_i\}} E[e^{-rt}\widetilde{\mathcal{H}}_i(S_t^{\hat{u}})|\mathcal{F}_0]- V_A\Big\vert\le\varepsilon,
\end{align*}
and set $\varepsilon=10^{-5}$ in our implementation. In Proposition \ref{Proposition_Convex_Order} we have investigated the European put case and can deduce from the convex ordering that the put prices are monotonically increasing in $u$. For a strict order, the $u^{*}$-value is thus uniquely determined. In our case, the $u^{*}$-value can be determined uniquely as minimum of all $u$ values satisfying the stopping criterion. Moreover, this indicates that also the American put price in the binomial tree is increasing with increasing $u$. We validated this by numerical tests (not reported). This is in line with the recommendation in \cite{van2006binomial}. The only observed limitation is that the American put price can not given by an immediate exercise at the initial time. This is explained in detail in Remark \ref{Remark_Exercise_Region}.
\end{remark}

\section{Pricing Methodology}

In this section, we present a model formulation and numerical implementation
of the three investigated models (CEV, Heston, Merton). To
investigate the de--Ameri\-cani\-za\-tion methodology, we need to price the American
and European options. Our market data in the numerical study later on will be based on options on the Google stock (Ticker: GOOG). As Google does not pay dividends, we neglect dividend payments in our pricing methodology. Without dividend payments, for $r>0$, it holds in general that American calls coincide with European calls and only American puts have to be treated differently. The opposite is true for $r<0$, in which case American and European puts coincide and American and European calls have to be treated differently.

In general, for European options, there exists a
variety of fast pricing methodologies such as Fast Fourier Transform
(\cite{CarrMadan99,Raible}) or even closed-form solutions. The common approaches for pricing American options are P(I)DE methods using either the finite difference method (FDM) or a finite element method (FEM).
We choose FEM since it is typically more flexible. To solve the resulting variational inequalities for American options, we use the Projected SOR
Algorithm,~\cite{achdou}, \cite{seydel}, for the CEV and Merton models,  and the 
Primal Dual Active Set Strategy,~\cite{kunisch}, for the Heston model.

% For comparability purposes,
%we
%approximate European options also by FEM. We reformulate a problem in a weak
%form, which leads to solve the corresponding variational equality
%or inequality problem associated the European or American options respectively.
%To solve the corresponding discrete variational inequalities, one can employ
%one of the iterative methods like, e.g, Projected SOR
%Algorithm,~\cite{achdou}, \cite{seydel} and Primal Dual Active Set
%Strategy,~\cite{kunisch}.
%Thus, driven by the the early exercise possibility of American options, we take
%the finite element method as solver. The finite element method allows to apply
%the PSOR algorithm (\cite{seydel}) or the primal dual approach
%(\todo{reference}) to solve variational inequalities resulting from this early
%exercise possibility. European options are also
%priced with the finite elements. 

%--------------------------------------------------------------------------------------------------------------------------------------------
%                CEV model
%--------------------------------------------------------------------------------------------------------------------------------------------
\subsection{Option Pricing Models}
We briefly present the models that we use for our study, namely the
constant elasticity of variance model (CEV), the stochastic volatility Heston
model, and the Merton model.

In all three of the models, the asset price dynamics $S_\tau$ are governed by a stochastic differential equation (SDE) of the form
\begin{subequations}\label{SDE1}
\begin{equation}
\d{S_\tau}=\ rS_\tau\d{\tau} + \sigma(S,\tau) S_\tau \d{W_\tau}+
S_{\tau-}\d{J_\tau},\qquad S_0=s\ge0,
\end{equation}
\begin{equation}
J_\tau =\ \sum_{i=0}^{N_\tau} Y_i,
\end{equation}
\end{subequations}
%\begin{align}
%\d{S_\tau}=&\ rS_\tau\d{\tau} + \sigma(S,\tau) S_\tau \d{W_\tau}+
%S_{\tau-}\d{J_\tau},\qquad S_0=s\ge0,\\
%J_\tau =&\ \sum_{i=0}^{N_\tau} Y_i,
%\label{SDE1}
%\end{align}
with $W_\tau$ a standard Wiener process, $r$ the risk-free interest rate and a volatility function $\sigma(S,\tau)$.
The jump part $(J_\tau)_{\tau\geq 0}$ is a compound Poisson process
with intensity $\lambda\geq 0$ and independent identically distributed jumps
$Y_i$, $i\in\IN$, that are independent of the Poisson process
$(N_\tau)_{\tau\geq 0}$. The Poisson process and the Wiener process are also independent.
%For the standard Black-Scholes model, we pose an assumption on
%the volatility
%$\sigma(S,\tau)=\sigma=const$, set the jump intensity $\lambda=0$ and obtain that the process~\eqref{SDE1} is a
%standard Brownian motion. It turns out, that this assumption is often violated
%by market observations. Several models were developed to
%extend the Black-Scholes model to a model with a non-constant volatility, e.g.,
%CEV and Heston models.\\

As an example of a local volatility model, we begin by presenting the CEV model, which was introduced by \cite{Cox}. Here, the local volatility is assumed to be a deterministic function of the asset price for the process in~\eqref{SDE1},
$\sigma(S,\tau)=\sigma S_\tau^{\zeta-1}$, $0<\zeta<1$, $\sigma>0$ and $\lambda=0$.

As an example of a stochastic volatility model, we use the model proposed by \cite{heston}. In contrast to the CEV model, the stochastic volatility is driven by a second Brownian motion $\widetilde{W}_\tau$ whose correlation with $W_\tau$ is described by a correlation parameter
$\rho\in[-1,1]$, and the model is based on the dynamics of both the stock price~\eqref{SDE1}, with jump intensity $\lambda=0$, and the
variance $v_\tau$~\eqref{SDE2},
\begin{equation}
 dv_\tau=\kappa(\gamma-v_\tau)dt + \xi\sqrt{v_\tau}d\widetilde{W}_\tau,
 \label{SDE2}
\end{equation}
with $\sigma(S,\tau)=\sqrt{v_\tau}$, mean variance $\gamma>0$, rate of mean reversion
$\kappa>0$ and volatility of volatility $\xi>0$. Jumps are not included in either of the CEV or Heston models.\\

%Another class of models are jump-diffusion models. 
The Merton model includes jumps. The log-asset price process is not exclusively driven by a Brownian motion, but instead follows a jump-diffusion process. Thus, in the model of \cite{Merton1976}, the volatility of the asset process is still assumed to be constant, $\sigma(S,\tau) \equiv \sigma>0,\ \forall S>0, \forall \tau\geq 0.$ But being a jump diffusion model, the jump intensity $\lambda>0$ is positive and $N_t\sim \text{Poiss}(\lambda t)$. The jumps are taken to be independent normally distributed random variables, $Y_i \sim \mathcal{N}(\alpha,\beta^2)$ 
with expected jump size $\alpha\in\IR$ and standard deviation $\beta>0$. %A trajectory of the Merton model illustrating the jump feature is displayed in Figure~\ref{fig:deAmMertonSingleTrajectory}.\todo{Do we need this Figure?}
%%%%%%%%%%%%%%%%%%%%%%%%%%%%%%
%%% Fig.: BS Model trajectories
%%%%%%%%%%%%%%%%%%%%%%%%%%%%%%
%\begin{figure}
%\centering
%\makebox[0pt]{\includegraphics[scale=0.7]{fig/max/MertonTrajectory/deAmMertonSingleTrajectory}}
%\caption{An asset price trajectory in the Merton model with $S_0=1$ and $r=0.03$. Note the existence of jumps in the trajectory.}
%\label{fig:deAmMertonSingleTrajectory}
%\end{figure}
%%%%%%%%%%%%%%%%%%%%%%%%%%%%%%

\subsection{Pricing P(I)DE}

Denote by
$t=T-\tau$ the time to
maturity $T$, $T<\infty$ and by $K$ the strike of an option. For the CEV model, we stay with the $S$ variable, $S\in(0,\infty)$, for the Heston and Merton model we work with the log-transformed stock variable $x:=\log\left(\frac{S}{K}\right)$,  $x\in(-\infty,\infty)$. 
In the following, we will denote an American or European call or put price by $P_{call/put}^{Am/Eu}$. For the CEV model we have $P_{call/put}^{Am/Eu}:(0,T)\times\mathbb{R}^+\rightarrow\mathbb{R}^+$ and for the Heston and Merton model we have $P_{call/put}^{Am/Eu}:(0,T)\times\mathbb{R}^n\rightarrow\mathbb{R}^+$ ($n=1\ (Merton),\ n=2\ (Heston)$).
The value of an option at $t=0$ is given by the payoff function
$\widetilde{\mathcal{H}}_{call/put}(\cdot)$,
$P_{call/put}(0)=P_0=\widetilde{\mathcal{H}}_{call/put}$ with 
$\widetilde{\mathcal{H}}_{call}(S):=(S-K)^+$ or
$\widetilde{\mathcal{H}}_{put}(S):=(K-S)^+$ in the CEV model and
$\widetilde{\mathcal{H}}_{call}(x):=(Ke^x-K)^+$,
($\widetilde{\mathcal{H}}_{put}(x):=(K-Ke^x)^+$) in the Heston and
Merton models.

Then, to find the value of the European option $P^{Eu}_{call/put}$, paying $P_0^{call/put}=\tilde{\mathcal{H}}_{call/put}(S)$ ($P_0^{call/put}=\widetilde{\mathcal{H}}_{call/put}(x)$) at
$t=0$
leads to solve the following initial boundary value problem
\begin{align}
\frac{\partial P^{Eu}_{call/put}}{\partial t}-\mathcal{L}^sP^{Eu}_{call/put}=0,\quad P^{Eu}_{call/put}(0)=P_0^{call/put},
\label{EO_PDE}
\end{align}
where the spatial partial (integro) differential operator
$\mathcal{L}^s$, $s=\{\rm CEV, H, M\}$ is
determined by the model used to price the option. 
For the CEV, Heston and Merton, it is given
by~\eqref{L_CEV},~\eqref{L_Heston} and~\eqref{L_Merton},
respectively.
\begin{subequations}
 \begin{align}
 \mathcal{L}^{\rm CEV}P_{call/put}^{Am/Eu}:&=\frac{\sigma S_t^{\zeta-1}}{2}S^2\frac{\partial^2 P_{call/put}^{Am/Eu}}{\partial S^2}+rS\frac{\partial P_{call/put}^{Am/Eu}}{\partial S}-rP_{call/put}^{Am/Eu} \label{L_CEV},\\
\mathcal{L}^{\rm H}P_{call/put}^{Am/Eu}:&=\frac{1}{2}v\frac{\partial^2 P_{call/put}^{Am/Eu}}{\partial
x^2}+\xi v\rho\frac{\partial^2
P_{call/put}^{Am/Eu}}{\partial v\partial x}+\frac{1}{2}\xi^2v\frac{\partial^2 P_{call/put}^{Am/Eu}}{\partial
v^2}\notag\\
&\quad\quad+\kappa(\gamma-v)\frac{\partial P_{call/put}^{Am/Eu}}{\partial
v}+\left(r-\frac{1}{2}v\right)\frac{\partial P_{call/put}^{Am/Eu}}{\partial
x}-rP_{call/put}^{Am/Eu},\label{L_Heston}\\
\mathcal{L}^{\rm M}P_{call/put}^{Am/Eu}:&= b\frac{P_{call/put}^{Am/Eu}}{\partial x}  + \frac{1}{2}\sigma^2\frac{\partial^2 P_{call/put}^{Am/Eu}}{\partial x^2}P_{call/put}^{Am/Eu}\notag\\
&\quad+ \int_\IR (P_{call/put}^{Am/Eu}(x+z)-P_{call/put}^{Am/Eu}(x)-\frac{P_{call/put}^{Am/Eu}(x)}{\partial x} z)F(\d{z})-rP_{call/put}^{Am/Eu} \label{L_Merton},
\end{align}\label{model_operators}
\end{subequations}
where, for the Merton model, the jump measure $F$ is
given by
	\begin{equation}
		F(\d{z}) = \frac{\lambda}{\sqrt{2\pi\beta^2}}\exp\left(-\frac{(z-\alpha)^2}{2\beta^2}\right)\d{z}
	\end{equation}
and the drift $b\in\IR$ is set to $b:=r -
\frac{1}{2}\sigma^2-\lambda\left(e^{\alpha+\frac{\beta^2}{2}}-1\right)$ due to
the no-arbitrage condition.

Due to its early exercise possibility, pricing an American option (e.g., put) results in
additional inequality constraints, and leads us to solve the following system of
inequalities
\begin{subequations}
 \begin{align}
\frac{\partial P_{call/put}^{Am}}{\partial t}-\mathcal{L}^sP_{call/put}^{Am}\geq 0, \quad  P_{call/put}^{Am}-P_0^{call/put}&\geq 0,\\
\left(\frac{\partial P_{call/put}^{Am}}{\partial
t}-\mathcal{L}^sP_{call/put}^{Am}\right)\cdot\left(P_{call/put}^{Am}-P_0^{call/put}\right)&=0.
\end{align}
\label{AO_put_model}
\end{subequations}

We denote the parameter vector by
$\parmu:=(\xi,\rho,\gamma,\kappa,r)\in\mathbb{R}^5$ for the Heston model,
$\parmu:=(\sigma, \zeta)\in\mathbb{R}^2$ for the CEV model and $\parmu:=(\sigma,
\alpha, \beta, \lambda)\in\IR^4$ for the Merton model. 
\new{Then the
problems~\eqref{EO_PDE},~\eqref{AO_put_model} are parametrized problems with
$\parmu\in\mathcal{P}$, where $\mathcal{P}\subset\mathbb{R}^d$ is a parameter
space. The solution can be written as $P=P(\parmu)$. In some cases, for notational convenience, we will omit the parameter-dependence of $P$ and related quantities.}

%-------------------------------------------------------------------------------
%               Var Form
%-------------------------------------------------------------------------------
\subsection{Variational Formulation}

\subsubsection{Boundary Conditions}\label{sec_bc}
We tackle the non-homogeneous truncated
Dirichlet boundary conditions by means of the lift function $u_L(t)=g(t)$ onto the domain. For all models, we consider only Dirichlet- or Neumann-type
boundary conditions. For the European call in the Heston model, we specify
them as follows according to~\cite{winkler2001},
\begin{subequations}
\begin{align}
\Gamma_1: \ &\vol=\vol_{\min}\ \ \ \
&&P^{Eu}_{call}(t,\vol_{\min},x)=Ke^{x}\Phi(d_1)-Ke^{-rt}\Phi(d_2),\\
\Gamma_2: \ &\vol=\vol_{\max}\ \ \ \ &&P^{Eu}_{call}(t,\vol_{\max},x)=Ke^{x},
\end{align}\label{bc_log_eo}\end{subequations}
and we interpolate linearly on the boundaries  $\Gamma_3=\{x=x_{\min}\}$ and $\Gamma_4=\{x=x_{\max}\}$.
The cumulative
distribution function
$\Phi(\cdot)$ is defined in~\eqref{cum_nd} and $d_{1,2}=\frac{x+(r\pm\frac{\sigma^2}{2})t}{\sigma\sqrt{t}}$ with
$\sigma=\sqrt{\vol}$.

The boundary conditions for American put options in the Heston model are as
follows, according to~\cite{clarke} and \cite{during},
\begin{align*}
 P^{Am}_{put}(t,\vol,x)&=\widetilde{\mathcal{H}}_{put}(x), &&\text{on}\quad\Gamma_3\cup\Gamma_4,
&&&\\
 \frac{\partial P^{Am}_{put}}{\partial \vol}(t,\vol_{\min},x)&=0,
&&\text{on}\quad\Gamma_1,
 &&&\frac{\partial P^{Am}_{put}}{\partial \vol}(t,\vol_{\max},x)=0, \ \
\text{on}\quad\Gamma_2.
\end{align*}

For the CEV model, following \cite{seydel}, we applied the boundary conditions
\begin{align*}
P^{Am/Eu}_{call}(t,S_{min})&=0,\quad P^{Am/Eu}_{call}(t,S_{max})=S_{max}-e^{-rt}K, &\text{for call options},\\
P^{Eu}_{put}(t,S_{min})&=e^{-rt}K-S_{min},\quad P^{Eu}_{put}(t,S_{max})=0, &\text{for European put options,}\\
P^{Am}_{put}(t,S_{min})&=K-S_{min},\quad P^{Am}_{put}(t,S_{max})=0, &\text{for American put options}.
\end{align*}
In the Merton model, we subtract a function $\Psi$ from the original pricing PIDE that approximately matches the behavior of $P^\text{Merton}$ such that for all $t\in[0,T]$ we have $\widetilde{P}^\text{Merton} = P^\text{Merton}(t,x)-\Psi(t,x)\rightarrow 0$ for $x\rightarrow\pm \infty$.
We choose
	\begin{equation}
	\label{eq:MertonPsi}
		\begin{split}
			\Psi^\text{Am./Eu. call}(t,x) =&\ (Ke^x - Ke^{-rt})\Phi(x),\\
			\Psi^\text{Am. put}(t,x) =&\ (K-Ke^x)(1-\Phi(x)),
		\end{split}
	\end{equation}
for European call and put options, respectively, where $\Phi$ is the
cumulative distribution function of the normal distribution~\eqref{cum_nd},
\begin{equation}
 \Phi(x)=\frac{1}{\sqrt{2\pi}}\int_{-\infty}^x
e^{-\frac{1}{2}z^2}dz.
\label{cum_nd}
\end{equation}

The transformation of the Merton model obtained by subtracting an appropriately chosen
function $\Psi$ as introduced in~\eqref{eq:MertonPsi} results in zero boundary
conditions in space, $u(t,x_\text{min}) = u(t,x_\text{max}) = 0$ for all $t\in
[0,T]$.

\section{Numerical Study of the effects of de--Ameri\-cani\-za\-tion} 
Our main objective is to investigate the de--Ameri\-cani\-za\-tion methodology with respect to the previously stated questions 1-4 on page \pageref{Key_Questions}. But before we look at these questions and the calibration results in detail, we describe the discretization of our FEM pricers followed by an investigation of the effects of de--Ameri\-cani\-za\-tion on pricing. Then we switch to calibrating to synthetic data and, finally, to market data.

\subsection{Discretization}
%To set up the discretization of our FEM solvers, we analyzed the precision of the three previously presented models for European call and put options.
%Basically, our idea has been to 
We set up mesh sizes and time discretization in all three models such that the errors compared to benchmark solutions are roughly the same. In our test setting, we set $S_0=1$, $r=0.07$, $T=\{0.5,0.875, 1.25, 1.625,2\}$ and $K$ to 21 equally distributed values in $[0.5,1.5]$. For the discretization, we choose $[S_{min},S_{max}]=[0.01,2]$ for the CEV model, $[v_{min}, v_{max}]=[10^{-5},3]$ and  $[x_{min}, x_{max}]=[-5,5]$ for the Heston model and for the Merton model we set $[x_{min}, x_{max}]=[-5,5]$. We set $\mathcal{N}=1000$ for the CEV model, $\mathcal{N}=49\times 97=4753$ for the Heston model and $\mathcal{N}=192$ for the Merton model, as well as $\Delta t = 0.008$ for all models. For the CEV model, we choose $\sigma=0.15$ and $\zeta=0.75$ and as benchmark solution we implement the semi-closed-form solution of the CEV model for European put and call prices as shown in \cite{schroeder}. We use the semi-closed-form solution in~\cite{janek2011fx} for the Heston model as benchmark and as model parameters we use $\xi=0.1$, $\rho=-0.5$, $\gamma=0.05$, $\kappa=1.2$ and $v_0=0.05$.
In the Merton model, Fourier pricing is used as benchmark. The model is parametrized by setting $\sigma=0.2$, $\alpha=-0.1$, $\beta=0.1$ and $\lambda=3$. Summarizing the results, we observe that for all models, with the introduced discretization, the absolute error between the benchmark and the FEM solution is in the region of $10^{-3}$ to $10^{-4}$ and, thus, the pricers for all three models have comparable accuracy.
\subsection{Effects of de--Ameri\-cani\-za\-tion on Pricing}
\label{sec:Pricing}

First, we focus on pricing differences caused by de--Ameri\-cani\-za\-tion. Therefore, we compare the de--Americanized American prices with the derived European option prices in the following way. Starting with a set of model parameters, we price the American and European options. Then, the binomial tree is applied to translate the American option prices into de-Americanized pseudo-European prices. Subsequently, we compare the European and the pseudo-European \textit{so called} de-Americanized prices to identify the effects of the de--Ameri\-cani\-za\-tion methodology.

%as illustrated in Figure \ref{DeAmPricing} \todo{Figure in tikx}. 
%
%\begin{figure}[h!]
%\begin{center}
%\includegraphics[width=0.9\textwidth]{fig/DeAmPricing.png}
%\caption{First Step De--Ameri\-cani\-za\-tion: Focus on pricing differences}
%\label{DeAmPricing}
%\end{center}
%\end{figure}

The advantage of this approach is that we can purely focus on de--Ameri\-cani\-za\-tion, decoupled from calibration issues. In order to do so, we define the following test set for the range of investigated options. Here, we focus on put options due to the fact that American and European calls coincide for non-dividend-paying underlyings.
\begin{align}
S_0&=1\notag\\
K&=0.80,0.85,0.90,0.95,1.00,1.05,1.10,1.15,1.20\notag\\
T&=\frac{1}{12},\frac{2}{12},\frac{3}{12},\frac{4}{12},\frac{6}{12},\frac{9}{12},\frac{12}{12},\frac{24}{12}\notag\\
r&=0, 0.01, 0.02, 0.05,0.07\label{DeAm_Pricing_Setting}
\end{align}
In each model, 5 parameter sets are investigated to cover the parameter range. These are summarized in Table \ref{Tab_Parameters_Pricing}.
\begin{table}[h!]
\centering
%\tbl{Overview of the parameter sets used for the CEV, Heston and Merton models}{
\begin{tabular}{ccc|ccccc|cccc}\toprule
      & \multicolumn{2}{c|}{CEV} & \multicolumn{5}{c|}{Heston}                    & \multicolumn{4}{c}{Merton} \\
      & $\sigma$   & $\zeta$ & $\xi$ & $\rho$ & $\gamma$ & $\kappa$ & $v_0$ & $\sigma$ & $\alpha$    & $\beta$    & $\lambda$ \\ \hline
$p_1$ & $0.2$      & $0.5$ & 0.10     & -0.20      & 0.07     & 0.1      & 0.07  & 0.20      &  -0.01      & 0.01       & 1 \\
$p_2$ & $0.275$    & $0.6$ & 0.25     & -0.50   & 0.10     & 0.4      & 0.10  & 0.15     &  -0.05      & 0.05       & 2 \\
$p_3$ & $0.35$     & $0.7$ & 0.40     & -0.50    & 0.15     & 0.6      & 0.15  & 0.20      &  -0.10       & 0.10        & 3 \\
$p_4$ & $0.425$    & $0.8$ & 0.55     & -0.45   & 0.20     & 1.2      & 0.20  & 0.10      &  -0.10       & 0.20        & 5 \\
$p_5$ & $0.5$      & $0.9$ & 0.70     & -0.80    & 0.30      & 1.4      & 0.30  & 0.10      &  -0.15      & 0.20        & 7     
\end{tabular}
%}
\caption{Overview of the parameter sets used for the CEV, Heston and Merton models}
\label{Tab_Parameters_Pricing}
\end{table}
%%%%%%%%%%%%%%%%%%%%%%%%%
%%% Rem.: Motivation of parameter selection for CEV
%%%%%%%%%%%%%%%%%%%%%%%%%
\paragraph{Motivation of the selected parameters for the CEV model}
%The CEV model consists of two parameters. Similar to the Black-Scholes model there is a constant parameter $\sigma$ which scales the volatility of the underlying. 
The main feature of the CEV model is the elasticity of variance parameter $\zeta$, which is combined with the level of the underlying to obtain a local volatility, namely $\sigma(S,t)=\sigma S^{\zeta-1}$, reflecting the leverage effect. In our example, we investigate American puts and the option-holder benefits from decreasing asset prices. In general, increasing the volatility leads to increasing option prices, but especially compared to the classical Black-Scholes model we are interested in the question of how strongly the incorporated leverage effect influences the put prices and whether the differences between American and European puts can be captured by the binomial tree. Thus, our selection for $\zeta$ in $p_1$ is 0.5, which strongly differs from the Black-Scholes model, and then $\zeta$ is further increased up to 0.9 within the scenarios. Additionally, we increase the values of $\sigma$.

%%%%%%%%%%%%%%%%%%%%%%%%%
%%% Rem.: Motivation of parameter selection for Heston
%%%%%%%%%%%%%%%%%%%%%%%%%
\paragraph{Motivation of parameter selection for the Heston model}
%In contrast to the CEV model, which covers the leverage effect by a local volatility function, the main feature of the Heston model is a second stochastic process to describe the volatility. 
Similar to the CEV model, the (American) put prices increase with increasing volatility. We try to cover this effect by increasing the volatility of the volatility parameters and the correlation between the two stochastic processes. In general, %\todo{add sources} 
for stocks, the correlation between the volatility and the underlying value is negative. Thus, in the de-Americanization study, we focus on negative correlation values $\rho$. Starting in $p_1$ with a relatively low volatility and a slightly negative correlation $\rho$, in $p_2$ to $p_4$ we increase the volatility of volatility parameter $\xi$, the mean reverting level $\gamma$ and the mean reverting speed $\kappa$, and also investigate higher negative values for the correlation $\rho$. In all scenarios, the initial volatility $v_0$ is set to match the mean reverting level, i.e., $v_0=\gamma$.

%%%%%%%%%%%%%%%%%%%%%%%%%
%%% Rem.: Motivation of parameter selection for CEV
%%%%%%%%%%%%%%%%%%%%%%%%%
\paragraph{Motivation of parameter selection for the Merton model}
The Merton model is a jump diffusion model. Due to the early exercise feature of American options, the existence of jumps has a significant impact on American option prices. Consider for example an American put. Here, the option-holder benefits from decreasing asset prices. Consequently, when the possibility of negative jumps increases, the option price will increase as well. The jump intensity parameter $\lambda$ therefore plays a decisive role in this de-Americanization study. The analogous reasoning holds for the expected jump size parameter $\alpha$. In the upcoming numerical study, we try to incorporate these effects. The considered scenarios for the Merton model presented in Table~\ref{Tab_Parameters_Pricing} are chosen by this reasoning. Scenario $p_1$ describes a  Black-Scholes-like market with a rather low presence of jump occurrences. In $p_2$ and $p_3$, the jump feature appears more pronounced. Scenario $p_4$ and $p_5$ finally are encoded by rather jump-dominated parameter sets, which have an average number of $7$ jumps per year with large expected negative jump sizes that appear highly volatile.

%% Absatz zur Selektion der De-Americanized Daten, Amerikanischer Put-Preis muss in continuation region liegen, d.h. P_Am >=max(K-S,0)*(1+delta), delta =0.01
\begin{remark}\label{Remark_Exercise_Region}
We price the put options in (\ref{DeAm_Pricing_Setting}) for the parameter sets shown in Table \ref{Tab_Parameters_Pricing}. For some parameters, especially for high interest rates combined with low volatility, it could occur that the price of an American put option equals exactly $K_i-S_0$, so that this American put option would be exercised immediately. In the following analysis, we excluded these cases because a unique European option price cannot be determined by applying the binomial tree. As illustrated in the following toy example in Figure \ref{Toy_Example}, there are several possible values for $u$ to replicate the American option price if the price of the American option is determined by immediately exercising it. In the example, a put option with strike $K=120$ is priced. Here, $u=1.04$ and $u=1.11$ are possible solutions. To avoid this, we consequently only consider American put options in our analyses when $P^{Am}_{put}>(K-S_0)^{+}\cdot (1+\delta)$. Thus, the American put option price exceeds the immediate exercise price by a factor of $\delta$. We set $\delta=1 \%$.  
%\begin{figure}
% \begin{center} 
%  \begin{minipage}{0.8\textwidth}
%     \centering 
%     \includegraphics[width=0.8\textwidth]{./fig/TreeToy_v1.jpg} 
% \end{minipage}
%   \begin{minipage}{0.8\textwidth}
%     \centering 
%     \includegraphics[width=0.8\textwidth]{./fig/TreeToy_v2.jpg} 
% \end{minipage}
%  \caption{Given an American put option price of 20 with $S_0=100$, $K=120$, $r=0.01$, thus an American put option in the exercise region, an unique tree cannot be found to replicate this option. In this example, $u=1.04$ (top) as well as $u=1.11$ (bottom) replicate the American option price but result in different European put prices: 18.81 resp. 19.69.}
%  \label{Toy_Example} 
%\end{center}
%\end{figure}
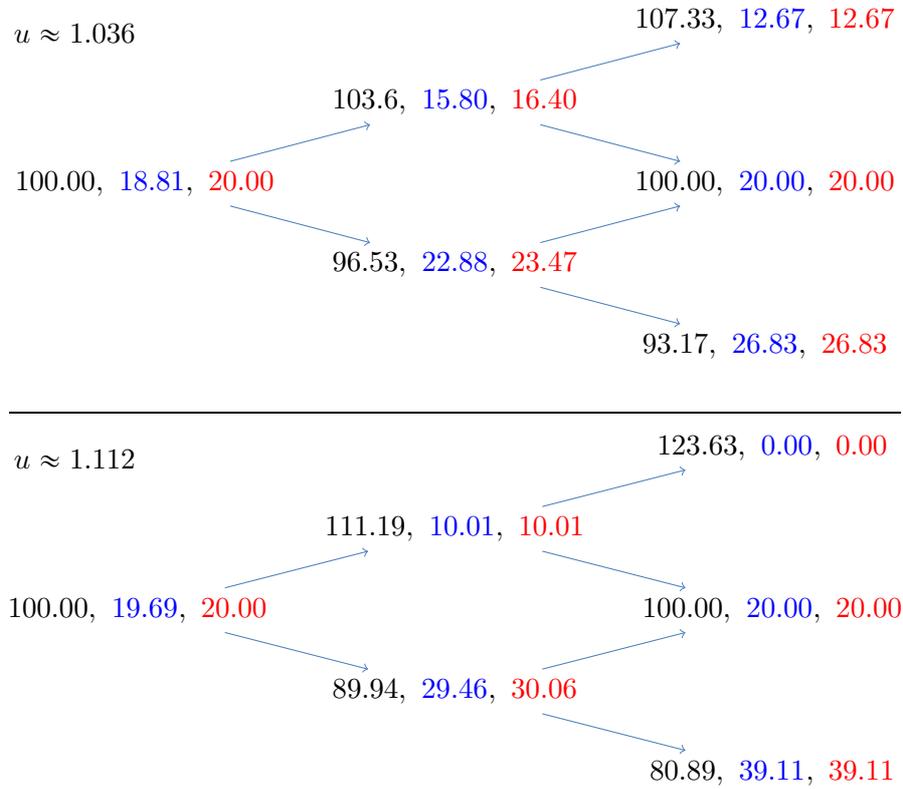
\begin{figure}
\centering

 \begin{tikzpicture}
    \matrix (tree) [%
      matrix of nodes,
      minimum size=0.25cm,
      column sep=0.5cm,
      row sep=0.5cm,
    ]
    {
          &   & $107.33,\ \textcolor{blue}{12.67},\ \textcolor{red}{12.67}$ \\
          & $103.6,\ \textcolor{blue}{15.80},\ \textcolor{red}{16.40}$ &   \\
      $100.00,\ \textcolor{blue}{18.81},\ \textcolor{red}{20.00}$ &   & $100.00,\ \textcolor{blue}{20.00},\ \textcolor{red}{20.00}$ \\
          & $96.53,\  \textcolor{blue}{22.88},\ \textcolor{red}{23.47}$ &   \\
          &   & $93.17, \ \textcolor{blue}{26.83},\ \textcolor{red}{26.83}$ \\
    };
    \draw[->, bluecolor] (tree-3-1) -- (tree-2-2) ;%node [midway,above] {};
    \draw[->, bluecolor] (tree-3-1) -- (tree-4-2) ;%node [midway,below] {$(1-p)$};
    \draw[->, bluecolor] (tree-2-2) -- (tree-1-3) ;%node [midway,above] {$P^2$};
    \draw[->, bluecolor] (tree-2-2) -- (tree-3-3) ;%node [midway,below] {$(1-p)p$};
    \draw[->, bluecolor] (tree-4-2) -- (tree-3-3) ;%node [midway,above] {$(1-p)p$};
    \draw[->, bluecolor] (tree-4-2) -- (tree-5-3) ;%node [midway,below] {$(1-p)^2$};
   % \node[text=black, align=left,font=\bf] (F) at (0.25,2) {Underlying value};
    \node[text=black, align=left,font=\bf] (F) at (-5,2) {$u\approx1.036$};
    %\node[text=black, align=left,font=\bf] (F) at (4.75,2) {European put option};
   % \node[text=black, align=left,font=\bf] (F) at (9.25,2) {American put option};
  \end{tikzpicture}
  \rule{0.8\textwidth}{0.4pt}
 \begin{tikzpicture}
    \matrix (tree) [%
      matrix of nodes,
      minimum size=0.25cm,
      column sep=0.5cm,
      row sep=0.5cm,
    ]
    {
          &   & $123.63,\ \textcolor{blue}{0.00},\ \textcolor{red}{0.00}$ \\
          & $111.19,\ \textcolor{blue}{10.01},\ \textcolor{red}{10.01}$ &   \\
      $100.00,\ \textcolor{blue}{19.69},\ \textcolor{red}{20.00}$ &   & $100.00,\ \textcolor{blue}{20.00},\ \textcolor{red}{20.00}$ \\
          & $89.94,\  \textcolor{blue}{29.46},\ \textcolor{red}{30.06}$ &   \\
          &   & $80.89, \ \textcolor{blue}{39.11},\ \textcolor{red}{39.11}$ \\
    };
    \draw[->, bluecolor] (tree-3-1) -- (tree-2-2) ;%node [midway,above] {};
    \draw[->, bluecolor] (tree-3-1) -- (tree-4-2) ;%node [midway,below] {$(1-p)$};
    \draw[->, bluecolor] (tree-2-2) -- (tree-1-3) ;%node [midway,above] {$P^2$};
    \draw[->, bluecolor] (tree-2-2) -- (tree-3-3) ;%node [midway,below] {$(1-p)p$};
    \draw[->, bluecolor] (tree-4-2) -- (tree-3-3) ;%node [midway,above] {$(1-p)p$};
    \draw[->, bluecolor] (tree-4-2) -- (tree-5-3) ;%node [midway,below] {$(1-p)^2$};
   % \node[text=black, align=left,font=\bf] (F) at (0.25,2) {Underlying value};
    \node[text=black, align=left,font=\bf] (F) at (-5,2) {$u\approx1.112$};
    %\node[text=black, align=left,font=\bf] (F) at (4.75,2) {European put option};
   % \node[text=black, align=left,font=\bf] (F) at (9.25,2) {American put option};
     
  \end{tikzpicture}
  \caption{Given an American put option price of 20 with $S_0=100$, $K=120$, $r=0.01$, i.e., an American put option in the exercise region, a unique tree cannot be found to replicate this option. In this example, we show two binomial trees for $u\approx1.036$ (top) as well as $u\approx1.112$ (bottom). In each tree, we show the value of the underlying (black), the European put price (blue) and the American put price (red) at each node. Both trees replicate the American option price of 20.00 but result in different European put prices: 18.81 and 19.69.}
  \label{Toy_Example}    
\end{figure}

\end{remark}
In Tables \ref{Tab_Pricing_CEV_avg_errors} - \ref{Tab_Pricing_CEV_max_prices} (CEV model), Tables \ref{Tab_Pricing_Heston_avg_errors} - \ref{Tab_Pricing_Heston_max_prices} (Heston model) and Tables \ref{Tab_Pricing_Merton_avg_errors} - \ref{Tab_Pricing_Merton_max_prices} (Merton model) in the appendix, we show in the appendix the pricing effects for the synthetic prices in \eqref{DeAm_Pricing_Setting}. For each scenario $p_i,i=1,\ldots,5$, we present the average difference between the de-Americanized prices and the European prices for each maturity and each strike and accordingly show the maximal European price in this maturity to reflect the issue stated in Remark \ref{Remark_Exercise_Region}. Similar studies have been done for the maximal error at each strike and maturity and confirm the findings based on the average error presented in the following.
In Figure \ref{Figure_Pricing_CEV}, we highlight the results for scenario $p_5$ in the CEV model to illustrate the effects of de-Americanization in several interest rate environments for different maturities or different strikes. For $p_5$ in the Heston model and $p_5$ in the Merton model, the results are shown in Figure \ref{Figure_Pricing_Heston} and Figure \ref{Figure_Pricing_Merton}, respectively. All of these figures clearly highlight the case $r=0$ as having hardly any de-Americanization effects (Heston and Merton) or at least fewer such effects (CEV).
\begin{figure} 
  \begin{minipage}{0.49\textwidth} 
     \centering 
     \includegraphics[width=\textwidth]{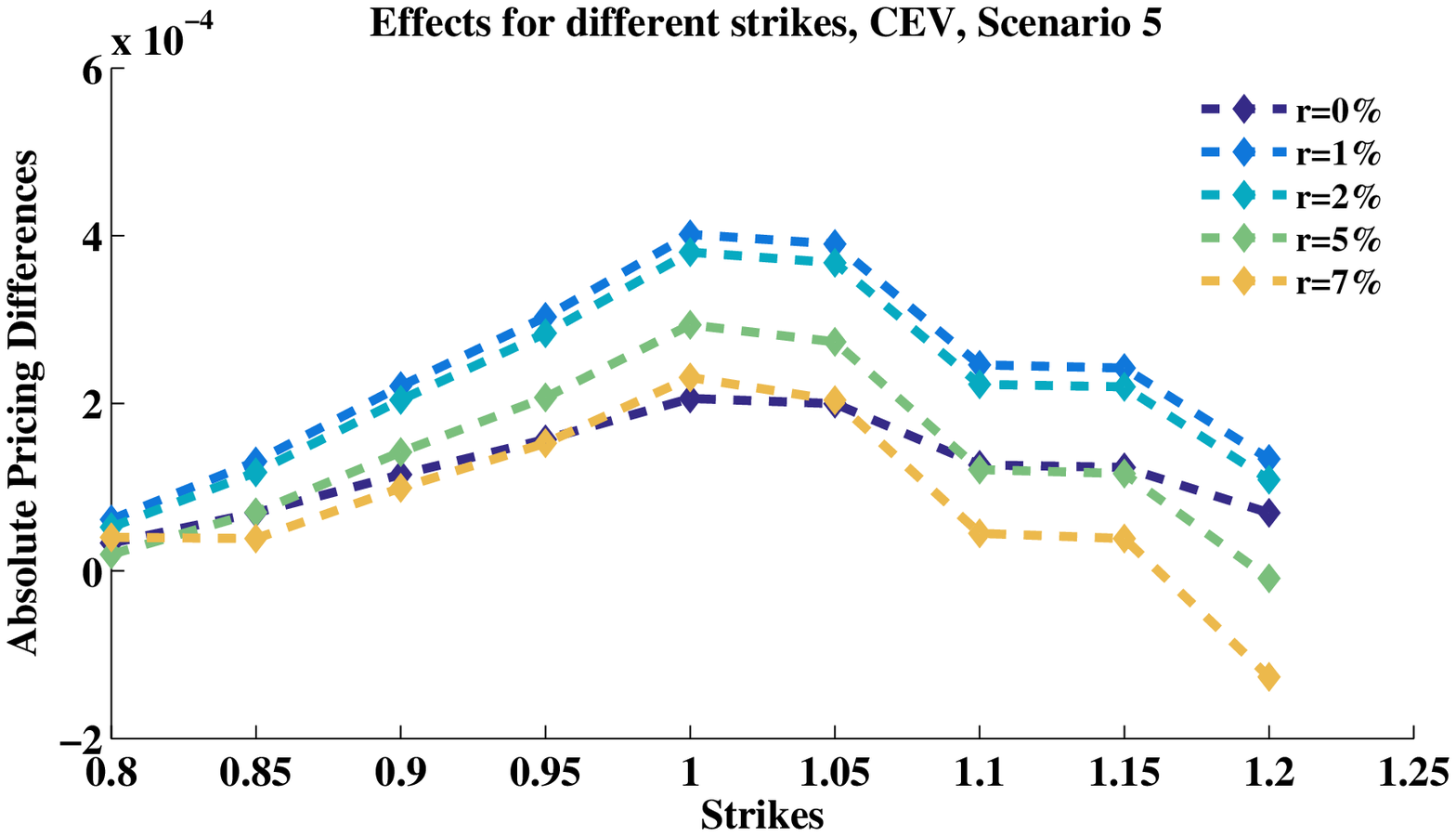} 
 \end{minipage} 
  \begin{minipage}{0.49\textwidth} 
     \centering 
     \includegraphics[width=\textwidth]{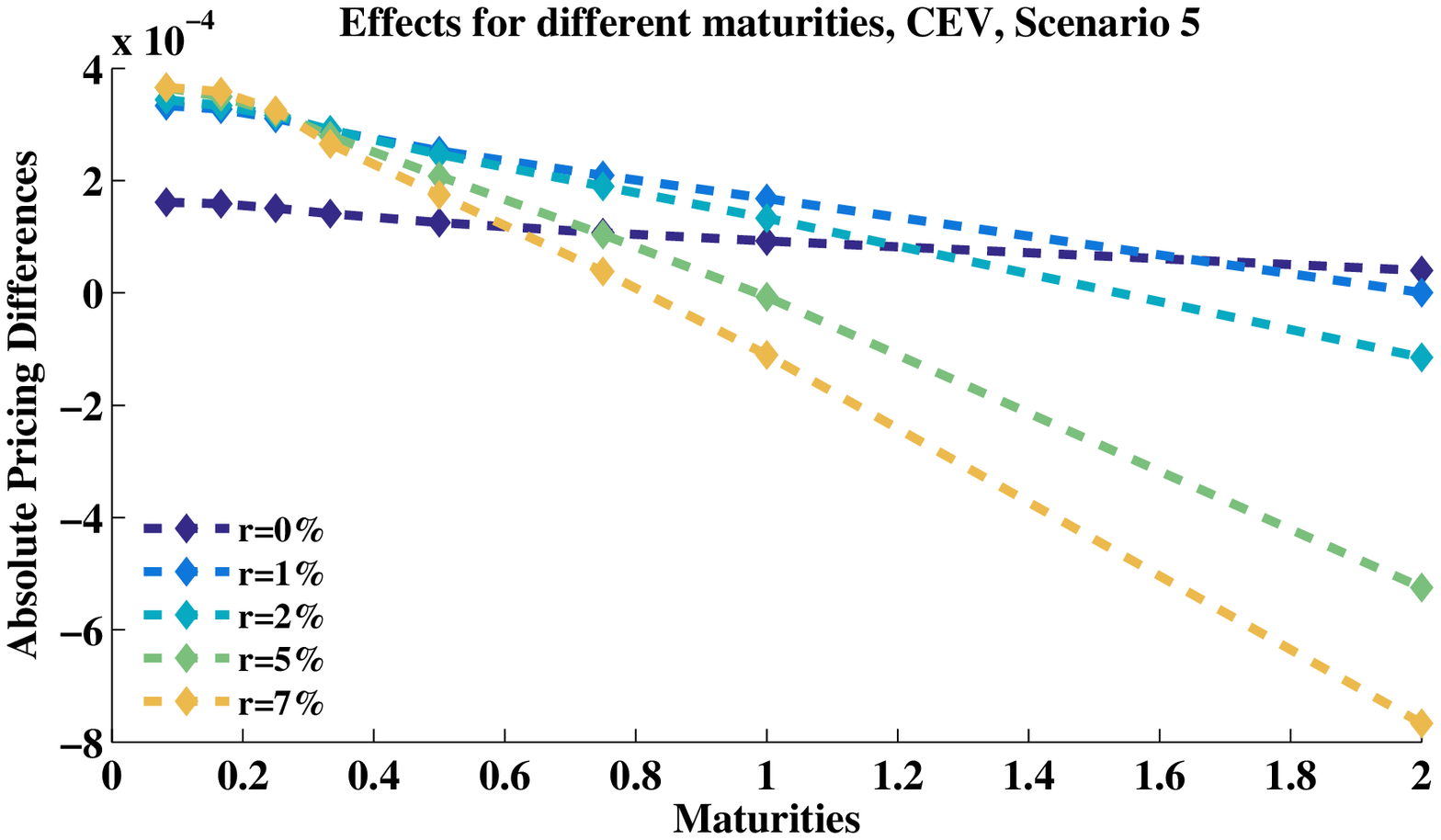} 
  \end{minipage} 
  \caption{De--Ameri\-cani\-za\-tion effects on pricing put options in the CEV model. As an example, the results are shown for $p_5$ for the average error between the de-Americanized and the European prices for each strike (left) and each maturity (right).}
  \label{Figure_Pricing_CEV} 
\end{figure} 
\begin{figure} 
  \begin{minipage}{0.49\textwidth} 
     \centering 
     \includegraphics[width=\textwidth]{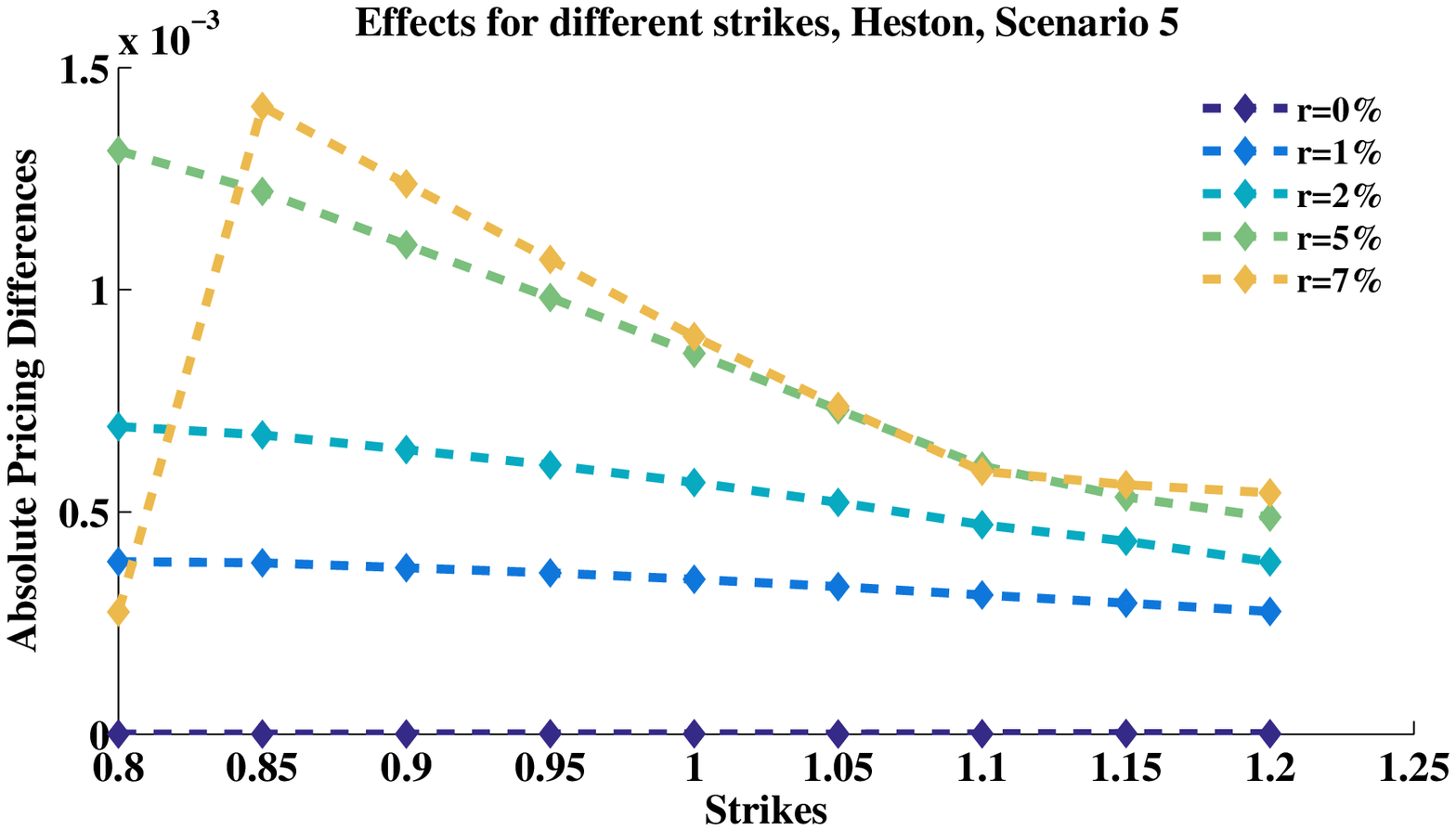} 
 \end{minipage} 
  \begin{minipage}{0.49\textwidth} 
     \centering 
     \includegraphics[width=\textwidth]{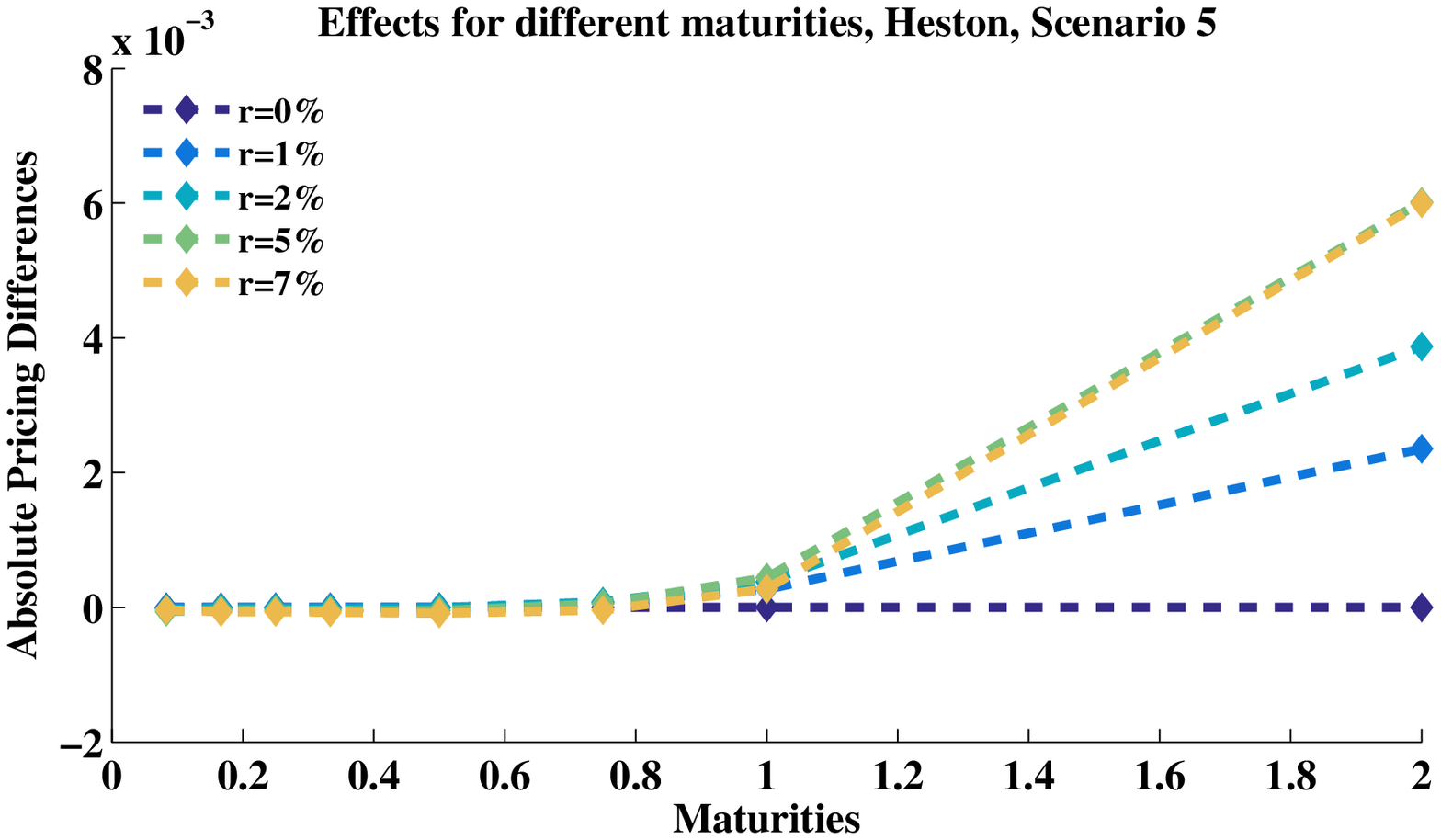} 
  \end{minipage} 
  \caption{De--Ameri\-cani\-za\-tion effects on pricing put options in the Heston model. As an example, the results are shown for $p_5$ for the average error between the de-Americanized and the European prices for each strike (left) and each maturity (right).}
  \label{Figure_Pricing_Heston} 
\end{figure} 

\begin{figure} 
  \begin{minipage}{0.49\textwidth} 
     \centering 
     \includegraphics[width=\textwidth]{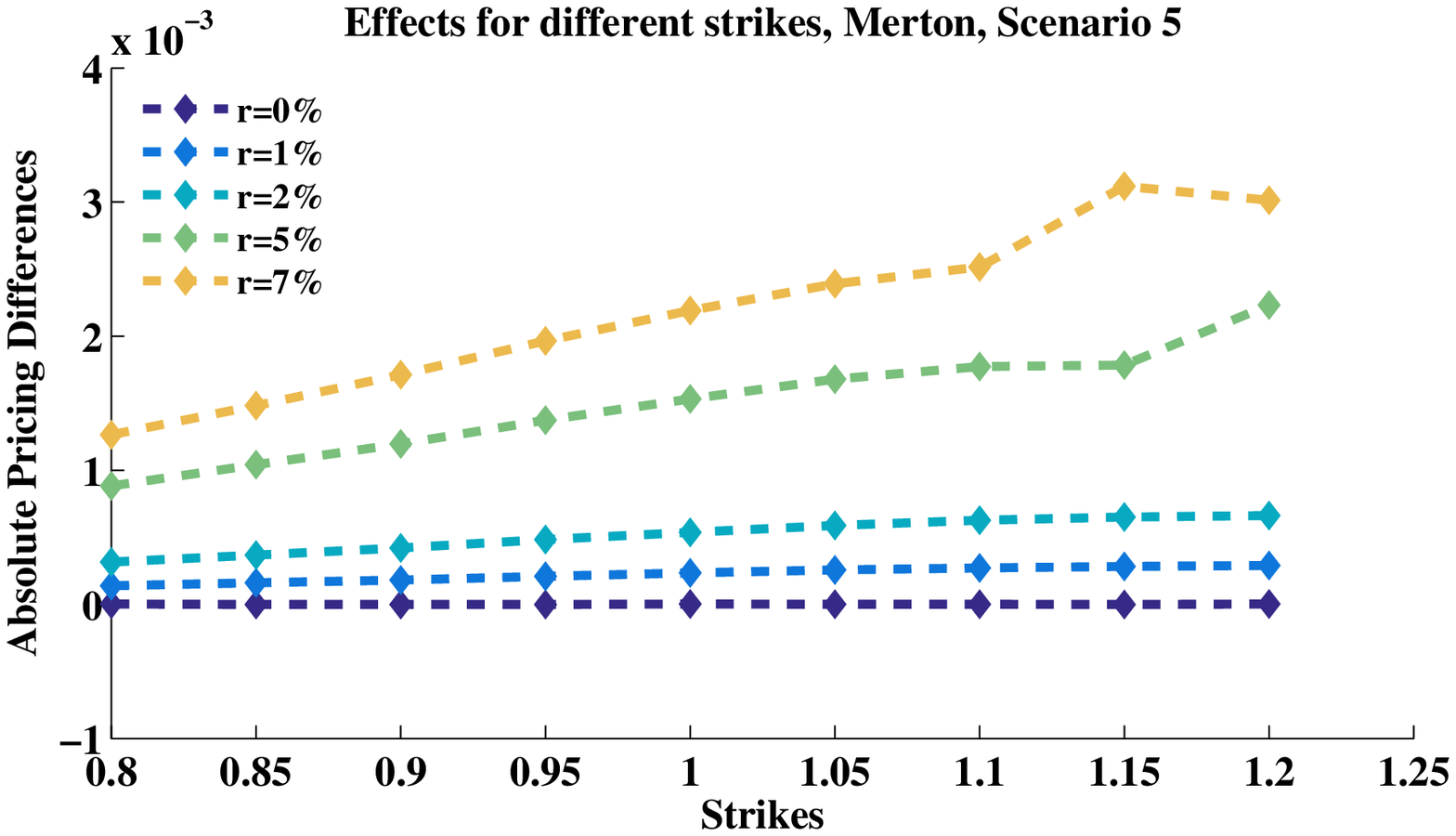} 
 \end{minipage} 
  \begin{minipage}{0.49\textwidth} 
     \centering 
     \includegraphics[width=\textwidth]{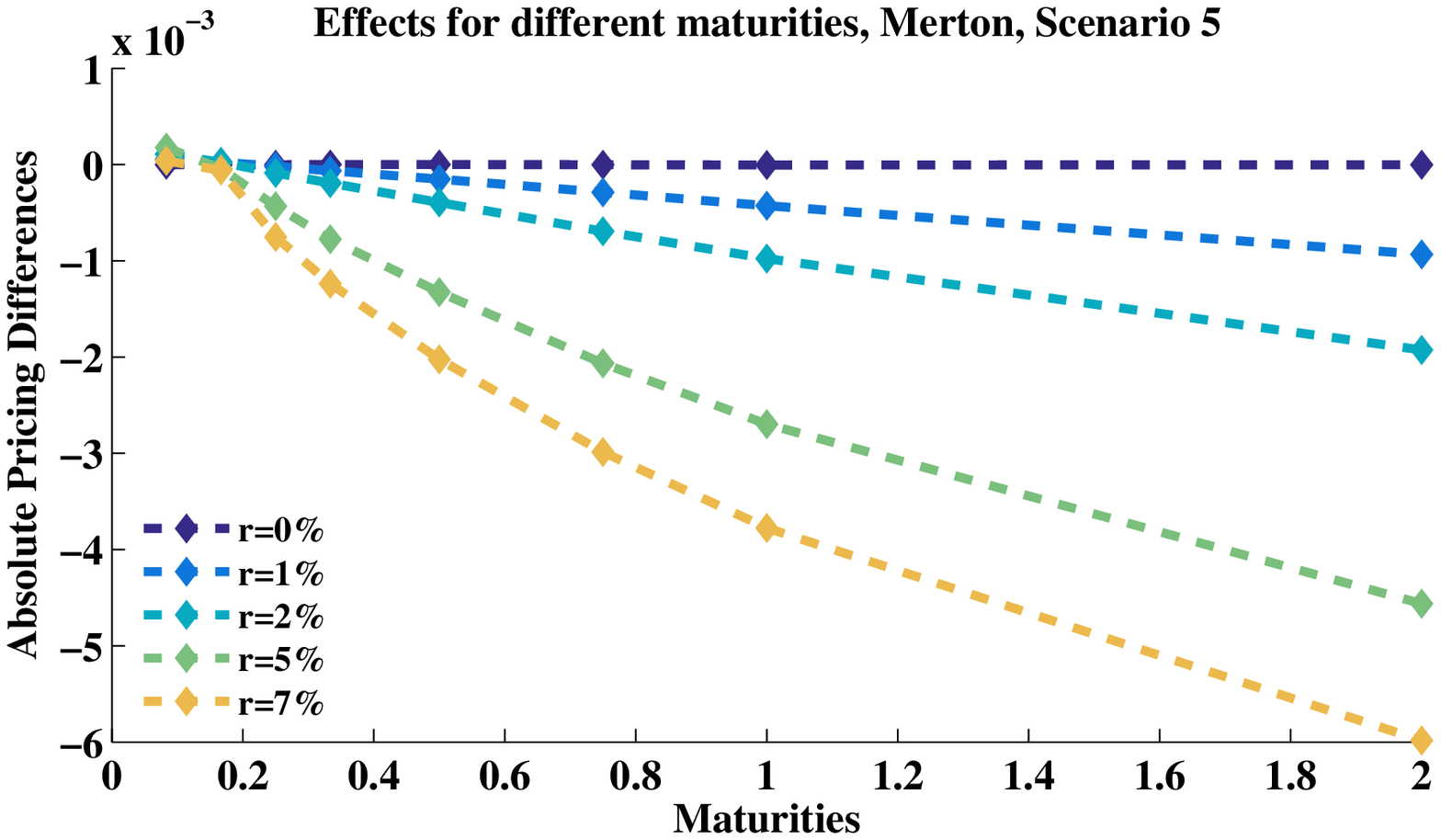} 
  \end{minipage} 
  \caption{De--Ameri\-cani\-za\-tion effects on pricing put options in the Merton model. As an example, the results are shown for $p_5$ for the average error between the de-Americanized and the European prices for each strike (left) and each maturity (right).}
  \label{Figure_Pricing_Merton} 
\end{figure} 

In general, for the CEV model, we observe that for short maturities the de--Americanized prices seem to overprice the European prices, whereas for longer maturities they seem to underprice the European options. We see that with increasing $\sigma$ and $\zeta$ parameters the maximal error increases and, overall, all parameter sets behave similarly. Focusing on the interest rate, we observe that for higher interest rates ($r=5\%$ and $r=7\%$) the average errors are higher or at least in a comparable region. Especially for higher interest rates, the maximal price has to be considered, because the higher the interest rate, the higher the probability that we did not consider some in-the-money options due to Remark \ref{Remark_Exercise_Region} and that the options with high prices are neglected in this setting. Thus we deduce that the error increases with increasing interest rates and that at high maturities the error increases for scenarios with higher volatility. For scenarios $p_1$, $p_2$ and $p_3$, we clearly observe that the effects of de-Americanization increase with increasing strikes. This means that for in-the-money options the de-Americanization effects tend to be stronger than for out-of-the-money options. This is consistent with the statements made by \cite{carr2010stock}. However, for higher interest rates, the average error seems to decrease with increasing strikes.%, but still the maximal error is increasing with increasing strikes and only decline for far in-the-money options. 
 This is due to the issue mentioned in Subsection \ref{Remark_Exercise_Region}. This effect occurred particularly strongly in the deep in-the-money region this effect occurred and the affected cases were neglected.\\

For the Heston model, we observe in general that the de--Ameri\-cani\-za\-tion error increases with increasing interest rates within each parameter setting. Additionally we see that for $r=0\%$ there is hardly any effect. By focusing on the scenarios with a higher volatility of volatility parameter ($p_4$ and $p_5$), we observe stronger de--Ameri\-cani\-za\-tion effects at short and long maturities ($T_1$ and $T_8$). For short maturities, the de--Americanized price is consistently lower than the corresponding European price throughout all scenarios,  whereas for high maturities the de--Americanized price is higher than the corresponding European price. To highlight the in-the-money and out-of-the-money issue in Figure \ref{Figure_Pricing_Heston}, note that in the Heston model the error is far smaller out-of-the-money  than deep in-the-money. However, the highest errors tend to occur in the at-the-money and slightly in-the-money regions.

For the Merton model, we observe similar, small effects for scenarios $p_1$ and $p_2$, i.e., the scenarios with low jump intensity, whereas for the scenarios with increasing jump intensity ($p_3$, $p_4$ and $p_5$) we observe stronger de--Ameri\-cani\-za\-tion effects, especially for increasing maturities and interest rates. Figure \ref{Figure_Pricing_Merton} additionally shows that for increasing strikes the effect of de--Ameri\-cani\-za\-tion increases slightly for lower interest rates and that for higher interest rates this error increases more strongly in the in-the-money region.

In addition to all of these de--Ameri\-cani\-za\-tion effects in absolute terms, we checked the magnitude of the relative error for the 1-year at-the-money put option, i.e.,  the absolute difference between the European and the de-Americanized price divided by the European price. In the CEV model, the average relative error for this option in all scenarios and interest rate settings was $0.1\%$ with a peak of $0.17\%$ at scenario $p_2$ with $r=1\%$. The average relative error for the Heston model was $0.17\%$ with a peak of $0.83\%$ in $p_2$ with $r=7\%$. In the Merton model, the average relative error of the at-the-money put option with maturity of one year was $0.18\%$ with a peak of $1.02\%$ at $p_5$ and $r=7\%$.

Summarizing the results, 
\begin{itemize}
\item de--Ameri\-cani\-za\-tion effects are sensitive to interest rate. The higher the interest rates, the higher the observable pricing differences,
\item de--Ameri\-cani\-za\-tion effects increase with increasing volatility and increasing maturities,
\item de--Ameri\-cani\-za\-tion effects tend to be stronger in-the-money, 
\item de--Ameri\-cani\-za\-tion effects increase with higher jump intensities.
\end{itemize}
Overall, in the settings mentioned above, we observe  a systematic effect caused by de--Ameri\-cani\-za\-tion. In the next step, we are interested in finding out whether these effects are also reflected in the calibration results.

\subsection{Effects of de--Ameri\-cani\-za\-tion on Calibration to Synthetic Data}
\label{sec:Calibration_Synthetic}

Here, we study the de-Americanization effect on synthetic American market data. To this effect, in a first step, we generate artificial market data using our FEM implementations of the three considered models. In a second step, we calibrate each model to the previously generated market data. This methodology allows us to disregard the noise affiliated with real market data is affiliated with and thus enables us to study the effect of de-Americanization exclusively.\\

Our artificial market data is specified as follows.
\begin{equation}
\begin{alignedat}{2}
S_0&=1\\
r&=7\%\\
T_1&=\frac{2}{12},\qquad &&K_1=\{0.95, 0.975, 1, 1.025, 1.05\},\\
T_2&=\frac{6}{12},\qquad &&K_2=\{0.9,0.925,K_1,1.075,1.1\},\\
T_3&=\frac{9}{12},\qquad &&K_3=\{0.85,0.875,K_2,1.125,1.15\},\\
T_4&=1,\qquad &&K_4=\{0.8,0.825,K_3,1.175,1.2\},\\
T_5&=2,\qquad &&K_5=\{0.75, 0.775, K_4,1.225,1.25\}.
\end{alignedat}
\label{eq:syndatasetting}
\end{equation}

As the data in~\eqref{eq:syndatasetting} shows, we consider a high-interest market and a set of maturities ranging from rather short-term American options with 2 months maturity to long-term American products with 2 years maturity. Each maturity $T_i$ is associated with a set of strikes $K_i$, $i\in\{1,\dots,5\}$. To analyze the effects of de--Ameri\-cani\-za\-tion on pricing, we price these options for the five parameter scenarios in Table \ref{Tab_Parameters_Pricing}. Regarding the calibration methodology, we have to make two choices. First, we have to decide which option types to include and, second, we need to determine the objective function.

Regarding the choice of options, we first consider only put options for the whole strike trajectory due to the fact that, in our setting of non-dividend paying underlyings, American and European calls coincide. Thus, we include in-the-money as well as out-of-the-money options. Second, motivated by the fact that the value of out-of-the-money options does not include any intrinsic %\todo{Link to Introduction}
value and is therefore supposed to better reflect the randomness of the market (as mentioned in \cite{carr2010stock}), we consider as a second approach that only includes out-of-the money puts and out-of-the money calls for the whole set of strikes and maturities. Consequently, in this second study, for each $i\in\{1,\dots,5\}$, we consider call option prices for maturities $T_i$ and strikes $k\in K_i$ with $k>1$ and put option prices for maturities $T_i$ and strikes $k\in K_i$ with $k<1$. At-the-money option data, i.e., options with strike $K=1$, is neglected.

Once the synthetic American market data has been generated, we create associated second synthetic market data by applying the de-Americanization routine using the binomial model.

Remark~\ref{Remark_Exercise_Region} and Figure~\ref{Toy_Example} describe situations in which the de-Americanization routine yields non-unique results. In the calibration to de-Americanized prices, we exclude options that cannot be de-Americanized uniquely as explained by the following remark.

\begin{remark}[Disregarding non-unique de--Americanized prices]
\label{rem:PutOptionsOutinCalib}
As outlined above, we artificially generate American market data for a calibration study on synthetic data. In a first step, we calibrate to the generated American prices directly. In a second step, we de--Americanize the option data and calibrate to the resulting quasi-European options. Here, we only consider option prices that admit a unique de--Americanized price. Consequently, all American put option prices that violate
	\begin{equation}
	\label{cond:DisregardAtCalib}
		P_{put}^{Am}>(K-S_0)^{+}\cdot (1+\delta),\qquad \text{ with } \delta=1 \%,
	\end{equation}
are not de--Americanized and thus are neglected in the second step.
\end{remark}

The second crucial assumption is the objective function. A variety of objective functions are proposed in the literature, e.g., the root mean square error, the average absolute error as a percentage of the mean price, the average absolute error, the average relative percentage error, absolute price differences, relative price differences, absolute implied volatilities, relative implied volatilities (see for example \cite{detlefsenhaerdle}, \cite{bauer}, \cite{fengler}, \cite{schoutens}).

We work directly with the observed prices and choose an objective function that considers prices, and due to the fact that the considered out-of-the-money option prices are rather small, we focus on absolute instead of relative differences. In the calibration, we take the absolute average squared error (aase) as the objective function and we minimize,
\begin{equation}
\text{aase}=\frac{1}{\#\text{options}}\sum_{\text{option}_k}\vert \text{Market price}_k-\text{Model price}_k\vert^2.\label{objectivefunction}
\end{equation}

The results of the calibration to synthetic data are summarized in Table \ref{Tab_Calibration_CEV} for the CEV and Merton models and in Table \ref{Tab_Calibration_Heston} for the Heston model for calibrating to put options and  calibrating to out-of-the-money options.
\begin{table}[h!]
\centering                                                                                                       
\begin{tabular}{ccl|ccc|ccccc}
&& &\multicolumn{3}{|c}{CEV}  &\multicolumn{5}{|c}{Merton}\\
                                                   &                                                 &      & $\sigma$           &$\zeta$                & aase & $\sigma$           &$\alpha$          & $\beta$           & $\lambda$                    & aase     \\
\multirow{5}{*}{$p_1$}                             &                                                 & true &   0.2      & 0.5   &---           &  0.20     &  -0.01  & 0.01      &  1    &  ---      \\
                                                   & \multirow{2}{*}{Put}                            & Am   & 0.1977      & 0.4962    & 7.74e-6   &  0.20    &  0.01   &    0.05   &   0.29  &    1.07e-10                \\
                                                   &                                                 & DeAm &   0.1894     &  0.4501   & 8.35e-5   &  0.20    &  -0.06   & 0.03      &  0.37    &8.37e-8                       \\
                                                   & \multirow{2}{*}{oom}                            & Am   &   0.1997     &0.4996     &  4.52e-6  &  0.20    &   0.00  &   0.05    &   0.32  &   1.35e-10                             \\
                                                   &                                                 & DeAm &  0.1793    &     0.9609&     2.75e-4   &   0.20  &  -0.02   &  0.04     &0.30      &     3.80e-9               \\
												   \hline
\multirow{5}{*}{$p_2$}                             &                                                 & true &   0.275     &  0.6  &  ---   & 0.15      &-0.05    & 0.05      &2      & ---     \\
                                                   & \multirow{2}{*}{Put}                            & Am   &    0.2740     &   0.6004  & 4.98e-7 &   0.15   &   -0.06  &   0.05    &   1.55  &  2.69e-11      \\
                                                   &                                                 & DeAm &    0.2607    &   0.7539  & 2.04e-6    &     0.14     &   -0.10  &  0.01     &  1.31    & 1.63e-7                  \\
                                                   & \multirow{2}{*}{oom}                            & Am   &    0.2736      & 0.5978    &  1.91e-6 &    0.16    & -0.10    & 0.04      &  0.66   & 1.91e-10                    \\
                                                   &                                                 & DeAm &    0.2484     &  0.5367   &  1.10e-5  &    0.15     & -0.11    &  0.03     & 0.74     &   2.22e-8          \\
												   \hline
\multirow{5}{*}{$p_3$}                             &                                                 & true & 0.35     &  0.7  &  ---   &   0.20  &-0.10    &0.10       &3      &        ---     \\
                                                   & \multirow{2}{*}{Put}                            & Am   &    0.3515     & 0.7576    & 4.37e-5  &    0.22  & -0.19    &    0.07   & 1.31    &         8.83e-10         \\
                                                   &                                                 & DeAm &     0.3272    &  0.8528   &   1.92e-4   &     0.16    & -0.05    &      0.11 &   5.04   &    3.12e-8    \\
                                                   & \multirow{2}{*}{oom}                            & Am   &     0.3476      &0.6984     &  1.00e-4   &     0.22   & -0.19    &   0.07    &1.32     & 5.47e-10    \\
                                                   &                                                 & DeAm &        0.3141     & 0.5527    &   5.99e-4   &      0.19   &  -0.17   &  0.09     &    1.88  &   3.33e-7             \\
												   \hline
\multirow{5}{*}{$p_4$}                             &                                                 & true &  0.425     & 0.8   &    ---      &  0.10       &-0.10    & 0.20      &5      &  ---      \\
                                                   & \multirow{2}{*}{Put}                            & Am   &    0.4258   &  0.7898   & 1.53e-6   &    0.10  &   -0.10  &    0.20   &   5.00  &    1.22e-13             \\
                                                   &                                                 & DeAm &     0.3942   &  0.8755   &  7.30e-6  &   0.10      & -0.10    & 0.20      & 5.00    &  6.29e-16              \\
                                                   & \multirow{2}{*}{oom}                            & Am   &   0.4262     & 0.7966    & 3.27e-6  &   0.09   & -0.09    &  0.21     &    4.90  &      2.19e-7  \\
                                                   &                                                 & DeAm &   0.3801     & 0.6009    & 1.96e-5  &   0.15     &  -0.14   &  0.22     &  3.63    &     6.53e-7           \\
												   \hline
\multirow{5}{*}{$p_5$}                             &                                                 & true &  0.5     & 0.9   &  ---   &  0.10  &-0.15    &0.20       & 7     &  ---        \\
                                                   & \multirow{2}{*}{Put}                            & Am   &    0.4982     &  0.9036   & 1.53e-6    &   0.10   & -0.15    &   0.20    & 7.00    & 8.96e-13        \\
                                                   &                                                 & DeAm &    0.4570      &  0.9192   &  1.02e-5   &   0.05    & -0.11    &    0.21   &   7.95   &   1.73e-7      \\
                                                   & \multirow{2}{*}{oom}                            & Am   &    0.4986     & 0.9036    &  4.02e-6  &  0.10      & -0.15    &   0.20    & 7.00    & 5.00e-13            \\
                                                   &                                                 & DeAm &    0.4430    & 0.6549    &   2.38e-5   &   0.05     &  -0.20   &       0.23&   5.08   &      1.67e-6             \\
												   \hline                      
\end{tabular}
\caption{Calibration results for calibrating to put options only and out-of-the-money options for the CEV model (left) and Merton model (right). Due to the effect of non-unique de-Americanization results, for the CEV model, some option prices have been neglected in the calibration to de-Americanized option data, as Remark~\ref{rem:PutOptionsOutinCalib} explains. In scenarios  $p_1$ to $p_5$, 5, 5, 10, 10 and 10 prices were excluded in the calibration to put options only. In scenarios  $p_1$ and $p_2$ of the Merton model, 5 prices have been excluded in the calibration to put options only.}
\label{Tab_Calibration_CEV}
\end{table}

\begin{table}[h!]
\centering                                                                                                       
\begin{tabular}{ccl|cccccc}
 &\multicolumn{8}{c}{Heston}  \\
                                                   &                                                 &      & $\xi$           &$\rho$          & $\gamma$           & $\kappa$          & $v_0$           & aase  \\
\multirow{5}{*}{$p_1$}                             &                                                 & true &    0.1     & -0.2   &  0.07     &  0.1    &  0.07    &  ---         \\
                                                   & \multirow{2}{*}{Put}                            & Am   &  0.1002   &-0.1999     &0.07       & 0.1026    &0.07      & 1.43e-13                   \\
                                                   &                                                 & DeAm &   0.1    & -0.4839    &   0.0651    &  0.5144    &  0.0695   &  1.50e-7                      \\
                                                   & \multirow{2}{*}{oom}                            & Am   &  0.1006   &  -0.1987   & 0.07      & 0.1049    & 0.07     & 4.73e-13                           \\
                                                   &                                                 & DeAm &  0.1  &  -0.1949   &   0.0665    & 0.2292     &  0.07   & 1.32e-8                   \\
												   \hline
\multirow{5}{*}{$p_2$}                             &                                                 & true &  0.25      &  -0.5  &  0.1     & 0.4     &  0.1    & ---    \\
                                                   & \multirow{2}{*}{Put}                            & Am   &   0.25    &   -0.5  &    0.1   &      0.4 &    0.1  & 6.47e-23      \\
                                                   &                                                 & DeAm &     0.2667  &   -0.5067  &   0.0978    &0.4374      &   0.0992  &  3.28e-8             \\
                                                   & \multirow{2}{*}{oom}                            & Am   &   0.25      & -0.5    &   0.1    &   0.4  &   0.1   &  2.99e-17                    \\
                                                   &                                                 & DeAm &     0.2199     &  -0.5   &    0.0885   & 0.1618     & 0.1    &   5.76e-9             \\
												   \hline
\multirow{5}{*}{$p_3$}                             &                                                 & true &  0.4     & -0.5   &  0.15     &   0.6   & 0.15     &  ---      \\
                                                   & \multirow{2}{*}{Put}                            & Am   &    0.4  &    -0.5 &       0.15&  0.6   &     0.15 &  7.15e-16           \\
                                                   &                                                 & DeAm &     0.4684 &  -0.437   &    0.1544   &    0.6806  & 0.1495    &  6.04e-9        \\
                                                   & \multirow{2}{*}{oom}                            & Am   &     0.4     &   -0.5  &   0.15    &    0.6 &    0.15  &7.03e-18      \\
                                                   &                                                 & DeAm &       0.3970  &   -0.5  &    0.1517   &  0.5413    &0.1494     &4.68e-9             \\
												   \hline
\multirow{5}{*}{$p_4$}                             &                                                 & true & 0.55  	  & -0.45   &  0.2     &  1.2    &  0.2    &  ---        \\
                                                   & \multirow{2}{*}{Put}                            & Am   &   0.55 		&    -0.45 &      0.2 &    1.2 &    0.2  &   1.44e-17             \\
                                                   &                                                 & DeAm &    0.5773	&     -0.4298&      0.2046 &  1.1975    &  0.1986   &  2.86e-9          \\
                                                   & \multirow{2}{*}{oom}                            & Am   &  0.55   & -0.45    &  0.2     &   1.2  &  0.2    &  8.41e-22    \\
                                                   &                                                 & DeAm &  0.5625   &  -0.4369   &  0.2035     & 1.2198     &  0.1988   & 4.50e-9                \\
												   \hline
\multirow{5}{*}{$p_5$}                             &                                                 & true &  0.7      &  -0.8  &    0.3   &   1.4   & 0.3     & ---       \\
                                                   & \multirow{2}{*}{Put}                            & Am   &    0.7   &     -0.8&   0.3    &  1.4   &   0.3   &  1.68e-17               \\
                                                   &                                                 & DeAm &     0.8504 &  -0.7057   &  0.3136     &  1.5832    & 0.2993    &  1.31e-8          \\
                                                   & \multirow{2}{*}{oom}                            & Am   &  0.7    &  -0.8   &    0.3   &   1.4  & 0.3     &  8.58e-23               \\
                                                   &                                                 & DeAm &   0.7763    &  -0.7602   & 0.3073      &1.7021      & 0.2979    & 2.34e-8            \\
												   \hline                      
\end{tabular}
\caption{Heston model: Calibration results for calibrating to put options only and out-of-the-money options.}
\label{Tab_Calibration_Heston}
\end{table}

Overall, we see that for the CEV model the parameters match well when calibrating to American options. When calibrating to de-Americanized prices however, the volatility parameter $\sigma$ is underestimated in most cases and this underestimation is counterbalanced by an overestimated $\zeta$-value.

Focusing on the Heston model, we observe that in every calibration to American options the parameters are matched better than in the corresponding calibration to de--Ameri\-cani\-zed data. We clearly see that the three parameters $\gamma$, $\kappa$ and $v_0$ are matched, but the remaining two parameters $\xi$ and $\rho$ show different results. When calibrating put options, as the volatility of volatility parameter $\xi$ increases,the calibrated de--Aameri\-cani\-zed parameter overestimates the true parameter. When calibrating out-of-the-money options, we observe that $\xi$ is underestimated for lower $\xi$ values and overestimated for higher $\xi$ values. Regarding $\rho$, as the volatility of volatility is increased, the de--Ameri\-cani\-zed parameter tends to underestimate the $\rho$ value. Later, by focusing on pricing exotic options, we will see whether these two contrary effects cancel each other or lead to different exotic option prices.\\
For the Merton model, we observe that when calibrating American options $\sigma$ is matched fairly accurately in most cases. For the other 3 parameters, we observe that whenever the jump intensity $\lambda$ is underestimated, the corresponding mean $\alpha$ and standard deviation $\beta$ of the jump are adjusted accordingly. Similar observations can be made for calibration to de--Ameri\-cani\-zed prices. Here, especially for calibrating  out-of-the-money values in $p_4$ and $p_5$, we observe that the $\sigma$ value is also not matched.

Summarizing the results, we observe that when calibrating de-Americanized synthetic data in a high-interest-rate environment for the continuous CEV and Heston models, the main parameters driving the volatility of the underlying, $\zeta$ and $\sigma$ (CEV) and $\xi$ and $\rho$ (Heston), are often not exactly matched. In these cases, the application of the binomial tree is not able to capture the volatility of the underlying exactly. For the jump model (Merton), we observe that due to the de-Americanization the jump intensity is (more strongly) mismatched than when directly calibrating to American options and in these cases the wrongly calibrated jump intensity parameter may be compensated by adjusting the other model parameters accordingly.
\subsection{Effects of de--Ameri\-cani\-za\-tion on Calibration to Market Data}
%\label{sec:Calibration_Market}
In this section, we investigate the effects of de--Ameri\-cani\-za\-tion by calibrating market data. The single stock of our choice is Google as an example of a non-dividend-paying stock. Table \ref{GoogleDaten} gives an overview of the processed data for the calibration procedure. In total we obtained a data set containing 482 options, with slightly more puts than calls. The risk-free interest rate for maturities of 1 month, 3 months, 6 months, 1 year and 2 years are taken from the U.S. Department of the Treasury\footnote{http://www.treasury.gov/resource-center/data-chart-center/interest-rates/Pages/TextView.aspx?data=yield} and have been linearly interpolated whenever necessary. 

\begin{table}[h]
\begin{center}
%\tbl{Processed Google option data for $t_0=\text{02.02.2015}$, $S_0=523.76$}{
 \begin{tabular}{c c c c c}\toprule
 	%\hline
  &Maturity&T&\# of options&r\\
  %\hline
  $T_1$&27.02.2015&0.07&47&0.0001\\
   % \hline
  $T_2$&20.03.2015&0.13&49&0.000129508\\
   % \hline
  $T_3$&17.04.2015&0.20&52&0.00017541\\
   % \hline
  $T_4$&19.06.2015&0.38&87&0.00046087\\
    %\hline
  $T_5$&18.09.2015&0.62&98&0.000955435\\
   % \hline
  $T_6$&15.01.2016&0.95&101&0.001602174\\
   % \hline
  $T_7$&20.01.2017&1.97&48&0.004786339\\
 % \hline
 \end{tabular}%}
 \caption{Processed Google option data for $t_0=\text{02.02.2015}$, $S_0=523.76$}
 \label{GoogleDaten}
 \end{center}
\end{table}

In order to structure the available data, we follow the methodology applied for the volatility index (VIX) by the Chicago board of exchange (\cite{vix}):
\begin{itemize}
\item Only out-of-the-money put and call options are used
\item The midpoint of the bid-ask spread for each option with strike $K_i$ is considered 
\item Only options with non-zero bid prices are considered
\item Once two puts with consecutive strike prices are found to have zero bid
prices, no puts with lower strikes are considered for inclusion (same for calls)
\end{itemize}
Basically, by this selection procedure, we only select out-of-the-money options that (due to non-zero bid prices) can be considered as liquid. In general, an option price consists of two components reflecting the time value and the intrinsic value of the option. By focusing on out-of-the-money options, the intrinsic value effects are mostly neglected and the highest option price will be at-the-money. Additionally, the highest market activity is in the at-the-money and slightly out-of-the-money region. 
The calibration results are summarized in Table \ref{Tab_Calibration_Google}. %\todo{Note: CEV result not final}

\begin{table}[]
\centering
\begin{tabular}{cl|cccccc}
                             &      & \multicolumn{6}{c}{CEV}                                                        \\
                             &      & \multicolumn{2}{c}{$\sigma$}    & \multicolumn{2}{c}{$\zeta$}    & \multicolumn{2}{c}{aase} \\
\multirow{2}{*}{Google Data} & Am   & \multicolumn{2}{c}{0.25} & \multicolumn{2}{c}{0.98} & \multicolumn{2}{c}{3}    \\
                             & DeAm & \multicolumn{2}{c}{0.25} & \multicolumn{2}{c}{0.97} & \multicolumn{2}{c}{3.32} \\ \hline
                             &      & \multicolumn{6}{c}{Heston}                                                     \\
                             &      & $\xi$           &$\rho$          & $\gamma$           & $\kappa$          & $v_0$           & aase       \\

\multirow{2}{*}{Google Data} & Am   & 0.2290      & -0.6854    & 0.0585      & 4.3186     & 0.0651      & 0.8464     \\
                             & DeAm & 0.2245      & -0.6941    & 0.0586      & 4.1433     & 0.0647      & 0.8319     \\ \hline
                             &      & \multicolumn{6}{c}{Merton}                                                     \\
                             &      & $\sigma$           & $\alpha$          & $\beta$           & $\lambda$          & \multicolumn{2}{c}{aase} \\
\multirow{2}{*}{Google Data} & Am   &   0.1936          &  -0.2000          &            0.2194 &     0.2935       &          \multicolumn{2}{c}{0.5813}            \\
                             & DeAm &   0.1935          &    -0.2000        &            0.2133 &     0.3014       &            \multicolumn{2}{c}{0.6035}             \\ \hline
\end{tabular}
\caption{Calibration results for calibrating to out-of-the-money put and call options combined.}
\label{Tab_Calibration_Google}
\end{table}

Here, we observe hardly any differences in the parameters. This is in line with our observations in Section \ref{sec:Pricing} for low-interest-rate environments. In these settings, American and European puts almost coincide and, thus, there will hardly be any difference in the prices and it is only natural that we observe very similar calibration results. Interestingly, the \textit{aase} value obtained by calibrating the Heston model is slightly lower when calibrating de-Americanized options than American options.

\subsection{Effects of de--Ameri\-cani\-za\-tion in pricing exotic options}

Plain vanilla options are traded liquidly in the market and are used to calibrate models. Financial institutions use these calibrated models to price more exotic products such as barrier and lookback options. In this subsection, we analyze which influences different calibration results have on the accuracy of exotic option prices. 

We analyze a down-and-out call option and a lookback option and hence translate differences in the calibrated model parameters into quantitative prices. The payoff $\widetilde{\mathcal{H}}_{DOC}(S(T))$ of a down-and-out call option with barrier $B$ is given by
\begin{align}
\widetilde{\mathcal{H}}_{DOC}(S(T))=(S(T)-K)^{+}\cdot\mathbbm{1}_{\min_{t\le T}S(t)\ge B}. 
\end{align}
In our setting, we set $S_0=100$, the barrier $B$ to 90\% of the initial underlying value and the strike $K$ to 105\% of the underlying value. For the lookback option, we choose the same strike and the payoff $\widetilde{\mathcal{H}}_{Lookback}(S(T))$ is
\begin{align}
\widetilde{\mathcal{H}}_{Lookback}(S(T))=(\bar{S}(T)-K)^{+},\quad with\ \bar{S}(T)=\max_{t\le T}S(t).
\end{align}

We price these two exotic options for the calibrated parameters in Tables \ref{Tab_Calibration_CEV}, \ref{Tab_Calibration_Heston} and \ref{Tab_Calibration_Google} via a standard Monte Carlo method with $10^6$ sample paths, 400 time steps per year and antithetic variates as variance reduction technique. The results are shown in the following Table \ref{Tab_Exotics}.
\begin{table}[h!]
\centering
\begin{tabular}{cclcc|cc|cc}
                                                   &                                                 &      & \multicolumn{2}{c}{CEV}                & \multicolumn{2}{c}{Heston}             & \multicolumn{2}{c}{Merton}                                 \\
                                                   &                                                 &      & \multicolumn{1}{c}{barrier} & lookback & \multicolumn{1}{c}{barrier} & lookback & \multicolumn{1}{c}{barrier} & \multicolumn{1}{c}{lookback} \\
\multirow{5}{*}{$p_1$}                             &                                                 & true & \multicolumn{1}{c}{9.93}        & 10.11         & \multicolumn{1}{c}{8.75}        &   8.80       & \multicolumn{1}{c}{3.86}        & \multicolumn{1}{c}{25.70}         \\
                                                   & \multirow{2}{*}{Put}                            & Am   &                    9.93         & 10.10         &               8.74              &     8.78     &          3.92                   &           25.86                   \\
                                                   &                                                 & DeAm &                9.93            &  10.01        &                 4.54            &    4.86      &              3.98               &                26.09             \\
                                                   & \multirow{2}{*}{oom}                            & Am   &               9.93              &  10.11        &                8.73             &   8.78       &                  3.93           &              25.86                \\
                                                   &                                                 & DeAm &             10.13                &  11.13        &                 8.27            &    8.37      &                      3.88       &                   25.77           \\
												   \hline
\multirow{5}{*}{$p_2$}                             &                                                 & true & \multicolumn{1}{c}{10.13}        &   11.14       & \multicolumn{1}{c}{2.28}        &   2.73       & \multicolumn{1}{c}{2.57}        & \multicolumn{1}{c}{22.97}         \\
                                                   & \multirow{2}{*}{Put}                            & Am   &          10.14                   &    11.14      &              2.28               &   2.73       &            2.49                 &                22.73              \\
                                                   &                                                 & DeAm &           11.47                  &    14.59      &                  2.07           &    2.54      &                2.56             &                     22.94         \\
                                                   & \multirow{2}{*}{oom}                            & Am   &           10.12                  &    11.10      &             2.27                &    2.73      &                    2.64         &                   23.24           \\
                                                   &                                                 & DeAm &           9.95                  &    10.40      &               4.32              &    4.57      &                        2.43     &                        22.75      \\
												   \hline
\multirow{5}{*}{$p_3$}                             &                                                 & true & \multicolumn{1}{c}{11.60}        &    14.93      & \multicolumn{1}{c}{1.15}        &   1.80       & \multicolumn{1}{c}{6.65}        & \multicolumn{1}{c}{37.35}         \\
                                                   & \multirow{2}{*}{Put}                            & Am   &         12.48                    &    17.83      &                1.15             &   1.80       &             6.78                &                      37.88        \\
                                                   &                                                 & DeAm &          13.53                   &    23.85      &                1.86             &    2.77      &                 6.50            &                           36.51   \\
                                                   & \multirow{2}{*}{oom}                            & Am   &          11.56                   &    14.81      &              1.14               &    1.80      &                     6.76        &                       37.85       \\
                                                   &                                                 & DeAm &           10.07                  &    10.91      &              1.40               &    2.08      &                         6.17    &                   37.63           \\
												   \hline
\multirow{5}{*}{$p_4$}                             &                                                 & true & \multicolumn{1}{c}{13.56}        &     24.08     & \multicolumn{1}{c}{0.83}        &   1.86       & \multicolumn{1}{c}{10.17}        & \multicolumn{1}{c}{54.99}         \\
                                                   & \multirow{2}{*}{Put}                            & Am   &    13.54                         &     24.00     &             0.83                &   1.86       &              10.14               &                55.00              \\
                                                   &                                                 & DeAm &    14.14                         &      30.87    &               1.04              &    2.21      &                   10.11          &                     54.98         \\
                                                   & \multirow{2}{*}{oom}                            & Am   &      13.51                       &      23.81    &            0.83                 &  1.86        &                10.18             &                          54.99    \\
                                                   &                                                 & DeAm &     10.60                        &     12.37     &            0.94                 &   2.05       &                     9.42        &                 55.54             \\
												   \hline
\multirow{5}{*}{$p_5$}                             &                                                 & true & \multicolumn{1}{c}{14.76}        &     43.40     & \multicolumn{1}{c}{0.02}        &     0.52     & \multicolumn{1}{c}{15.63}        & \multicolumn{1}{c}{76.15}         \\
                                                   & \multirow{2}{*}{Put}                            & Am   &           14.80                  &      43.99    &               0.02              &     0.52     &                 15.48            &                  75.69            \\
                                                   &                                                 & DeAm &         14.78                    &     43.50     &               0.03              &      0.67    &                      15.99       &                       75.97       \\
                                                   & \multirow{2}{*}{oom}                            & Am   &          14.74                   &      42.81    &             0.02                &     0.52     &                           15.58  &                      76.06        \\
                                                   &                                                 & DeAm &          11.68                   &      15.13    &               0.02              &     0.56     &                    13.98         &                           75.01   \\
												   \hline
\multicolumn{2}{c}{\multirow{2}{*}{\begin{tabular}[c]{@{}c@{}}Google \\ data\end{tabular}}} & Am   &                 14.21            &    32.10      &               0.69              &     1.51     &         4.03                    &28.91                              \\
\multicolumn{2}{c}{}                                                                                 & DeAm &           14.12                  &     30.70     &             0.67                &   1.47       &        4.03                     &        28.93                     
\end{tabular}
\caption{Overview of prices for barrier and lookback options}
\label{Tab_Exotics}
\end{table}

Overall, we observe a different picture in each of the three models. In $p_1$ and $p_2$ of the CEV model, the scenarios with relatively small volatility, we do not see any differences. Thus, in cases with small volatility and medium elasticity of variance $\zeta$, de--Ameri\-cani\-za\-tion seems to work. In the other scenarios, we observe that the calibration of de--Americanized prices leads to higher exotic option prices if we calibrate put options only and lower exotic option prices if we calibrate out-of-the-money options. Thus, the typically lower calibrated $\sigma$-value in combination with an increased $\zeta$-value obtained by calibrating de-Americanized options has this effect on the pricing of exotic options.

For the Heston model, we see that in the cases where the calibration of de--Americanized data led to different $\xi$ and $\rho$ values there are differences in the exotic option prices. More precisely, in all these cases, the corresponding barrier and lookback prices are too high. This means that de--Ameri\-cani\-za\-tion causes an systematic overpricing of exotic options.

For the Merton model, we see rather small differences for lookback options, but more interestingly, we observe differences for the down-and-out barrier option. This reflects the fact that the differently calibrated jump intensities and accordingly adjusted means and standard deviations of the jumps can buffer de--Ameri\-cani\-za\-tion effects over paths where the option cannot vanish like in the barrier option case.
 
In high-interest-rate environments, the de--Ameri\-cani\-za\-tion methodology leads to different exotic options prices in the CEV model when the volatility of the underlying is higher. When using only put options, the exotic option prices tend to be higher; when considering out-of-the-money options, the exotic option prices tend to be lower. In the Heston model, we observe a similar picture as in the CEV model, however here no general statement holds between higher and lower exotic option prices. Regarding the Merton model, the differences in the exotic option prices are more visible when considering the down-and-out barrier option.

\section{Conclusion}
In this paper, we investigate the de--Ameri\-cani\-za\-tion methodology by performing accuracy studies to compare the empirical results of this approach to those obtained by solving related variational inequalities for local volatility, stochastic volatility and jump diffusion models. 
On page \pageref{Key_Questions}, we propose key questions regarding the robustness of the de--Ameri\-cani\-za\-tion methodology with regard to changes in the (i) interest rates, (ii) maturities; (iii) in-the-money and out-of-the-money options and (iv) continuous and discontinuous models with increasing jump intensities.

First, focusing on pricing, we observe that de--Ameri\-cani\-za\-tion causes larger errors (i) for higher interest rates, (ii) for higher maturities, (iii) in the in-the-money region and (iv) for continuous models in scenarios with higher volatility and/or correlation, as well as in jump models for higher jump intensities. Second, we investigate model calibration to synthetic data for a specified set of maturities and strikes in a high-interest-rate environment. Numerically, we observe noticeable differences in the calibration results of the de--Ameri\-cani\-za\-tion methodology compared to the benchmark. For continuous models, the main difference lies in the resulting volatility parameters. For the jump model, the jump intensity is underestimated by the de--Ameri\-cani\-za\-tion methodology, especially in settings with high jump intensities, whereas the mean and the standard deviation of the jumps are overestimated. When calibrating Google market data, hardly any differences occur, which can be explained by the very low-interest-rate environment. This is in line with the results for question (i). 

In a final step, we investigate the effects of de--Ameri\-cani\-za\-tion in the model calibration on pricing exotic options. Here, exotic option prices play the role of a measure of the distance between differently calibrated model parameters. In most cases, we observe that exotic option prices are reasonably close  to the benchmark prices. However, we observe severe outliers for all investigated models. We find scenarios in which the exotic option prices differ by roughly 50\% in the CEV model ($p_4$) and the Heston model  ($p_1$) and by roughly 10\% in the Merton model ($p_5$), see Table \ref{Tab_Exotics}. Whereas in the CEV model and the Merton model the differences tend to be higher when calibrating to out-of-the-money options instead of only to put options, in the Heston model we have a mixed picture for different scenarios.

In a nutshell, the methodological risk of de--Ameri\-cani\-za\-tion critically depends on the interest rate environment.
For low-interest-environments, the errors caused by de--Ameri\-cani\-za\-tion are negligibly small and the de--Ameri\-cani\-za\-tion methodology can be employed when fast run-times are preferred.
For higher-interest-rate environments, however, de--Ameri\-cani\-za\-tion leads to uncontrollable outliers. Therefore, and since the de--Ameri\-cani\-za\-tion methodology does not provide an error control, we strongly recommend applying a pricing method in the calibration that is certified by error estimators.

We leave the inclusion of dividends for future research. The numerical results for the jump diffusion model and the sensitivity to interest rates indicate that discrete and continuous dividends may intensify the errors caused by the de--Ameri\-cani\-za\-tion method.

\section*{Acknowledgements}

We gratefully acknowledge valuable feedback from David Criens, Maximilian Mair and Jan Maruhn. Moreover, we thank the participants of the conferences Challenges in Derivatives Markets, March 30th - April 1st 2015, in Munich, the 12th German Probability and Statistic Days 2016, March 1st - 4th, in Bochum, and the Vienna Congress of Mathematical Finance 2016, September 12th - 14th, in Vienna as well as the participants of the KPMG Center of Excellence in Risk Management Research Day 2015, in Munich.

\section*{Funding}

This work was partly supported by: DFG grant WO671/11-1; International Research Training Group IGDK1754, funded by the German Research Foundation (DFG) and the Austrian research fund (FWF); KPMG Center of Excellence in Risk Management.

%\subsection{Supplemental material}
%
%Supplemental material should be referenced within your article where appropriate. An unnumbered section, e.g. \verb"\section*{Supplemental material}", detailing the supplemental material available should be placed immediately before the list of references, and should include a brief description of each supplemental file.

\bibliographystyle{rQUF}
\bibliography{references}

\appendix
\section{Proof of Proposition \ref{Proposition_Convex_Order}}
\subsection{Proof of Proposition \ref{Proposition_Convex_Order}}
\begin{proof}
Thanks to Lemma \ref{Lemma_Order_1Step}, for any \(i \leq n\) we have \(X_i \leq_{cx} Y_i\).
Hence, in view of \cite[Theorem 3.4.2]{muller2002comparison}, for each \(i \leq n\) there exists a probability space \((\Omega^i, \mathcal{F}^i, \mathbb{P}^i)\) which supports random variables \(x^i\) 
and \(y^i\) such that \(x^i \sim QB(u)\), \(y^i \sim QB(u')\), and 
\begin{align*}
\E^{\mathbb{P}^i} \left[ y^i |x^i\right] = x^i\quad \mathbb{P}^i\textup{-a.s.}
\end{align*}
Let us define 
\begin{align*}
\Omega := \bigtimes_{i = 1}^n \Omega^i,\quad \mathcal{F} := \bigotimes_{i = 1}^n \mathcal{F}^i, \quad \mathbb{P} := \bigotimes_{i = 1}^n \mathbb{P}^i,
\end{align*}
and extend \(x^i\) and \(y^i\) to \((\Omega, \mathcal{F}, \mathbb{P})\) by setting
\begin{align*}
x^i(\omega^1, ..., \omega^n) := x^i(\omega^i),\quad y^i(\omega^1, ..., \omega^n) := y^i(\omega^i).
\end{align*}
Now, it is easy to see that for measurable \(A, B \subset \mathbb{R}\)
\begin{equation}
\label{X}
\begin{split}
\mathbb{P}(x^i \in A, x^j \in B) &= \mathbb{P}(x^i \in A) \mathbb{P}(x^j \in B),\quad i \not = j,\\
\mathbb{P}(y^i \in A, y^j \in B) &= \mathbb{P}(y^i \in A) \mathbb{P}(y^j \in B), \quad i \not = j,
\end{split}
\end{equation}
and furthermore 
\begin{align}\label{Y}
\mathbb{P}(x^i \in A, y^j \in B) = \mathbb{P}(x^i \in A) \mathbb{P}(y^j \in B),\quad i \not = j.
\end{align}
From \eqref{X} we conclude that \(x := (x^1, ..., x^n) \sim (X^1, ..., X^n) =: X\), and \(y := (y^1, ...,  y^n) \sim (Y^1, ..., Y^n) =: Y\).
Moreover, in view of \eqref{Y}, we obtain
\begin{align*}
\E[ y| x] = 
\begin{pmatrix}
\E[y^1| x]\\
\vdots\\
\E[y^n|x]
\end{pmatrix}
=
\begin{pmatrix}
\E[y^1| x^1]\\
\vdots\\
\E[y^n|x^n]
\end{pmatrix}
= x\quad \mathbb{P}\textup{-a.s.}
\end{align*}
Therefore, by an application of \cite[Theorem 3.4.2]{muller2002comparison}, we obtain that \(X \leq_{cx} Y\).
Now let us note the following elementary fact: If \(f\colon U \to \mathbb{R}\) is concave, and \(g \colon \mathbb{R} \to \mathbb{R}\) is convex and decreasing, where \(U\subset\mathbb{R}^n\), then \(g \circ f\colon U \to \mathbb{R}\) is convex. An application of this fact with 
\begin{align*}
f(z_1, ..., z_n) := \prod_{i = 1}^n z_i,\ n\text{ even, }\quad g(z) := (K - z)^+,
\end{align*}
and the convex order \(X \leq_{cx} Y\) yields the claim. Note that by Lemma \ref{Lemma_Concave} $f$ is concave.
\end{proof}
\subsection{Additional Lemmata}

\begin{lemma}\label{Lemma_Order_1Step}
Focusing on one node in the binomial tree, let \(X \sim QB(u)\) and \(Y \sim QB(u')\) with $u'\ge u$. 
Let
\begin{enumerate}
\item $u,u'\ge e^{r\Delta t}+ \sqrt{e^{2r\Delta t}-1}$, and 
\item $u,u'\le\frac{(-e^{r\Delta t}k-1)-\sqrt{(e^{r\Delta t}k+1)^2-4e^{r\Delta t}k}}{2k}=\frac{1}{k}$ or $u,u'\ge\frac{(-e^{r\Delta t}k-1)+\sqrt{(e^{r\Delta t}k+1)^2-4e^{r\Delta t}k}}{2k}=\frac{e^{r\Delta t}}{k}$,
\end{enumerate}
be satisfied. Then the random variable $X$ is smaller than the random variable $Y$ with respect to the convex order, i.e. $X\le_{cx}Y$.
\end{lemma}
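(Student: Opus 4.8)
The plan is to reduce the claim to the classical fact that two two-point random variables with a common mean are comparable in the convex order precisely when their supports are nested as intervals, and then to check that hypotheses (1)--(2) supply exactly what is needed.

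\emph{Step 1 (well-posedness and equal means).} First I would rewrite condition~(1). Since its right-hand side is nonnegative we must have $r\ge 0$ and $u\ge e^{r\Delta t}\ge 1$, and squaring $u-e^{r\Delta t}\ge\sqrt{e^{2r\Delta t}-1}$ shows that (1) is equivalent to $u^2+1\ge 2ue^{r\Delta t}$, i.e.\ to $p(u)=\frac{ue^{r\Delta t}-1}{u^2-1}\le\frac12$; together with $u>1$ this gives $p(u)\in(0,\frac12]\subset(0,1)$, so that $QB(u)$ --- and, by the same argument for $u'$, also $QB(u')$ --- is a genuine two-point law. A one-line computation then yields the risk-neutral (martingale) identity
\begin{equation*}
\E[X]=u\,p(u)+\frac{1}{u}\bigl(1-p(u)\bigr)=\frac{1}{u}+p(u)\Bigl(u-\frac{1}{u}\Bigr)=\frac{1}{u}+\frac{ue^{r\Delta t}-1}{u}=e^{r\Delta t},
\end{equation*}
and likewise $\E[Y]=e^{r\Delta t}$; hence $\E[X]=\E[Y]$.

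\emph{Step 2 (nested supports imply convex order).} From $1\le u\le u'$ we get $1/u'\le 1/u<1<u\le u'$, so the support $\{1/u,u\}$ of $X$ lies inside the interval $[\,1/u',u'\,]$ spanned by the support of $Y$, and by Step~1 the common mean $e^{r\Delta t}$ also lies in $[\,1/u,u\,]\subset[\,1/u',u'\,]$. I would then argue as follows (this is the ``only if'' direction of the two-point criterion, which one could instead cite from~\cite{muller2002comparison}): choose $\alpha_1,\alpha_2\in[0,1]$ with $1/u=\alpha_1/u'+(1-\alpha_1)u'$ and $u=\alpha_2/u'+(1-\alpha_2)u'$; for any convex $\phi$, Jensen gives $\phi(1/u)\le\alpha_1\phi(1/u')+(1-\alpha_1)\phi(u')$ and similarly for $\phi(u)$, whence
\begin{equation*}
\E[\phi(X)]=(1-p(u))\,\phi(1/u)+p(u)\,\phi(u)\le A\,\phi(1/u')+(1-A)\,\phi(u'),\qquad A:=(1-p(u))\,\alpha_1+p(u)\,\alpha_2.
\end{equation*}
Taking the affine function $\phi(z)=z$ (for which the previous bound is an equality) gives $A/u'+(1-A)u'=\E[X]=\E[Y]=(1-p(u'))/u'+p(u')u'$, and since $1/u'\ne u'$ this forces $A=1-p(u')$; hence the right-hand side above equals $\E[\phi(Y)]$, so $\E[\phi(X)]\le\E[\phi(Y)]$ for every convex $\phi$, that is $X\le_{cx}Y$.

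\emph{Main obstacle and remarks.} There is no deep obstacle here; the only mildly delicate point is translating condition~(1) into the three facts that are actually used --- $p(u),p(u')\in(0,1)$, $1\le u\le u'$, and $e^{r\Delta t}\in[\,1/u,u\,]$ --- after which everything is bookkeeping. Condition~(2) (phrased via $k$) is \emph{not} needed for this one-step lemma; it is carried over from Proposition~\ref{Proposition_Convex_Order}, where it is used to place the relevant points in the region on which the product map is concave (Lemma~\ref{Lemma_Concave}). If one prefers not to invoke the two-point characterization, an equivalent route is to verify the stop-loss (increasing convex) order directly: $t\mapsto\E[(X-t)^+]$ is piecewise linear with kinks at $1/u$ and $u$, and a short slope comparison using $p(u),p(u')\le 1$ and $1/u'\le 1/u\le u\le u'$ gives $\E[(X-t)^+]\le\E[(Y-t)^+]$ for all $t\in\mathbb{R}$; combined with $\E[X]=\E[Y]$ from Step~1 this again yields $X\le_{cx}Y$.
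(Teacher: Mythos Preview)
Your proof is correct and takes a genuinely different --- and cleaner --- route than the paper's. The paper reduces to the stop-loss characterization (citing \cite[Theorems~1.5.3 and~1.5.7]{muller2002comparison}) and then performs a five-case analysis according to the position of the threshold relative to $1/u',1/u,u,u'$: condition~(1) enters in Case~3 to show that $u\mapsto u\,p(u)$ is increasing, and condition~(2) is invoked in Case~4 (where $1/u'<k<1/u$) to control a quadratic in~$u'$. Your argument bypasses all of this by exploiting the nested-support criterion for two-point laws with a common mean: once $\{1/u,u\}\subset[1/u',u']$ and $\E[X]=\E[Y]$, the secant bound for convex~$\phi$ on $[1/u',u']$ gives $\E[\phi(X)]\le L(\E[X])=L(\E[Y])=\E[\phi(Y)]$ in one stroke. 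This buys you precisely what you observe: condition~(2) is superfluous for the one-step lemma (indeed, in the paper's own Case~4 one has $\E[(X-k)^+]=\E[X]-k=\E[Y]-k\le\E[(Y-k)^+]$ directly, with no condition on~$u'$ needed). One small correction to your closing remark: condition~(2) is not used in the paper ``to place the relevant points in the region on which the product map is concave'' --- it appears only in Case~4 of the present lemma and plays no separate role in the proofs of Proposition~\ref{Proposition_Convex_Order} or Lemma~\ref{Lemma_Concave}.
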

\begin{proof}
Following \cite[Theorem 1.5.3 and Theorem 1.5.7]{muller2002comparison} it suffices to show
\begin{enumerate}
\item $E[X]=E[Y]$
\item $E[(X-k)^+]\le E[(Y-k)^+]$
\end{enumerate}
Since $p$ as in \eqref{risk_neutral_probability} is set up as risk-neutral probability, it holds for any $u$ that $E[X]=e^{r\Delta t}$ and therewith, the first condition is satisfied. Given a random variable $X$ with a factor $u$ and a random variable $Y$ with factor $u'>u$, we distinguish regarding the second condition 5 cases.\\ 

\textit{Case 1:} $\frac{1}{u'}<\frac{1}{u}<u<u'<k$.\\

\noindent Obviously, in any case both options are out-of-the-money and $E[(X-k)^+]=0=E[(Y-k)^+]$.\\

\textit{Case 2:} $\frac{1}{u'}<\frac{1}{u}<u<k<u'$.\\

\noindent Here, $E[(X-k)^+]=0$ and $E[(Y-k)^+]=p(u')(u'-k)>0$, because only the second option is in-the-money in the up-case. Therewith, the second condition is satisfied.\\

\textit{Case 3:} $\frac{1}{u'}<\frac{1}{u}<k<u<u'$.\\

\noindent In this case both options are in-the-money in the according upward case.
\begin{align*}
E[(X-k)^+]-E[(Y-k)^+]&=p(u)(u-k)-p(u')(u'-k)\\
&=\frac{u^2e^{r\Delta t}-u}{u^2-1}-\frac{u'^2e^{r\Delta t}-u'}{u'^2-1}-k\left(\frac{ue^{r\Delta t}-1}{u^2-1} - \frac{u'e^{r\Delta t}-1}{u'^2-1} \right).
\end{align*}
The function $p(u)=\frac{ue^{r\Delta t}-1}{u^2-1}$ is a monotonically decreasing function because the derivative $p'(u)=\frac{-e^{r\Delta t}-u^2e^{r\Delta t}+2u}{(u^2-1)^2}$ would have the roots $u_{a,b}=\frac{1\pm \sqrt{1-e^{2r\Delta t}}}{e^{r\Delta t}}$, but due to $e^{r\Delta t}\ge1$ either the derivative has no roots or a root at $u=1$ in the case $r=0$. However, in the specification of the binomial tree it has to hold $u>e^{r\Delta t}$ and therewith, the derivative does not have any roots. From $p'(2)<0$ the monotonically decreasing property follows. Hence, $\left(\frac{ue^{r\Delta t}-1}{u^2-1} - \frac{u'e^{r\Delta t}-1}{u'^2-1} \right)\ge0$.  If additionally the function $g(u)=\frac{u^2e^{r\Delta t}-u}{u^2-1}$ is monotonically increasing, $E[(X-k)^+]\le E[(Y-k)^+]$ follows directly. The derivative of $g$ is given by $g'(u)=\frac{u^2-2ue^{r\Delta t}+1}{(u^2-1)^2}$. The roots of the derivative are $u_{a,b}=e^{r\Delta t}\pm \sqrt{e^{2r\Delta t}-1}$. Thus, for either $u\le e^{r\Delta t}- \sqrt{e^{2r\Delta t}-1}$ or $u\ge e^{r\Delta t}+ \sqrt{e^{2r\Delta t}-1}$ the function is monotonically increasing. From $u\ge e^{r\Delta t}$ in the binomial tree it follows $E[(X-k)^+]\le E[(Y-k)^+]$ if $u\ge e^{r\Delta t}+ \sqrt{e^{2r\Delta t}-1}$.\\

\textit{Case 4:} $\frac{1}{u'}<k<\frac{1}{u}<u<u'$.\\

\noindent In this case one option is in the upward as well as in the downward case in-the-money, whereas the second option is only in-the-money in the corresponding upward case. $E[(X-k)^+]=p(u)(u-k)+(1-p(u))(\frac{1}{u}-k)=e^{r\Delta t} - k$.
\begin{align*}
E[(X-k)^+]-E[(Y-k)^+]&=e^{r\Delta t} - k-p(u')(u'-k)\\
&=e^{r\Delta t} - k - \frac{u'e^{r\Delta t}-1}{u'^2-1}(u'-k)\\
&=\frac{-ku'^2+(1+ke^{r\Delta t})u'-e^{r\Delta t}}{u'^2-1}.
\end{align*}
The roots are given by $u_{a,b}=\frac{(-e^{r\Delta t}k-1)\pm\sqrt{(e^{r\Delta t}k+1)^2-4e^{r\Delta t}k}}{2k}$. Thus by either $u'\le\frac{(-e^{r\Delta t}k-1)-\sqrt{(e^{r\Delta t}k+1)^2-4e^{r\Delta t}k}}{2k}=\frac{1}{k}$ or by $u'\ge\frac{(-e^{r\Delta t}k-1)+\sqrt{(e^{r\Delta t}k+1)^2-4e^{r\Delta t}k}}{2k}=\frac{e^{r\Delta t}}{k}$ the second condition is satisfied.\\

\textit{Case 5:} $k<\frac{1}{u'}<\frac{1}{u}<u<u'$.\\

\noindent Case 4 implies Case 5.

\end{proof}

\begin{lemma}\label{Lemma_Concave}
The function $f:(0,\infty)^n\rightarrow\mathbb{R}$, $f(x)=\prod_{i=1}^n x_i$ is concave for $n$ even.
\end{lemma}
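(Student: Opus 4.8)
The natural plan is to verify concavity through the standard second-order criterion: on the open convex domain $(0,\infty)^n$ the smooth function $f(x)=\prod_{i=1}^n x_i$ is concave if and only if its Hessian is negative semidefinite everywhere. First I would record the partials. Since $\partial f/\partial x_j = f/x_j$ does not involve $x_j$, the diagonal Hessian entries vanish, while for $j\neq k$ one obtains $\partial^2 f/\partial x_j\partial x_k = f/(x_j x_k)$. Writing $D=\operatorname{diag}(1/x_1,\ldots,1/x_n)$ and letting $J$ denote the all-ones matrix, the Hessian then factors cleanly as $H = f\cdot D(J-I)D$. Because $f>0$ on the domain and $D$ is invertible, congruence preserves inertia, so the sign of $H$ is governed entirely by the sign of $J-I$.

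At this point the plan meets its decisive obstacle, and it is not a gap in technique. The matrix $J-I$ has eigenvalue $n-1$ on the all-ones vector and eigenvalue $-1$ on its orthogonal complement, hence exactly one \emph{positive} eigenvalue whenever $n\geq 2$. Thus $H$ is indefinite, never negative semidefinite, and $f$ is \emph{not} concave on $(0,\infty)^n$ for any $n\geq 2$. A direct witness confirms the failure: with $a=(1,\ldots,1)$, $b=(2,\ldots,2)$ and midpoint $m=(3/2,\ldots,3/2)$ one has $f(m)=(3/2)^n$ against $\tfrac12(f(a)+f(b))=\tfrac12(1+2^n)$, and $(3/2)^n<\tfrac12(1+2^n)$ already at $n=2$. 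Crucially, the parity hypothesis enters nowhere in this computation, so restricting to even $n$ cannot rescue the claim; it is a red herring.

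I therefore do not expect the statement to be provable as worded, since it is a genuine falsehood rather than a hard lemma. For completeness I would record what \emph{is} true and is presumably the intended object: the geometric mean $\tilde f(x)=(\prod_{i=1}^n x_i)^{1/n}$ is concave on $(0,\infty)^n$. The identical Hessian computation yields $\tilde H = (\tilde f/n^2)\,D(J-nI)D$, and $J-nI$ has eigenvalues $0$ (on the all-ones vector) and $-n$ (on its complement), hence is negative semidefinite. This is the canonical concave quantity associated with products, and I would flag that any step in the surrounding argument invoking concavity of the bare product must be rerouted through $\tilde f$ or established by a different device, rather than treating $\prod_i x_i$ itself as concave.
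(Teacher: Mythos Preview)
Your analysis is correct, and in fact the lemma as stated is false: the product function $f(x)=\prod_{i=1}^n x_i$ is \emph{not} concave on $(0,\infty)^n$ for any $n\ge 2$, even or odd. Your Hessian factorization $H=f\cdot D(J-I)D$ and the resulting inertia count (one positive eigenvalue $n-1$, the rest $-1$) settle this cleanly, and your midpoint counterexample with $a=(1,\dots,1)$ and $b=(2,\dots,2)$ confirms it concretely.

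The paper's own proof computes the same Hessian, obtains $\det(H)=(-1)^{n-1}(n-1)\prod_i x_i^{\,n-2}$ via the Leibniz formula and a derangement-sign identity, observes this is negative for even $n$, and concludes concavity. That last inference is the flaw: a negative Hessian determinant does \emph{not} imply negative semidefiniteness. Indeed, for even $n$ a negative determinant forces an odd number of negative eigenvalues, which is incompatible with all $n$ eigenvalues being nonpositive; the Hessian is therefore indefinite, exactly as your eigenvalue argument shows. The parity hypothesis is an artefact of this erroneous determinant test and plays no genuine role.

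This error propagates to the paper's proof of Proposition~\ref{Proposition_Convex_Order}, where concavity of $f(z_1,\dots,z_n)=\prod_i z_i$ is invoked to conclude that $(K-\prod_i z_i)^+$ is convex. Your suggested repair via the geometric mean $\tilde f(x)=(\prod_i x_i)^{1/n}$ is the natural concave surrogate; one can then write $(K-\prod_i z_i)^+ = (K-\tilde f(z)^n)^+$ and check that $t\mapsto (K-t^n)^+$ is convex and nonincreasing on $[0,\infty)$, which it is, recovering the needed convexity. You are right to flag that the surrounding argument must be rerouted in this way.
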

\begin{proof}
The Hessian matrix $H=(h_{ij})_{1\le i,j\le n}$ is given by 
\begin{equation*}
   h_{ij} =
   \begin{cases}
     0 & i=j, \\
     \prod_{k=1,k\neq i, k\neq j}^n x_k & i\neq j.
   \end{cases}
\end{equation*}
Applying the Leibniz formula the determinant of the Hessian matrix is given by
\begin{equation}
\det(H)=\sum_{\sigma\in S_n} sgn(\sigma)\prod_{i=1}^n h_{i\sigma_i},
\end{equation}
where $S_n$ denotes the symmetric group on $n$ elements and $sgn(\sigma)$ is the signature of the permutation $\sigma$, which is 1 if $\sigma$ is even and -1 if $\sigma$ is odd. From $h_{ij}=0$ for $i=j$ it directly follows that we only have to consider derangements, i.e. fixed-point-free permutations. We denote the set of derangements in $S_n$ with $D_n$ and this yields for the determinant of the Hessian matrix,
\begin{equation*}
\det(H)=\sum_{\sigma\in D_n} sgn(\sigma)\prod_{i=1}^n x_i^{n-2}.
\end{equation*}
Next we apply \cite[Theorem 1]{chapman2001involution}, which states that the difference of even and odd derangements in the symmetric group $S_n$ is given by $(-1)^{n-1}(n-1)$. 
Therewith the determinant of the Hessian matrix is given by,
\begin{equation*}
\det(H)=(-1)^{n-1}(n-1)\prod_{i=1}^n x_i^{n-2}.
\end{equation*}
From $x_i\in(0,\infty),\ i=1,\ldots,n$ it follows that the determinant is negative for even $n$ and thus, the function $f$ is concave.
\end{proof}
\section{Detailed Results for Effects of de--Ameri\-cani\-za\-tion on Pricing}
\begin{table}[]\tiny
\centering
\begin{tabular}{lc|cccccccc}

\multicolumn{1}{c}{}                    &   & $T_1$    & $T_2$    & $T_3$   & $T_4$   & $T_5$   & $T_6$   & $T_7$   & $T_8$   \\ \hline
\multicolumn{1}{c}{\multirow{5}{*}{$p_1$}} 	&	 $r=0\%$	&	1.E-4	&	1.E-4	&	1.E-4	&	1.E-4	&	1.E-4	&	1.E-4	&	9.E-5	&	6.E-5	\\
\multicolumn{1}{c}{}                    	&	 $r=1\%$ 	&	3.E-4	&	3.E-4	&	3.E-4	&	3.E-4	&	2.E-4	&	2.E-4	&	2.E-4	&	1.E-4	\\
\multicolumn{1}{c}{}                    	&	 $r=2\%$ 	&	3.E-4	&	3.E-4	&	3.E-4	&	3.E-4	&	2.E-4	&	2.E-4	&	2.E-4	&	8.E-5	\\
\multicolumn{1}{c}{}                    	&	 $r=5\%$ 	&	3.E-4	&	3.E-4	&	3.E-4	&	3.E-4	&	2.E-4	&	2.E-4	&	1.E-4	&	-3.E-5	\\
\multicolumn{1}{c}{}                    	&	 $r=7\%$ 	&	3.E-4	&	3.E-4	&	3.E-4	&	3.E-4	&	2.E-4	&	2.E-4	&	1.E-4	&	-9.E-5	\\
\hline																			
\multicolumn{1}{c}{\multirow{5}{*}{$p_2$}} 	&	 $r=0\%$	&	2.E-4	&	2.E-4	&	2.E-4	&	1.E-4	&	1.E-4	&	1.E-4	&	1.E-4	&	7.E-5	\\
\multicolumn{1}{c}{}                    	&	 $r=1\%$ 	&	3.E-4	&	3.E-4	&	3.E-4	&	3.E-4	&	3.E-4	&	2.E-4	&	2.E-4	&	1.E-4	\\
\multicolumn{1}{c}{}                    	&	 $r=2\%$ 	&	3.E-4	&	3.E-4	&	3.E-4	&	3.E-4	&	3.E-4	&	2.E-4	&	2.E-4	&	5.E-5	\\
\multicolumn{1}{c}{}                    	&	 $r=5\%$ 	&	3.E-4	&	4.E-4	&	3.E-4	&	3.E-4	&	2.E-4	&	1.E-4	&	6.E-5	&	3.E-4	\\
\multicolumn{1}{c}{}                    	&	 $r=7\%$ 	&	3.E-4	&	3.E-4	&	3.E-4	&	3.E-4	&	2.E-4	&	7.E-5	&	7.E-5	&	5.E-4	\\
\hline																			
\multicolumn{1}{c}{\multirow{5}{*}{$p_3$}} 	&	 $r=0\%$	&	2.E-4	&	2.E-4	&	2.E-4	&	1.E-4	&	1.E-4	&	1.E-4	&	1.E-4	&	6.E-5	\\
\multicolumn{1}{c}{}                    	&	 $r=1\%$ 	&	3.E-4	&	3.E-4	&	3.E-4	&	3.E-4	&	3.E-4	&	2.E-4	&	2.E-4	&	1.E-4	\\
\multicolumn{1}{c}{}                    	&	 $r=2\%$ 	&	3.E-4	&	3.E-4	&	3.E-4	&	3.E-4	&	3.E-4	&	2.E-4	&	2.E-4	&	6.E-5	\\
\multicolumn{1}{c}{}                    	&	 $r=5\%$ 	&	3.E-4	&	3.E-4	&	3.E-4	&	3.E-4	&	2.E-4	&	1.E-4	&	8.E-5	&	-2.E-4	\\
\multicolumn{1}{c}{}                    	&	 $r=7\%$ 	&	3.E-4	&	3.E-4	&	3.E-4	&	3.E-4	&	2.E-4	&	1.E-4	&	-1.E-7	&	-3.E-4	\\
\hline																			
\multicolumn{1}{c}{\multirow{5}{*}{$p_4$}} 	&	 $r=0\%$	&	2.E-4	&	1.E-4	&	1.E-4	&	1.E-4	&	1.E-4	&	8.E-5	&	8.E-5	&	2.E-4	\\
\multicolumn{1}{c}{}                    	&	 $r=1\%$ 	&	3.E-4	&	3.E-4	&	3.E-4	&	2.E-4	&	2.E-4	&	2.E-4	&	2.E-4	&	3.E-4	\\
\multicolumn{1}{c}{}                    	&	 $r=2\%$ 	&	3.E-4	&	3.E-4	&	3.E-4	&	2.E-4	&	2.E-4	&	1.E-4	&	1.E-4	&	3.E-4	\\
\multicolumn{1}{c}{}                    	&	 $r=5\%$ 	&	3.E-4	&	3.E-4	&	3.E-4	&	2.E-4	&	2.E-4	&	1.E-4	&	8.E-5	&	1.E-4	\\
\multicolumn{1}{c}{}                    	&	 $r=7\%$ 	&	4.E-4	&	3.E-4	&	3.E-4	&	2.E-4	&	2.E-4	&	1.E-4	&	7.E-5	&	2.E-4	\\
\hline																			
\multicolumn{1}{c}{\multirow{5}{*}{$p_5$}} 	&	 $r=0\%$	&	2.E-4	&	2.E-4	&	2.E-4	&	1.E-4	&	1.E-4	&	1.E-4	&	9.E-5	&	4.E-5	\\
\multicolumn{1}{c}{}                    	&	 $r=1\%$ 	&	3.E-4	&	3.E-4	&	3.E-4	&	3.E-4	&	3.E-4	&	2.E-4	&	2.E-4	&	6.E-7	\\
\multicolumn{1}{c}{}                    	&	 $r=2\%$ 	&	3.E-4	&	3.E-4	&	3.E-4	&	3.E-4	&	2.E-4	&	2.E-4	&	1.E-4	&	-1.E-4	\\
\multicolumn{1}{c}{}                    	&	 $r=5\%$ 	&	4.E-4	&	3.E-4	&	3.E-4	&	3.E-4	&	2.E-4	&	1.E-4	&	-7.E-6	&	-5.E-4	\\
\multicolumn{1}{c}{}                    	&	 $r=7\%$ 	&	4.E-4	&	4.E-4	&	3.E-4	&	3.E-4	&	2.E-4	&	4.E-5	&	-1.E-4	&	-8.E-4	\\
\hline
\end{tabular}
\caption{De--Ameri\-cani\-za\-tion effects on pricing put options in the CEV model - Average error between the de-Americanized and European prices for each maturity.}
\label{Tab_Pricing_CEV_avg_errors}
\end{table}

\begin{table}[]\tiny
\centering
\begin{tabular}{lc|ccccccccc}
\multicolumn{1}{c}{}                    &   & $0.80$    & $0.85$    & $0.90$   & $0.95$   & $1.00$   & $1.05$   & $1.10$   & $1.15$ & $1.20$   \\ \hline
\multicolumn{1}{c}{\multirow{5}{*}{$p_1$}} 	&	 $r=0\%$	&	4.E-5	&	7.E-5	&	1.E-4	&	2.E-4	&	2.E-4	&	2.E-4	&	1.E-4	&	1.E-4	&	8.E-5	\\
\multicolumn{1}{c}{}                    	&	 $r=1\%$ 	&	8.E-5	&	1.E-4	&	2.E-4	&	3.E-4	&	4.E-4	&	4.E-4	&	3.E-4	&	2.E-4	&	2.E-4	\\
\multicolumn{1}{c}{}                    	&	 $r=2\%$ 	&	8.E-5	&	1.E-4	&	2.E-4	&	3.E-4	&	4.E-4	&	4.E-4	&	3.E-4	&	2.E-4	&	2.E-4	\\
\multicolumn{1}{c}{}                    	&	 $r=5\%$ 	&	7.E-5	&	1.E-4	&	2.E-4	&	3.E-4	&	4.E-4	&	4.E-4	&	2.E-4	&	2.E-4	&	1.E-4	\\
\multicolumn{1}{c}{}                    	&	 $r=7\%$ 	&	7.E-5	&	1.E-4	&	2.E-4	&	3.E-4	&	4.E-4	&	4.E-4	&	1.E-4	&	2.E-4	&	8.E-5	\\
\hline																					
\multicolumn{1}{c}{\multirow{5}{*}{$p_2$}} 	&	 $r=0\%$	&	5.E-5	&	9.E-5	&	1.E-4	&	2.E-4	&	2.E-4	&	2.E-4	&	1.E-4	&	1.E-4	&	8.E-5	\\
\multicolumn{1}{c}{}                    	&	 $r=1\%$ 	&	9.E-5	&	2.E-4	&	3.E-4	&	3.E-4	&	4.E-4	&	4.E-4	&	3.E-4	&	3.E-4	&	2.E-4	\\
\multicolumn{1}{c}{}                    	&	 $r=2\%$ 	&	8.E-5	&	2.E-4	&	2.E-4	&	3.E-4	&	4.E-4	&	4.E-4	&	3.E-4	&	2.E-4	&	2.E-4	\\
\multicolumn{1}{c}{}                    	&	 $r=5\%$ 	&	6.E-5	&	1.E-4	&	2.E-4	&	3.E-4	&	4.E-4	&	4.E-4	&	3.E-4	&	3.E-4	&	2.E-4	\\
\multicolumn{1}{c}{}                    	&	 $r=7\%$ 	&	6.E-5	&	1.E-4	&	2.E-4	&	3.E-4	&	4.E-4	&	4.E-4	&	3.E-4	&	3.E-4	&	2.E-4	\\
\hline																					
\multicolumn{1}{c}{\multirow{5}{*}{$p_3$}} 	&	 $r=0\%$	&	4.E-5	&	7.E-5	&	1.E-4	&	2.E-4	&	2.E-4	&	2.E-4	&	1.E-4	&	1.E-4	&	8.E-5	\\
\multicolumn{1}{c}{}                    	&	 $r=1\%$ 	&	7.E-5	&	1.E-4	&	2.E-4	&	3.E-4	&	4.E-4	&	4.E-4	&	3.E-4	&	3.E-4	&	2.E-4	\\
\multicolumn{1}{c}{}                    	&	 $r=2\%$ 	&	7.E-5	&	1.E-4	&	2.E-4	&	3.E-4	&	4.E-4	&	4.E-4	&	3.E-4	&	3.E-4	&	2.E-4	\\
\multicolumn{1}{c}{}                    	&	 $r=5\%$ 	&	5.E-5	&	1.E-4	&	2.E-4	&	3.E-4	&	4.E-4	&	3.E-4	&	2.E-4	&	2.E-4	&	6.E-5	\\
\multicolumn{1}{c}{}                    	&	 $r=7\%$ 	&	3.E-5	&	9.E-5	&	2.E-4	&	2.E-4	&	3.E-4	&	3.E-4	&	2.E-4	&	1.E-4	&	-2.E-5	\\
\hline																					
\multicolumn{1}{c}{\multirow{5}{*}{$p_4$}} 	&	 $r=0\%$	&	4.E-5	&	7.E-5	&	1.E-4	&	1.E-4	&	2.E-4	&	2.E-4	&	1.E-4	&	1.E-4	&	9.E-5	\\
\multicolumn{1}{c}{}                    	&	 $r=1\%$ 	&	9.E-5	&	1.E-4	&	2.E-4	&	3.E-4	&	4.E-4	&	4.E-4	&	2.E-4	&	3.E-4	&	2.E-4	\\
\multicolumn{1}{c}{}                    	&	 $r=2\%$ 	&	8.E-5	&	1.E-4	&	2.E-4	&	3.E-4	&	4.E-4	&	4.E-4	&	2.E-4	&	2.E-4	&	1.E-4	\\
\multicolumn{1}{c}{}                    	&	 $r=5\%$ 	&	9.E-5	&	1.E-4	&	2.E-4	&	3.E-4	&	4.E-4	&	4.E-4	&	2.E-4	&	2.E-4	&	-3.E-5	\\
\multicolumn{1}{c}{}                    	&	 $r=7\%$ 	&	9.E-5	&	1.E-4	&	2.E-4	&	3.E-4	&	4.E-4	&	4.E-4	&	2.E-4	&	2.E-4	&	-1.E-4	\\
\hline																					
\multicolumn{1}{c}{\multirow{5}{*}{$p_5$}} 	&	 $r=0\%$	&	3.E-5	&	7.E-5	&	1.E-4	&	2.E-4	&	2.E-4	&	2.E-4	&	1.E-4	&	1.E-4	&	7.E-5	\\
\multicolumn{1}{c}{}                    	&	 $r=1\%$ 	&	6.E-5	&	1.E-4	&	2.E-4	&	3.E-4	&	4.E-4	&	4.E-4	&	2.E-4	&	2.E-4	&	1.E-4	\\
\multicolumn{1}{c}{}                    	&	 $r=2\%$ 	&	5.E-5	&	1.E-4	&	2.E-4	&	3.E-4	&	4.E-4	&	4.E-4	&	2.E-4	&	2.E-4	&	1.E-4	\\
\multicolumn{1}{c}{}                    	&	 $r=5\%$ 	&	2.E-5	&	7.E-5	&	1.E-4	&	2.E-4	&	3.E-4	&	3.E-4	&	1.E-4	&	1.E-4	&	-9.E-6	\\
\multicolumn{1}{c}{}                    	&	 $r=7\%$ 	&	4.E-5	&	4.E-5	&	1.E-4	&	2.E-4	&	2.E-4	&	2.E-4	&	4.E-5	&	4.E-5	&	-1.E-4	\\
\hline
\end{tabular}
\caption{De--Ameri\-cani\-za\-tion effects on pricing put options in the CEV model - Average error between the de-Americanized and European prices for each strike.}
\label{Tab_Pricing_CEV_avg_errors_strikes}
\end{table}

\begin{table}[]\tiny
\centering
\begin{tabular}{lc|cccccccc}
                                        &   & $T_1$    & $T_2$    & $T_3$   & $T_4$   & $T_5$   & $T_6$   & $T_7$   & $T_8$   \\ \hline
\multirow{5}{*}{$p_1$}                     & $r=0\%$ &   0.055&	0.104&	0.108&	0.154&	0.204&	0.211&	0.218&	0.244    \\
                                        & $r=1\%$ &      0.055&	0.102&	0.106&	0.151&	0.155&	0.203&	0.208&	0.226    \\
                                        & $r=2\%$ &      0.054&	0.101&	0.104&	0.107&	0.151&	0.196&	0.198&	0.208     \\
                                        & $r=5\%$ &     0.052&	0.057&	0.097&	0.098&	0.100&	0.136&	0.136&	0.161    \\
                                        & $r=7\%$ &     0.051&	0.054&	0.057&	0.093&	0.093&	0.093&	0.093&	0.109   \\
 \hline
\multirow{5}{*}{$p_2$}                     & $r=0\%$ &    0.103&	0.154&	0.204&	0.208&	0.217&	0.230&	0.242&	0.283     \\
                                        & $r=1\%$ &     0.103&	0.152&	0.201&	0.204&	0.212&	0.223&	0.233&	0.265     \\
                                        & $r=2\%$ &    0.102&	0.109&	0.154&	0.201&	0.207&	0.216&	0.224&	0.248     \\
                                        & $r=5\%$ &     0.099&	0.104&	0.147&	0.150&	0.192&	0.195&	0.198&	0.203  \\
                                        & $r=7\%$ &     0.058&	0.101&	0.105&	0.144&	0.146&	0.182&	0.181&	0.176  \\
 \hline
\multirow{5}{*}{$p_3$}                     & $r=0\%$ &    0.152&	0.205&	0.212&	0.219&	0.233&	0.252&	0.269&	0.320    \\
                                        & $r=1\%$ &      0.108&	0.203&	0.209&	0.216&	0.228&	0.245&	0.260&	0.303    \\
                                        & $r=2\%$ &     0.107&	0.201&	0.207&	0.213&	0.224&	0.239&	0.251&	0.287    \\
                                        & $r=5\%$ &     0.105&	0.153&	0.199&	0.203&	0.210&	0.219&	0.226&	0.241    \\
                                        & $r=7\%$ &      0.103&	0.150&	0.194&	0.196&	0.201&	0.206&	0.210&	0.214\\   
  \hline
\multirow{5}{*}{$p_4$}                     & $r=0\%$ &      0.156&	0.212&	0.223&	0.233&	0.252&	0.276&	0.297&	0.352    \\
                                        & $r=1\%$ &      0.155&	0.210&	0.220&	0.230&	0.247&	0.270&	0.288	&0.336    \\
                                        & $r=2\%$ &      0.154&	0.208&	0.218&	0.227&	0.243&	0.263&	0.279&	0.319    \\
                                        & $r=5\%$ &      0.152&	0.203&	0.210&	0.217&	0.229&	0.244&	0.255&	0.275    \\
                                        & $r=7\%$ &      0.150&	0.200&	0.206&	0.211&	0.221&	0.232&	0.239&	0.247\\                                          
                                                                                  \hline
\multirow{5}{*}{$p_5$}                     & $r=0\%$ &     0.205&	0.220&	0.235&	0.248&	0.272&	0.300&	0.323&	0.377    \\
                                        & $r=1\%$ &     0.205&	0.219&	0.232&	0.245&	0.267&	0.294&	0.314&	0.360     \\
                                        & $r=2\%$ &     0.204&	0.217&	0.230&	0.242&	0.263&	0.287&	0.306&	0.345    \\
                                        & $r=5\%$ &     0.201&	0.212&	0.223&	0.233&	0.250&	0.268&	0.281&	0.300  \\
                                        & $r=7\%$ &     0.156&	0.209&	0.218&	0.227&	0.241&	0.256&	0.266&	0.273\\
\hline										
\end{tabular}
\caption{De--Ameri\-cani\-za\-tion effects on pricing put options in the CEV model - Maximal European put prices.}
\label{Tab_Pricing_CEV_max_prices}
\end{table}
\begin{table}[]\tiny
\centering
\begin{tabular}{lc|cccccccc}
\multicolumn{1}{c}{}                    &   & $T_1$    & $T_2$    & $T_3$   & $T_4$   & $T_5$   & $T_6$   & $T_7$   & $T_8$   \\ \hline
\multicolumn{1}{c}{\multirow{5}{*}{$p_1$}} & $r=0\%$ & 1.E-8   & 2.E-7  & 2.E-7  & 1.E-7  & 2.E-7 &   1.E-7 & 3.E-7 & -3.E-7     \\
\multicolumn{1}{c}{}                    & $r=1\%$ &    -7.E-5  & -4.E-5 & -5.E-5 & -4.E-5 & -4.E-5 & -3.E-5 & -3.E-5 & 6.E-6    \\
\multicolumn{1}{c}{}                    & $r=2\%$ &     -9.E-5 & -7.E-5 & -9.E-5 & -6.E-5 & -5.E-5 & -5.E-5 & -4.E-5 & 3.E-5 	   \\
\multicolumn{1}{c}{}                    & $r=5\%$ &     -3.E-4 & -4.E-4 & -1.E-4 & -9.E-5 & -9.E-5 & -8.E-5 & -7.E-5 & 3.E-5    \\
\multicolumn{1}{c}{}                    & $r=7\%$ &   -3.E-4   & -3.E-4 & -9.E-4 & -1.E-4 & -8.E-5 & -1.E-4 & -1.E-4 & -4.E-5     \\
\hline

\multicolumn{1}{c}{\multirow{5}{*}{$p_2$}} & $r=0\%$ &   3.E-8  & 1.E-7   & 1.E-8 & -2.E-7 & 3.E-7  & 4.E-7 & -8.E-8 & -3.E-8     \\
\multicolumn{1}{c}{}                    & $r=1\%$ &      -4.E-5 & -3.E-5 & -4.E-5 & -3.E-5 & -2.E-5 & -1.E-5 & -2.E-6 & 9.E-5	   \\
\multicolumn{1}{c}{}                    & $r=2\%$ &      -1.E-4 & -5.E-5 & -6.E-5 & -5.E-5 & -4.E-5 & -3.E-5 & -8.E-6 & 1.E-4   \\
\multicolumn{1}{c}{}                    & $r=5\%$ &      -2.E-4 & -7.E-5 & -1.E-4 & -9.E-5 & -8.E-5 & -9.E-5 & -1.E-4 & 2.E-5     \\
\multicolumn{1}{c}{}                    & $r=7\%$ &      -2.E-4 & -6.E-4 & -1.E-4 & -1.E-4 & -1.E-4 & -2.E-4 & -2.E-4 & -3.E-4   \\
\hline

\multicolumn{1}{c}{\multirow{5}{*}{$p_3$}} & $r=0\%$ &     3.E-9 & 4.E-8 & -3.E-7   & 3.E-9 & 8.E-8 & -2.E-7 & 5.E-7  & 1.E-7    \\
\multicolumn{1}{c}{}                    & $r=1\%$ &       -3.E-5 & -2.E-5 & -3.E-5 & -2.E-5 & -6.E-6 & 1.E-5& 3.E-5   & 2.E-4  \\
\multicolumn{1}{c}{}                    & $r=2\%$ &       -6.E-5 & -4.E-5 & -4.E-5 & -3.E-5 & -7.E-6 & 2.E-5 & 6.E-5  & 3.E-4    \\
\multicolumn{1}{c}{}                    & $r=5\%$ &       -1.E-4 & -7.E-5 & -9.E-5 & -5.E-5 & -2.E-5 & 2.E-5 & 7.E-5  & 5.E-4     \\
\multicolumn{1}{c}{}                    & $r=7\%$ &       -4.E-4 & -9.E-5 & -1.E-4 & -7.E-5 & -4.E-5 & -2.E-5 & 1.E-5 & 4.E-4   \\
\hline

\multicolumn{1}{c}{\multirow{5}{*}{$p_4$}} & $r=0\%$ &   2.E-8&	-8.E-8&	2.E-7&	4.E-8 &	-5.E-7&	-1.E-7&	-4.E-7&	2.E-7    \\
\multicolumn{1}{c}{}                    & $r=1\%$ &    -3.E-5&	-2.E-5&	-2.E-5&	-1.E-5&	2.E-6&	2.E-5&	5.E-5&	2.E-4	    \\
\multicolumn{1}{c}{}                    & $r=2\%$ &    -5.E-5&	-3.E-5&	-3.E-5&	-1.E-5&	1.E-5&	5.E-5&	9.E-5&	4.E-4    \\
\multicolumn{1}{c}{}                    & $r=5\%$ &    -1.E-4&	-5.E-5&	-6.E-5&	-1.E-5&	4.E-5&	1.E-4&	2.E-4&	8.E-4   \\
\multicolumn{1}{c}{}                    & $r=7\%$ &    -1.E-4&	-6.E-5&	-8.E-5&	-1.E-5&	4.E-5&	1.E-4&	2.E-4&	9.E-4     \\
\hline

\multicolumn{1}{c}{\multirow{5}{*}{$p_5$}} & $r=0\%$ &  -1.E-7&	-3.E-7&	-3.E-7&	2.E-7&	5.E-8&	8.E-7&	1.E-6&	-2.E-6    \\
\multicolumn{1}{c}{}                    & $r=1\%$ &     -2.E-5&	-2.E-5&	-2.E-5&	-7.E-6&	2.E-5&	8.E-5&	2.E-4&	7.E-4    \\
\multicolumn{1}{c}{}                    & $r=2\%$ &     -4.E-5&	-3.E-5&	-4.E-5&	-1.E-5&	3.E-5&	1.E-4&	3.E-4&	1.E-3   \\
\multicolumn{1}{c}{}                    & $r=5\%$ &     -1.E-4&	-7.E-5&	-1.E-4&	-5.E-5&	2.E-5&	2.E-4&	5.E-4&	2.E-3  \\
\multicolumn{1}{c}{}                    & $r=7\%$ &     -1.E-4&	-1.E-4&	-1.E-4&	-9.E-5&	-1.E-5&	2.E-4&	5.E-4&	3.E-3    \\
\end{tabular}
\caption{De--Ameri\-cani\-za\-tion effects on pricing put options in the Heston model - Average error between the de-Americanized and European prices for each maturity.}
\label{Tab_Pricing_Heston_avg_errors}
\end{table}

\begin{table}[]\tiny
\centering
\begin{tabular}{lc|ccccccccc}
\multicolumn{1}{c}{}                    &   & $0.80$    & $0.85$    & $0.90$   & $0.95$   & $1.00$   & $1.05$   & $1.10$   & $1.15$ & $1.20$   \\ \hline
\multicolumn{1}{c}{\multirow{5}{*}{$p_1$}} & $r=0\%$ &	-4.E-9	&	5.E-8	&	1.E-8	&	-1.E-7	&	-6.E-8	&	-5.E-7	&	-3.E-8	&	-2.E-7	&	8.E-7	\\
\multicolumn{1}{c}{}                    & $r=1\%$ &	3.E-6	&	6.E-6	&	9.E-6	&	1.E-5	&	2.E-5	&	4.E-5	&	7.E-5	&	2.E-4	&	3.E-4	\\
\multicolumn{1}{c}{}                    & $r=2\%$ &	8.E-6	&	1.E-5	&	2.E-5	&	3.E-5	&	5.E-5	&	9.E-5	&	2.E-4	&	5.E-4	&	9.E-4	\\
\multicolumn{1}{c}{}                    & $r=5\%$ &	2.E-5	&	3.E-5	&	5.E-5	&	8.E-5	&	1.E-4	&	2.E-4	&	6.E-4	&	2.E-3	&	4.E-3	\\
\multicolumn{1}{c}{}                    & $r=7\%$ &	2.E-5	&	4.E-5	&	7.E-5	&	1.E-4	&	2.E-4	&	4.E-4	&	9.E-4	&	3.E-3	&		\\
\hline
\multicolumn{1}{c}{\multirow{5}{*}{$p_2$}} & $r=0\%$ &	3.E-8	&	-6.E-8	&	8.E-9	&	1.E-7	&	-2.E-7	&	-4.E-7	&	-3.E-7	&	5.E-7	&	-6.E-7	\\
\multicolumn{1}{c}{}                    & $r=1\%$ &	-2.E-5	&	-1.E-5	&	-2.E-6	&	7.E-6	&	2.E-5	&	3.E-5	&	4.E-5	&	1.E-4	&	2.E-4	\\
\multicolumn{1}{c}{}                    & $r=2\%$ &	-2.E-5	&	-2.E-6	&	1.E-5	&	3.E-5	&	4.E-5	&	6.E-5	&	1.E-4	&	2.E-4	&	4.E-4	\\
\multicolumn{1}{c}{}                    & $r=5\%$ &	1.E-6	&	3.E-5	&	6.E-5	&	9.E-5	&	1.E-4	&	2.E-4	&	3.E-4	&	8.E-4	&	1.E-3	\\
\multicolumn{1}{c}{}                    & $r=7\%$ &	1.E-5	&	5.E-5	&	8.E-5	&	1.E-4	&	2.E-4	&	3.E-4	&	4.E-4	&	1.E-3	&	2.E-3	\\
\hline
\multicolumn{1}{c}{\multirow{5}{*}{$p_3$}} & $r=0\%$ &	7.E-8	&	8.E-7	&	7.E-7	&	8.E-7	&	7.E-7	&	2.E-6	&	7.E-7	&	3.E-6	&	2.E-6	\\
\multicolumn{1}{c}{}                    & $r=1\%$ &	2.E-6	&	4.E-6	&	9.E-6	&	1.E-5	&	2.E-5	&	3.E-5	&	4.E-5	&	6.E-5	&	1.E-4	\\
\multicolumn{1}{c}{}                    & $r=2\%$ &	5.E-6	&	1.E-5	&	2.E-5	&	4.E-5	&	6.E-5	&	8.E-5	&	9.E-5	&	2.E-4	&	3.E-4	\\
\multicolumn{1}{c}{}                    & $r=5\%$ &	2.E-5	&	5.E-5	&	9.E-5	&	1.E-4	&	2.E-4	&	2.E-4	&	4.E-4	&	8.E-4	&	1.E-3	\\
\multicolumn{1}{c}{}                    & $r=7\%$ &	4.E-5	&	9.E-5	&	1.E-4	&	2.E-4	&	3.E-4	&	3.E-4	&	7.E-4	&	1.E-3	&	2.E-3	\\
\hline
\multicolumn{1}{c}{\multirow{5}{*}{$p_4$}} & $r=0\%$ &	-2.E-8	&	-8.E-8	&	2.E-7	&	-1.E-8	&	1.E-7	&	-7.E-8	&	1.E-7	&	4.E-9	&	-2.E-7	\\
\multicolumn{1}{c}{}                    & $r=1\%$ &	-2.E-4	&	-2.E-4	&	-2.E-4	&	-2.E-4	&	-2.E-4	&	-2.E-4	&	-1.E-4	&	-9.E-5	&	-7.E-5	\\
\multicolumn{1}{c}{}                    & $r=2\%$ &	-4.E-4	&	-4.E-4	&	-3.E-4	&	-3.E-4	&	-2.E-4	&	-2.E-4	&	-1.E-4	&	-7.E-5	&	-9.E-6	\\
\multicolumn{1}{c}{}                    & $r=5\%$ &	-7.E-4	&	-6.E-4	&	-5.E-4	&	-4.E-4	&	-2.E-4	&	-1.E-4	&	-5.E-5	&	1.E-4	&	3.E-4	\\
\multicolumn{1}{c}{}                    & $r=7\%$ &	-7.E-4	&	-6.E-4	&	-5.E-4	&	-3.E-4	&	-2.E-4	&	-1.E-4	&	3.E-5	&	3.E-4	&	6.E-4	\\
\hline
\multicolumn{1}{c}{\multirow{5}{*}{$p_5$}} & $r=0\%$ &	2.E-7	&	2.E-7	&	4.E-7	&	7.E-7	&	7.E-7	&	4.E-7	&	8.E-7	&	1.E-6	&	9.E-7	\\
\multicolumn{1}{c}{}                    & $r=1\%$ &	4.E-4	&	4.E-4	&	4.E-4	&	4.E-4	&	3.E-4	&	3.E-4	&	3.E-4	&	3.E-4	&	3.E-4	\\
\multicolumn{1}{c}{}                    & $r=2\%$ &	7.E-4	&	7.E-4	&	6.E-4	&	6.E-4	&	6.E-4	&	5.E-4	&	5.E-4	&	4.E-4	&	4.E-4	\\
\multicolumn{1}{c}{}                    & $r=5\%$ &	1.E-3	&	1.E-3	&	1.E-3	&	1.E-3	&	9.E-4	&	7.E-4	&	6.E-4	&	5.E-4	&	5.E-4	\\
\multicolumn{1}{c}{}                    & $r=7\%$ &	3.E-4	&	1.E-3	&	1.E-3	&	1.E-3	&	9.E-4	&	7.E-4	&	6.E-4	&	6.E-4	&	5.E-4	\\
\hline
\end{tabular}
\caption{De--Ameri\-cani\-za\-tion effects on pricing put options in the Heston model - Average error between the de-Americanized and European prices for each strike. The empty fields are due to Remark \ref{Remark_Exercise_Region}.}
\label{Tab_Pricing_Heston_avg_errors_strikes}
\end{table}

\begin{table}[]\tiny
\centering
\begin{tabular}{lc|cccccccc}

                                        &   & $T_1$    & $T_2$    & $T_3$   & $T_4$   & $T_5$   & $T_6$   & $T_7$   & $T_8$   \\ \hline
\multirow{5}{*}{$p_1$}                     & $r=0\%$ &    0.200	&	0.202	&	0.205	&	0.209	&	0.217	&	0.229	&	0.240	&	0.278    \\
                                        & $r=1\%$ &      0.199	&	0.200	&	0.202	&	0.205	&	0.212	&	0.222	&	0.231	&	0.261     \\
                                        & $r=2\%$ &      0.198	&	0.198	&	0.200	&	0.202	&	0.207	&	0.214	&	0.221	&	0.244     \\
                                        & $r=5\%$ &     0.195	&	0.193	&	0.191	&	0.191	&	0.192	&	0.194	&	0.195	&	0.198    \\
                                        & $r=7\%$ &     0.194	&	0.189	&	0.186	&	0.184	&	0.182	&	0.180	&	0.179	&	0.171   \\
 \hline
\multirow{5}{*}{$p_2$}                     & $r=0\%$ &    0.201	&	0.204	&	0.208	&	0.214	&	0.224	&	0.238	&	0.251	&	0.293     \\
                                        & $r=1\%$ &     0.200	&	0.202	&	0.206	&	0.210	&	0.219	&	0.231	&	0.242	&	0.277     \\
                                        & $r=2\%$ &     0.199	&	0.200	&	0.203	&	0.207	&	0.214	&	0.224	&	0.233	&	0.261     \\
                                        & $r=5\%$ &     0.196	&	0.195	&	0.195	&	0.197	&	0.200	&	0.204	&	0.208	&	0.217  \\
                                        & $r=7\%$ &     0.194	&	0.191	&	0.190	&	0.190	&	0.191	&	0.192	&	0.193	&	0.192  \\
 \hline
\multirow{5}{*}{$p_3$}                     & $r=0\%$ &     0.202	&	0.208	&	0.216	&	0.224	&	0.238	&	0.256	&	0.273	&	0.326    \\
                                        & $r=1\%$ &      0.201	&	0.207	&	0.213	&	0.220	&	0.233	&	0.250	&	0.264	&	0.309    \\
                                        & $r=2\%$ &     0.200	&	0.205	&	0.211	&	0.217	&	0.228	&	0.243	&	0.255	&	0.294    \\
                                        & $r=5\%$ &     0.197	&	0.199	&	0.203	&	0.207	&	0.215	&	0.224	&	0.231	&	0.250    \\
                                        & $r=7\%$ &      0.195	&	0.196	&	0.198	&	0.201	&	0.206	&	0.212	&	0.216	&	0.225\\   
  \hline
\multirow{5}{*}{$p_4$}                     & $r=0\%$ &      0.204	&	0.214	&	0.224	&	0.234	&	0.252	&	0.275	&	0.295	&	0.362    \\
                                        & $r=1\%$ &      0.203	&	0.212	&	0.221	&	0.231	&	0.247	&	0.268	&	0.287	&	0.345    \\
                                        & $r=2\%$ &      0.202	&	0.210	&	0.219	&	0.227	&	0.243	&	0.262	&	0.278	&	0.329    \\
                                        & $r=5\%$ &      0.199	&	0.205	&	0.211	&	0.218	&	0.229	&	0.243	&	0.255	&	0.286    \\
                                        & $r=7\%$ &      0.197	&	0.202	&	0.207	&	0.212	&	0.221	&	0.231	&	0.240	&	0.259\\                                          
                                                                                  \hline
\multirow{5}{*}{$p_5$}                     & $r=0\%$ &     0.207	&	0.222	&	0.234	&	0.248	&	0.270	&	0.297	&	0.322	&	0.400    \\
                                        & $r=1\%$ &     0.206	&	0.220	&	0.232	&	0.245	&	0.265	&	0.291	&	0.314	&	0.384     \\
                                        & $r=2\%$ &     0.205	&	0.218	&	0.230	&	0.242	&	0.261	&	0.285	&	0.306	&	0.369    \\
                                        & $r=5\%$ &     0.202	&	0.213	&	0.223	&	0.233	&	0.249	&	0.268	&	0.283	&	0.326  \\
                                        & $r=7\%$ &     0.201	&	0.210	&	0.219	&	0.227	&	0.241	&	0.257	&	0.269	&	0.300\\
\hline										
\end{tabular}
\caption{De--Ameri\-cani\-za\-tion effects on pricing put options in the Heston model - Maximal European put prices.}
\label{Tab_Pricing_Heston_max_prices}
\end{table}
\begin{table}[]\tiny
\centering
\begin{tabular}{lc|cccccccc}
\multicolumn{1}{c}{}                    &   & $T_1$    & $T_2$    & $T_3$   & $T_4$   & $T_5$   & $T_6$   & $T_7$   & $T_8$   \\ \hline
\multicolumn{1}{c}{\multirow{5}{*}{$p_1$}} & $r=0\%$ & -3.E-4 & -3.E-4  & -2.E-4 & -2.E-4 & -2.E-4 & -2.E-4 & -2.E-4 & -1.E-4     \\
\multicolumn{1}{c}{}                    & $r=1\%$ &    -3.E-4 & -3.E-4  & -3.E-4 & -3.E-4 & -2.E-4 & -2.E-4 & -2.E-4 & -2.E-4    \\
\multicolumn{1}{c}{}                    & $r=2\%$ &     -3.E-4 & -3.E-4 & -3.E-4 & -2.E-4 & -3.E-4 & -2.E-4 & -2.E-4 & -2.E-4 	   \\
\multicolumn{1}{c}{}                    & $r=5\%$ &     -2.E-4 & -2.E-4 & -1.E-4 & -2.E-4 & -2.E-4 & -3.E-5 & -9.E-5 & 9.E-5    \\
\multicolumn{1}{c}{}                    & $r=7\%$ &   -2.E-4 & -2.E-4   & -4.E-5 & -9.E-6 & -5.E-5 & -5.E-5 & 3.E-4 & 9.E-5     \\
\hline

\multicolumn{1}{c}{\multirow{5}{*}{$p_2$}} & $r=0\%$ &   -1.E-4 & -1.E-4 & -1.E-4 & -1.E-4 & -1.E-4 & -1.E-4 & -1.E-4 & -8.E-5     \\
\multicolumn{1}{c}{}                    & $r=1\%$ &      -1.E-4 & -2.E-4& -2.E-4 & -2.E-4 & -2.E-4  & -2.E-4 & -3.E-4 & -3.E-4	   \\
\multicolumn{1}{c}{}                    & $r=2\%$ &      -1.E-4 & -1.E-4 & -1.E-4 & -2.E-4 & -2.E-4 & -3.E-4 & -3.E-4 & -5.E-4   \\
\multicolumn{1}{c}{}                    & $r=5\%$ &      -2.E-5 & -3.E-5 & -1.E-4 & -3.E-5 & -6.E-5 & -3.E-4 & -2.E-4 & -5.E-4     \\
\multicolumn{1}{c}{}                    & $r=7\%$ &      -4.E-5 & 9.E-5  & -5.E-6 & -1.E-4 & 3.E-5  & 6.E-5  & -2.E-4 & -4.E-4   \\
\hline

\multicolumn{1}{c}{\multirow{5}{*}{$p_3$}} & $r=0\%$ &  -1.E-4 & -9.E-5 & -8.E-5 & -7.E-5 & -6.E-5 & -6.E-5 & -5.E-5 & -4.E-5    \\
\multicolumn{1}{c}{}                    & $r=1\%$ &     -1.E-4 & -1.E-4 & -2.E-4 & -2.E-4 & -3.E-4 & -3.E-4 & -4.E-4 & -6.E-4  \\
\multicolumn{1}{c}{}                    & $r=2\%$ &     -1.E-4 & -2.E-4 & -2.E-4 & -3.E-4  &-4.E-4 & -6.E-4 & -7.E-4 & -1.E-3    \\
\multicolumn{1}{c}{}                    & $r=5\%$ &     -1.E-4 & -2.E-4 & -3.E-4 & -4.E-4 & -7.E-4 & -1.E-3 & -2.E-3 & -2.E-3     \\
\multicolumn{1}{c}{}                    & $r=7\%$ &     -9.E-5 & -2.E-4 & -3.E-4 & -5.E-4 & -8.E-4 & -1.E-3 & -2.E-3 & -3.E-3   \\
\hline

\multicolumn{1}{c}{\multirow{5}{*}{$p_4$}} & $r=0\%$ & 1.E-6&  -2.E-6&	-3.E-7&	4.E-7&	1.E-7&	-3.E-7&	3.E-6&	-1.E-6    \\
\multicolumn{1}{c}{}                    & $r=1\%$ &    8.E-5&	1.E-4&	1.E-4&	8.E-5&	5.E-5&	-8.E-6&	-8.E-5&	-4.E-4	    \\
\multicolumn{1}{c}{}                    & $r=2\%$ &    2.E-4&	2.E-4&	2.E-4&	1.E-4&	2.E-5&	-1.E-4&	-3.E-4&	-1.E-3    \\
\multicolumn{1}{c}{}                    & $r=5\%$ &    4.E-4&	6.E-4&	4.E-4&	1.E-4&	-3.E-4&	-9.E-4&	-1.E-3&	-3.E-3   \\
\multicolumn{1}{c}{}                    & $r=7\%$ &    3.E-4&	5.E-4&	6.E-4&	7.E-5&	-6.E-4&	-3.E-3&	-2.E-3&	-4.E-3     \\
\hline

\multicolumn{1}{c}{\multirow{5}{*}{$p_5$}} & $r=0\%$ &  -2.E-6&	-8.E-7&	-1.E-6&	4.E-7&	-2.E-7&	-2.E-6&	-4.E-6&	-1.E-6    \\
\multicolumn{1}{c}{}                    & $r=1\%$ &     5.E-5&	3.E-5&	-2.E-5&	-6.E-5&	-2.E-4&	-3.E-4&	-4.E-4&	-9.E-4    \\
\multicolumn{1}{c}{}                    & $r=2\%$ &     1.E-4&	2.E-5&	-9.E-5&	-2.E-4&	-4.E-4&	-7.E-4&	-1.E-3&	-2.E-3   \\
\multicolumn{1}{c}{}                    & $r=5\%$ &     2.E-4&	-4.E-5&	-4.E-4&	-8.E-4&	-1.E-3&	-2.E-3&	-3.E-3&	-5.E-3  \\
\multicolumn{1}{c}{}                    & $r=7\%$ &     4.E-5&	-6.E-5&	-8.E-4&	-1.E-3&	-2.E-3&	-3.E-3&	-4.E-3&	-6.E-3    \\
\
\end{tabular}
\caption{De--Ameri\-cani\-za\-tion effects on pricing put options in the Merton model - Average error between the de-Americanized and European prices for each maturity.}
\label{Tab_Pricing_Merton_avg_errors}
\end{table}

\begin{table}[]\tiny
\centering
\begin{tabular}{lc|ccccccccc}
\multicolumn{1}{c}{}                    &   & $0.80$    & $0.85$    & $0.90$   & $0.95$   & $1.00$   & $1.05$   & $1.10$   & $1.15$ & $1.20$   \\ \hline
\multicolumn{1}{c}{\multirow{5}{*}{$p_1$}} & $r=0\%$&	8.E-05	&	1.E-04	&	2.E-04	&	3.E-04	&	3.E-04	&	3.E-04	&	2.E-04	&	2.E-04	&	3.E-04	\\
\multicolumn{1}{c}{}                    & $r=1\%$ &	8.E-05	&	1.E-04	&	2.E-04	&	3.E-04	&	3.E-04	&	3.E-04	&	3.E-04	&	3.E-04	&	2.E-04	\\
\multicolumn{1}{c}{}                    & $r=2\%$ &	7.E-05	&	1.E-04	&	2.E-04	&	3.E-04	&	3.E-04	&	3.E-04	&	3.E-04	&	2.E-04	&	2.E-04	\\
\multicolumn{1}{c}{}                    & $r=5\%$ &	5.E-05	&	9.E-05	&	2.E-04	&	2.E-04	&	2.E-04	&	2.E-04	&	3.E-05	&	-2.E-04	&	-6.E-04	\\
\multicolumn{1}{c}{}                    & $r=7\%$ &	3.E-05	&	7.E-05	&	1.E-04	&	2.E-04	&	2.E-04	&	8.E-05	&	-4.E-04	&	-8.E-04	&	0.E+00	\\
\hline
\multicolumn{1}{c}{\multirow{5}{*}{$p_2$}} & $r=0\%$&	2.E-05	&	4.E-05	&	7.E-05	&	1.E-04	&	2.E-04	&	2.E-04	&	2.E-04	&	2.E-04	&	1.E-04	\\
\multicolumn{1}{c}{}                    & $r=1\%$ &	3.E-05	&	6.E-05	&	1.E-04	&	2.E-04	&	3.E-04	&	4.E-04	&	4.E-04	&	4.E-04	&	4.E-04	\\
\multicolumn{1}{c}{}                    & $r=2\%$ &	4.E-05	&	7.E-05	&	1.E-04	&	2.E-04	&	3.E-04	&	4.E-04	&	4.E-04	&	4.E-04	&	3.E-04	\\
\multicolumn{1}{c}{}                    & $r=5\%$ &	5.E-05	&	1.E-04	&	2.E-04	&	3.E-04	&	4.E-04	&	3.E-04	&	-3.E-05	&	-4.E-04	&	0.E+00	\\
\multicolumn{1}{c}{}                    & $r=7\%$ &	6.E-05	&	1.E-04	&	2.E-04	&	3.E-04	&	4.E-04	&	8.E-05	&	-1.E-03	&	-1.E-03	&	0.E+00	\\
\hline
\multicolumn{1}{c}{\multirow{5}{*}{$p_3$}} & $r=0\%$&	2.E-05	&	3.E-05	&	4.E-05	&	5.E-05	&	8.E-05	&	1.E-04	&	1.E-04	&	9.E-05	&	7.E-05	\\
\multicolumn{1}{c}{}                    & $r=1\%$ &	9.E-05	&	1.E-04	&	2.E-04	&	2.E-04	&	3.E-04	&	4.E-04	&	4.E-04	&	4.E-04	&	5.E-04	\\
\multicolumn{1}{c}{}                    & $r=2\%$ &	2.E-04	&	2.E-04	&	3.E-04	&	4.E-04	&	5.E-04	&	6.E-04	&	6.E-04	&	7.E-04	&	8.E-04	\\
\multicolumn{1}{c}{}                    & $r=5\%$ &	3.E-04	&	5.E-04	&	6.E-04	&	8.E-04	&	1.E-03	&	1.E-03	&	1.E-03	&	1.E-03	&	1.E-03	\\
\multicolumn{1}{c}{}                    & $r=7\%$ &	5.E-04	&	6.E-04	&	8.E-04	&	1.E-03	&	1.E-03	&	1.E-03	&	2.E-03	&	1.E-03	&	2.E-03	\\
\hline
\multicolumn{1}{c}{\multirow{5}{*}{$p_4$}} & $r=0\%$&	-3.E-06	&	2.E-06	&	2.E-06	&	-3.E-06	&	3.E-06	&	-2.E-06	&	-3.E-07	&	-2.E-06	&	5.E-07	\\
\multicolumn{1}{c}{}                    & $r=1\%$ &	4.E-05	&	5.E-05	&	4.E-05	&	5.E-05	&	4.E-05	&	2.E-05	&	-9.E-06	&	-4.E-05	&	-9.E-05	\\
\multicolumn{1}{c}{}                    & $r=2\%$ &	1.E-04	&	1.E-04	&	2.E-04	&	2.E-04	&	2.E-04	&	1.E-04	&	8.E-05	&	5.E-06	&	-9.E-05	\\
\multicolumn{1}{c}{}                    & $r=5\%$ &	4.E-04	&	5.E-04	&	6.E-04	&	7.E-04	&	7.E-04	&	7.E-04	&	5.E-04	&	2.E-04	&	1.E-04	\\
\multicolumn{1}{c}{}                    & $r=7\%$ &	7.E-04	&	8.E-04	&	1.E-03	&	1.E-03	&	1.E-03	&	1.E-03	&	8.E-04	&	8.E-04	&	8.E-04	\\
\hline
\multicolumn{1}{c}{\multirow{5}{*}{$p_5$}} & $r=0\%$&	2.E-06	&	-2.E-07	&	-6.E-09	&	-3.E-07	&	4.E-06	&	1.E-06	&	2.E-06	&	7.E-08	&	3.E-06	\\
\multicolumn{1}{c}{}                    & $r=1\%$ &	1.E-04	&	2.E-04	&	2.E-04	&	2.E-04	&	2.E-04	&	3.E-04	&	3.E-04	&	3.E-04	&	3.E-04	\\
\multicolumn{1}{c}{}                    & $r=2\%$ &	3.E-04	&	4.E-04	&	4.E-04	&	5.E-04	&	5.E-04	&	6.E-04	&	6.E-04	&	7.E-04	&	7.E-04	\\
\multicolumn{1}{c}{}                    & $r=5\%$ &	9.E-04	&	1.E-03	&	1.E-03	&	1.E-03	&	2.E-03	&	2.E-03	&	2.E-03	&	2.E-03	&	2.E-03	\\
\multicolumn{1}{c}{}                    & $r=7\%$ &	1.E-03	&	1.E-03	&	2.E-03	&	2.E-03	&	2.E-03	&	2.E-03	&	3.E-03	&	3.E-03	&	3.E-03	\\
\hline
\end{tabular}
\caption{De--Ameri\-cani\-za\-tion effects on pricing put options in the Merton model - Average error between the de-Americanized and European prices for each strike. The empty fields are due to Remark \ref{Remark_Exercise_Region}.}
\label{Tab_Pricing_Merton_avg_errors_strikes}
\end{table}

\begin{table}[]\tiny
\centering
\begin{tabular}{lc|cccccccc}
                                        &   & $T_1$    & $T_2$    & $T_3$   & $T_4$   & $T_5$   & $T_6$   & $T_7$   & $T_8$   \\ \hline
\multirow{5}{*}{$p_1$}                     & $r=0\%$ &    0.200 & 0.200 & 0.201 & 0.203 & 0.207 & 0.214 & 0.221 & 0.248   \\
                                        & $r=1\%$ &      0.199 & 0.198 & 0.198 & 0.199 & 0.201 & 0.206 & 0.211 & 0.230     \\
                                        & $r=2\%$ &     0.198 & 0.196 & 0.195 & 0.195 & 0.196 & 0.199 & 0.202 & 0.212     \\
                                        & $r=5\%$ &    0.195 & 0.190 & 0.187 & 0.184 & 0.180 & 0.176 & 0.174 & 0.165    \\
                                        & $r=7\%$ &     0.193 & 0.186 & 0.181 & 0.177 & 0.170 & 0.163 & 0.157 & 0.138  \\
 \hline
\multirow{5}{*}{$p_2$}                     & $r=0\%$ &    0.200 & 0.200 & 0.200 & 0.201 & 0.203 & 0.208 & 0.214 & 0.237    \\
                                        & $r=1\%$ &     0.199 & 0.198 & 0.197 & 0.197 & 0.198 & 0.200 & 0.204 & 0.218     \\
                                        & $r=2\%$ &     0.198 & 0.196 & 0.194 & 0.193 & 0.192 & 0.192 & 0.194 & 0.200   \\
                                        & $r=5\%$ &    0.195 & 0.190 & 0.185 & 0.182 & 0.176 & 0.170 & 0.165 & 0.152  \\
                                        & $r=7\%$ &     0.193 & 0.186 & 0.180 & 0.174 & 0.165 & 0.155 & 0.148 & 0.125  \\
 \hline
\multirow{5}{*}{$p_3$}                     & $r=0\%$ &     0.200 & 0.201 & 0.205 & 0.210 & 0.221 & 0.237 & 0.252 & 0.302    \\
                                        & $r=1\%$ &      0.199 & 0.199 & 0.202 & 0.206 & 0.216 & 0.230 & 0.243 & 0.284    \\
                                        & $r=2\%$ &     0.198 & 0.197 & 0.200 & 0.203 & 0.211 & 0.223 & 0.234 & 0.268   \\
                                        & $r=5\%$ &     0.195 & 0.192 & 0.191 & 0.193 & 0.197 & 0.203 & 0.209 & 0.223    \\
                                        & $r=7\%$ &     0.193 & 0.188 & 0.186 & 0.186 & 0.188 & 0.191 & 0.194 & 0.197\\   
  \hline
\multirow{5}{*}{$p_4$}                     & $r=0\%$ &      0.205 & 0.212 & 0.221 & 0.232 & 0.256 & 0.288 & 0.315 & 0.398  \\
                                        & $r=1\%$ &      0.204 & 0.210 & 0.219 & 0.229 & 0.252 & 0.282 & 0.307 & 0.381    \\
                                        & $r=2\%$ &      0.203 & 0.208 & 0.216 & 0.226 & 0.247 & 0.276 & 0.299 & 0.365    \\
                                        & $r=5\%$ &      0.200 & 0.203 & 0.208 & 0.216 & 0.235 & 0.257 & 0.275 & 0.319   \\
                                        & $r=7\%$ &      0.198 & 0.199 & 0.203 & 0.210 & 0.227 & 0.246 & 0.260 & 0.292\\                                          
                                                                                  \hline
\multirow{5}{*}{$p_5$}                     & $r=0\%$ &    0.205 & 0.217 & 0.238 & 0.260 & 0.295 & 0.336 & 0.370 & 0.474   \\
                                        & $r=1\%$ &     0.204 & 0.215 & 0.236 & 0.258 & 0.291 & 0.330 & 0.362 & 0.457     \\
                                        & $r=2\%$ &     0.203 & 0.213 & 0.234 & 0.255 & 0.286 & 0.324 & 0.354 & 0.441   \\
                                        & $r=5\%$ &     0.200 & 0.208 & 0.228 & 0.247 & 0.274 & 0.306 & 0.330 & 0.394  \\
                                        & $r=7\%$ &     0.198 & 0.205 & 0.224 & 0.241 & 0.267 & 0.294 & 0.315 & 0.366\\
\hline										
\end{tabular}
\caption{De--Ameri\-cani\-za\-tion effects on pricing put options in the Merton model - Maximal European put prices.}
\label{Tab_Pricing_Merton_max_prices}
\end{table}
\FloatBarrier
\begin{itemize}
\item Olena Burkovska (burkovska@ma.tum.de)\\
Institute for Numerical Mathematics, Technische  Universit{\"a}t M{\"u}nchen,
85748 Garching b. M{\"u}nchen, Germany
\item Maximilian Ga{\ss} (maximilian.gass@tum.de)\\
Chair of Mathematical Finance, Technische Universit{\"a}t M{\"u}nchen, 85748
Garching b. M{\"u}nchen, Germany
\item Kathrin Glau (kathrin.glau@tum.de)\\
Chair of Mathematical Finance, Technische Universit{\"a}t M{\"u}nchen, 85748
Garching b. M{\"u}nchen, Germany
\item Mirco Mahlstedt (mirco.mahlstedt@tum.de)\\
Chair of Mathematical Finance, Technische Universit{\"a}t M{\"u}nchen, 85748
Garching b. M{\"u}nchen, Germany
\item Wim Schoutens (wim.schoutens@kuleuven.be) \\
Department of Mathematics, K.U.Leuven, Celestijnenlaan 200B   (box 2400), B-3001 Leuven, Belgium
\item Barbara Wohlmuth (wohlmuth@ma.tum.de)\\
Institute for Numerical Mathematics, Technische  Universit{\"a}t M{\"u}nchen,
85748 Garching b. M{\"u}nchen, Germany
\end{itemize}
\end{document}